\newtheorem{lemma}{Lemma}
\newtheorem{defi}{Definition}
\newtheorem{theorem}{Theorem}
\theoremstyle{remark}
\theoremstyle{problem}
\newcommand{\R}{\mathbb{R}}
\def \real    { \mathbb{R} }
\newcommand{\C}{\mathbb{C}}
\newcommand{\e}{\begin{equation}}
\newcommand{\ee}{\end{equation}}
\newcommand{\en}{\begin{equation*}}
\newcommand{\een}{\end{equation*}}
\newcommand{\eqn}{\begin{eqnarray}}
\newcommand{\eeqn}{\end{eqnarray}}
\newcommand{\bmat}{\begin{bmatrix}}
\newcommand{\emat}{\end{bmatrix}}
\DeclareMathAlphabet\mathbfcal{OMS}{cmsy}{b}{n}
\renewcommand{\P}[1]{\operatorname{\mathbb{P}}\left(#1\right)}
\newcommand{\E}{\operatorname{\mathbb{E}}}
\newcommand{\vct}[1]{\boldsymbol{#1}}
\newcommand{\mtx}[1]{\boldsymbol{#1}}
\newcommand{\<}{\langle}
\renewcommand{\>}{\rangle}
\newcommand{\trace}{\operatorname{trace}}
\newcommand{\set}[1]{\mathbb{#1}}
\DeclareMathOperator*{\argmin}{\text{arg~min}}
\def \st {\operatorname*{s.t.\ }}
\newcommand{\wh}{\widehat}
\newcommand{\wt}{\widetilde}
\newcommand{\ol}{\overline}
\newcommand{\ul}{\underline}
\newcommand{\nqbit}{n}
\newcommand{\innerprod}[2]{\left\langle #1,  #2 \right\rangle}
\newcommand{\calA}{\mathcal{A}}
\newcommand{\calP}{\mathcal{P}}
\newcommand{\va}{\vct{a}}
\newcommand{\vb}{\vct{b}}
\newcommand{\ve}{\vct{e}}
\newcommand{\vh}{\vct{h}}
\newcommand{\vp}{\vct{p}}
\newcommand{\vu}{\vct{u}}
\newcommand{\vw}{\vct{w}}
\newcommand{\vphi}{\vct{\phi}}
\newcommand{\veta}{\vct{\eta}}
\newcommand{\vrho}{\vct{\rho}}
\newcommand{\vzero}{\vct{0}}
\newcommand{\mA}{\mtx{A}}
\newcommand{\mB}{\mtx{B}}
\newcommand{\mR}{\mtx{R}}
\newcommand{\mS}{\mtx{S}}
\newcommand{\mT}{\mtx{T}}
\newcommand{\mU}{\mtx{U}}
\newcommand{\mX}{\mtx{X}}
\newcommand{\mId}{{\bf I}}
\newcommand{\setS}{\set{S}}
\newcommand{\setX}{\set{X}}
\newlength{\imgwidth}
\newcommand{\twoCol}[2]{\ifthenelse{\boolean{twoColVersion}} {#1} {#2} }
\newcommand\keywords[1]{\textbf{Keywords}: #1}
\title{\LARGE \bf Quantum State Tomography for\\ Tensor Networks in Two Dimensions}
\author{Zhen Qin  and Zhihui Zhu\thanks{ZQ (email: zhenqin@umich.edu) is with the with the Michigan Institute for Computational Discovery and Engineering, Department of Electrical Engineering and Computer Science, Department of Statistic, University of Michigan, Ann Arbor, MI 48109 USA, and also with the Department of Computer Science and Engineering, the Ohio State University, Columbus OH 43210 USA. ZZ (email: zhu.3440@osu.edu) is with the Department of Computer Science and Engineering, the Ohio State University, Columbus OH 43210 USA.}
}
\begin{document}

\maketitle

\begin{abstract}
Estimating quantum states from experimental data, typically conducted through quantum state tomography (QST), plays a crucial role in verifying the correctness and evaluating the performance of quantum devices. Nonetheless, executing QST for generic unstructured or low-rank quantum states demands an overwhelming number of state copies, escalating exponentially with the quantity of individual quanta within the system, even with the most optimized measurement configurations. 
Recent work has shown that for one-dimensional quantum states that can be effectively approximated by matrix product operators (MPOs), a polynomial number of copies of the state suffices for reconstruction.
Compared to MPOs in one dimension, projected entangled-pair states (PEPSs) and projected entangled-pair operators (PEPOs), which represent typical low-dimensional structures in two dimensions, are more prevalent as a looped tensor network. However, a formal analysis of the sample complexity required for estimating PEPS or PEPO has yet to be established. In this paper, we aim to address this gap by providing theoretical guarantees for the stable recovery of PEPS and PEPO. Our analysis primarily focuses on two quantum measurement schemes: $(i)$ informationally complete  positive operator valued measures (IC-POVMs), specifically the spherical $t$-designs ($t \geq 3$), and $(ii)$ projective rank-one measurements, in particular Haar random projective measurements. We first establish stable embeddings for PEPSs (or PEPOs) to ensure that the information contained in the states can be preserved under these two measurement schemes. We then show that a constrained least-squares estimator achieves stable recovery for PEPSs (or PEPOs), with the recovery error bounded when the number of state copies scales linearly under spherical $t$-designs and polynomially under Haar-random projective measurements with respect to the number of qudits. These results provide theoretical support for the reliable use of PEPS and PEPO in practical quantum information processing.
\end{abstract}

\keywords
Quantum state tomography (QST), projected entangled-pair states (PEPSs), projected entangled-pair operators (PEPOs), stable embedding, stable recovery, Haar random projective measurements, spherical $t$-designs.

\section{Introduction}
\label{Introduction}

Quantum state tomography (QST) \cite{bertrand1987tomographic,vogel1989determination,leonhardt1995quantum,hradil1997quantum}, widely regarded as the gold standard for benchmarking and verifying quantum devices, seeks to accurately determine the density matrix characterizing a quantum state. In systems comprising $n$ qudits, (which are $d$-level quantum systems; qubits have $d=2$), this state is denoted by a density matrix $\vrho$ of dimensions $d^\nqbit \times d^\nqbit$. Reconstructing the quantum state requires performing quantum measurements on a large number of identical copies of the state. Such physical measurements can be described through a Positive Operator-Valued Measure (POVM), composed of positive semi-definite (PSD) matrices or operators $\{\mA_1,\ldots,\mA_K\}$ that collectively sum to the identity operator. Each operator $\mA_k$ ($k=1,\dots, K$) in the POVM corresponds to a potential measurement outcome, with the probability of obtaining that outcome given by $p_k = \trace(\mA_k \vrho)$. Due to the probabilistic nature of quantum measurements, multiple measurements (say $M$) with the same POVM are necessary to obtain statistically accurate estimates $\wh p_k$ of each $p_k$. Without considering statistical errors, $\{p_k\}$ can be viewed as $K$ linear measurements of the state~$\vrho$. From the perspective of machine learning, we can categorize $\{p_k\}$ and their empirical estimates~${\wh p_k}$ as population and empirical measurements of the state, respectively.

Numerous algorithmic \cite{hradil1997quantum,vrehavcek2001iterative,blume2010optimal,granade2016practical,
lukens2020practical,blume2012robust,faist2016practical,kyrillidis2018provable,brandao2020fast,torlai2018neural,carleo2019machine,lohani2020machine,qin2025enhancing} and theoretical \cite{KuengACHA17,guctua2020fast,francca2021fast,liu2011universal, haah2017sample,voroninski2013quantum} research endeavors concerning QST have been extensively explored. It has been demonstrated that when measurements are conducted on one state at a time, estimating a general density matrix with an accuracy of $\delta$ in the trace norm between the reconstructed density matrix and the true density matrix requires a number of total state copies proportional to $O(d^{3n}/\delta^2)$ using independent measurements \cite{haah2017sample}. This rigorously illustrates the exponential growth in the requirement for state copies in QST as the number of qudits increases. To mitigate this demand,
we can leverage the inherent structure of pure or nearly pure quantum states characterized by low entropy and represented as low-rank density matrices. When utilizing low-rankness, \cite{haah2017sample,francca2021fast} demonstrated that previous requirements can be respectively reduced to $O(d^nr^2/\delta^2)$ for independent measurements.

In the realm of quantum computing, the dimensions of quantum computers have seen a rapid surge in recent years, with some of the most advanced processors now boasting over 100 qubits \cite{preskill2018quantum,arute2019quantum,chow2021ibm}, consequently, even for a rank-one density matrix--corresponding to a pure quantum state attainable solely by a noiseless quantum device--the number of state copies necessary for QST still scales exponentially with regard to $n$ qubits. Fortunately, matrix product operators (MPOs) have been identified as effective tools for representing states in noisy quantum systems, where noise inherently limits the degree of quantum entanglement, thereby enabling an efficient representation of quantum states \cite{noh2020efficient}. In addition, these states are also found in numerous quantum systems with short-range interactions, as well as states generated by such systems within a finite time frame \cite{eisert2010}. Recently, \cite{qin2024quantum,qin2024sample} successfully analyzed the polynomial scaling of the total number of state copies required for estimating an MPO by leveraging Haar random projective measurements and spherical $t$-designs to estimate the MPO. Specifically, for a ground-truth MPO, it was shown that using $O(n^3)$ or $O(n)$ state copies, respectively, with these two measurement strategies ensures, with high probability, that a properly constrained least-squares minimization based on the empirical measurements stably recovers the ground-truth state to within $\epsilon$-closeness in the Frobenius norm.

MPOs, while widely used to explore one-dimensional quantum systems, face significant challenges when extended to higher dimensions, such as in studying the complexities of interacting spin systems in two dimensions. One approach is to adapt MPOs by arranging spins linearly on a two-dimensional lattice. While this method has shown promise in some cases \cite{yan2011spin}, it struggles to accurately capture the intricate entanglement features found in typical two-dimensional ground states, particularly as the system size grows. Alternatively, one may employ two-dimensional tensor networks, such as projected entangled-pair states (PEPSs) and projected entangled-pair operators (PEPOs), which provide a complementary perspective by representing pairs of maximally entangled auxiliary systems confined to low-dimensional subspaces. PEPSs and PEPOs have been demonstrated to accurately capture a wide range of physical states, such as the thermal state \cite{alhambra2021locally}, 2D cluster state \cite{raussendorf2001one}, Toric Code model \cite{kitaev2003fault}, 2D resonating valence bond state \cite{anderson1987resonating}, and 2D AKLT model \cite{affleck1988valence,wei2011affleck}. A detailed exposition of PEPS and PEPO structures in two dimensions is provided in \cite{cirac2021matrix}.

To the best of our knowledge, there has been no theoretical analysis conducted regarding the determination of the number of state copies in QST for PEPSs and PEPOs, which remains an open question as highlighted in \cite{anshu2024survey}. While PEPS-based quantum state tomography has been implemented in \cite{guo2024quantum}, the absence of a canonical form in PEPSs and PEPOs, unlike MPOs with open boundary conditions, poses a significant challenge in theoretical analysis. This arises from the impossibility of selecting an orthonormal basis simultaneously for all bond indices \cite{orus2019tensor}. This limitation extends beyond PEPSs/PEPOs to encompass MPOs with periodic boundary conditions or whenever a loop is present in the tensor network. Conceptually, a loop in the tensor network denotes an inability to formally partition the network into distinct left and right segments by just cutting one index, rendering a Schmidt decomposition between left and right segments nonsensical. Consequently, the computational efficiency of PEPSs and PEPOs is hindered by these inherent challenges \cite{cramer2010efficient}.

\paragraph{Our contribution:} {In this paper, we aim to develop theoretical analysis on the sampling complexity for stable recovery of PEPS and PEPO. Specifically, we consider a PEPS composed of $n = qp$ qudits on a two-dimensional $q \times p$ lattice with $n = qp$
,  with a bond dimension at most $T$ and bounded factors. Similarly, a PEPO consists of $n = qp$ qudits on a two-dimensional $q \times p$ lattice with $n = qp$
, each having bond dimension at most $R$ and bounded factors. The degrees of freedom scales as $O(ndT^4)$ for PEPS and $O(nd^2R^4)$ for PEPO; a detailed representation is provided in \Cref{sec: defi of tensor networks 2D}.}

Our analysis focuses on two types of quantum measurements: (i) informationally complete positive operator-valued measures (IC-POVMs), realized via spherical $t$-designs ($t \geq 3$) \cite{scott2006tight}, and (ii) projective rank-one measurements, specifically Haar random projective measurements. An IC-POVM is defined such that its measurement statistics uniquely determine any quantum state and are commonly used to derive optimal sample complexities \cite{haah2017sample}. However, IC-POVMs are generally difficult to implement in practice, motivating the use of Haar random measurements. In many scenarios, a single projective rank-one POVM is insufficient to recover a general quantum state $\vrho$, even with an infinite number of measurement repetitions, as the resulting probabilities $\{ p_k \}$ only reveal the diagonal elements of $\vrho$ in the standard computational basis. To address this limitation, we employ $Q$ Haar random projective measurements and establish stable embedding results that can ensure exact recovery with $Q = \Omega(n d T^4 \log (1 + qp))$ for PEPSs and $Q = \Omega(n d^2 R^4 \log (1 + qp))$ for PEPOs, assuming zero statistical error in the measurements.

Next, we investigate the recovery of PEPS and PEPO from quantum measurements, accounting for statistical errors and establish recovery bounds concerning the number of state copies using both spherical $t$-designs and Haar random projective measurements. Specifically, we establish theoretical accuracy bounds for a constrained least-squares estimator used to recover PEPS and PEPO. The main results are presented in the following two theorems.

\begin{theorem}[informal version of \Cref{final conclusion of 3-designs theorem PEPS,final conclusion of Haar theorem PEPS}]
\label{Statistical Error_Haar_Measurement_informal PEPS}
Given an $n$-qudit PEPS state on a two-dimensional $q \times p$ lattice with $n = qp$, bond dimension $T$, and bounded factors,
we either $(i)$ generate a set of proper $t$-designs ($t\geq 3$) and measure the state $M$ times, or $(ii)$ randomly generate $Q$ Haar random projective measurement bases and measure the state in each basis $M$ times. For any $\delta>0$, assume the number of total state copies satisfies
\begin{eqnarray}
    \label{upper bound of_QM for all cases in PEPS intro}
    M&\!\!\!\!=\!\!\!\!& \Omega(n d T^4 \log (1 + qp)/\delta^2),\ \text{$t$-designs}, \\
    QM &\!\!\!\!=\!\!\!\!& \Omega(n^3 d T^4 \log (1 + qp)/\delta^2) , Q= \Omega(n d T^4 \log (1 + qp)), \  \textup{Haar}.
\end{eqnarray}
Then, with high probability, a properly constrained least-squares minimization with the quantum measurements stably recovers the ground-truth state with $\delta$-closeness in the trace norm.
\label{thm:informal-main PEPS}\end{theorem}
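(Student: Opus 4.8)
The plan is to follow the two-stage strategy signaled in the abstract: first establish a \emph{stable embedding} (a restricted-isometry-type bound) for the measurement operator restricted to the set of differences of PEPS states, and then combine this with a concentration bound on the statistical error to control the constrained least-squares estimator. Write $\calA$ for the linear map sending a density matrix to its population measurement vector $\{p_k\}$, so that the empirical data are $\{\wh p_k\}$ with $\wh p_k \approx p_k$. Let $\wh\vrho$ denote the constrained least-squares minimizer over the set of PEPS density matrices with bond dimension at most $T$ and bounded factors, and let $\vrho^\star$ be the ground truth. Because both are pure PEPS of rank one, the error $\mX := \wh\vrho - \vrho^\star$ has rank at most two, whence $\nrm{\mX}_* \le \sqrt2\,\nrm{\mX}_F$; this converts any Frobenius-norm bound into the desired trace-norm bound, so I work throughout in the Frobenius norm.

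The core is the stable embedding. I would show that, with high probability over the random choice of POVM (the $t$-design, or the $Q$ Haar bases), there are constants $0 < c_1 \le c_2$ with
\[
c_1 \nrm{\mX}_F^2 \;\le\; \nrm{\calA(\mX)}_2^2 \;\le\; c_2 \nrm{\mX}_F^2
\qquad\text{for all } \mX \text{ in the difference set } \calS.
\]
The argument is a covering-number / $\eps$-net argument: (i) for a \emph{fixed} $\mX$, the design/tightness property makes $\E\nrm{\calA(\mX)}_2^2$ proportional to $\nrm{\mX}_F^2$ and controls its fluctuations, so a Bernstein- or Hoeffding-type inequality exploiting the boundedness of the POVM elements yields pointwise concentration; (ii) I build a net over $\calS$ whose log-cardinality is controlled by the PEPS degrees of freedom $O(ndT^4\log(1+qp))$; (iii) a union bound over the net plus a continuity (perturbation) argument upgrades the pointwise estimate to a uniform one. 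The $n$ versus $n^3$ gap between the two schemes enters here, through the differing variance and boundedness constants of the two measurement ensembles.

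Next I would bound the statistical error $\nrm{\calA^*(\wh p - p)}$ pulled back onto $\calS$. Since each $\wh p_k$ is an average of $M$ i.i.d. Bernoulli-type outcomes, a multinomial concentration bound, made uniform over the same net, gives an error of order $\sqrt{(ndT^4\log(1+qp))/M}$ for the $t$-design scheme and the analogous $QM$-version for the Haar scheme. Finally, optimality of $\wh\vrho$ gives $\nrm{\wh p - \calA(\wh\vrho)}_2 \le \nrm{\wh p - \calA(\vrho^\star)}_2$; expanding this inequality, invoking the lower embedding bound $c_1\nrm{\mX}_F^2 \le \nrm{\calA(\mX)}_2^2$, and absorbing the statistical term produces $\nrm{\wh\vrho - \vrho^\star}_F \lesssim \delta$, hence $\nrm{\wh\vrho - \vrho^\star}_* \lesssim \delta$ after the rank-two conversion, precisely for the stated sample sizes.

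The main obstacle I expect is step (ii) of the embedding, the construction and counting of the net, exactly because PEPS admit no canonical form: one cannot orthonormalize all bond indices simultaneously, so the clean Schmidt-decomposition bookkeeping available for open-boundary MPOs is unavailable. I would instead cover the individual tensor factors directly and track how a perturbation in each factor propagates through the full tensor contraction to the contracted state. The \emph{bounded factors} hypothesis is what keeps these propagated perturbations, and hence both the covering number and the embedding constants, under control; the $T^4$ dependence reflects that each lattice site carries up to four bond legs of dimension $T$, contributing $O(dT^4)$ local parameters per site and $O(ndT^4)$ in total.
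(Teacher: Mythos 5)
Your overall architecture --- a lower embedding bound on the PEPS difference set, the basic optimality inequality for the constrained least-squares solution, an $\epsilon$-net over the individual tensor factors whose log-cardinality is the PEPS parameter count $O(ndT^4\log(1+qp))$, uniform multinomial concentration for the statistical error, and the rank-$\le 2$ conversion $\|\wh\vrho-\vrho^\star\|_1\lesssim\|\wh\vrho-\vrho^\star\|_F$ --- is exactly the paper's. But both of the key technical ingredients that produce the stated sample complexities are missing. For the $t$-design case, your noise bound invokes ``multinomial concentration exploiting the boundedness of the POVM elements,'' and boundedness plus the $2$-design property is not enough. The Bernstein-type variance proxy in the multinomial bound is $\sum_k \langle\mA_k,\mX\rangle^2 p_k$ with $p_k=\langle\mA_k,\vrho^\star\rangle$, and the paper's \Cref{property of 3-designs} uses the \emph{third} moment of the design (hence the hypothesis $t\ge 3$) to show this is $O(\|\mX\|_F^2/K^2)$. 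If instead you only use $\max_k p_k\le d^n/K$ (the best pointwise bound for a pure state, attained when some $\vw_k$ aligns with $\vu$) together with the $2$-design identity of \Cref{l2 norm of 2-designs RIP} for $\sum_k\langle\mA_k,\mX\rangle^2$, the variance proxy degrades to $O(d^n\|\mX\|_F^2/K^2)$ and the sample complexity picks up a factor $d^n$, i.e.\ becomes exponential in $n$. Your sketch never uses $t\ge 3$, which is a sign this step would fail. (Relatedly, for $t$-designs no probabilistic embedding or net is needed at all: \Cref{l2 norm of 2-designs RIP} is a deterministic exact identity valid for every Hermitian matrix, so ``with high probability over the random choice of POVM'' is the wrong frame there.)

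For the Haar case, you attribute the $n$ versus $n^3$ gap to the embedding constants, but in the paper the embedding (\Cref{Small_Method_ROHaar_Measurement PEPO}, proved via Mendelson's small-ball method, which is the appropriate tool for lower-bounding a nonnegative empirical process built from only $Q$ independent blocks of mutually dependent columns) costs nothing beyond $Q\gtrsim ndT^4\log(1+qp)$ and has constant $\Omega(Q/d^n)$ with no extra $n$ factors. The extra $n^2$ arises entirely in the statistical-error term: since $\sum_k p_{i,k}=1$, the multinomial variance proxy is bounded by $Q\max_{i,k}\langle\vphi_{i,k}\vphi_{i,k}^\dagger,\mX\rangle^2$, and one must condition on the event $\max_{i,k}|\vphi_{i,k}^\dagger\mX\vphi_{i,k}|\lesssim \frac{\log Q+n\log d}{d^n}\|\mX\|_F$ --- a maximum over $Qd^n$ sub-exponential variables --- which injects the $(\log Q+n\log d)^2\approx n^2$ factor into the requirement on $QM$. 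Your ``analogous $QM$-version'' of the noise bound, i.e.\ an error of order $\sqrt{D_{\text{PEPS}}/(QM)}$, would combine with your own embedding claim to give $QM=\Omega(n)$, which is inconsistent with the $n^3$ scaling in the statement you are proving; as written, the proposal never actually accounts for the $n^3$. Repairing it requires precisely this conditioning step (or an equivalent control of the largest measurement value), which is where the paper's proof of \Cref{final conclusion of Haar theorem PEPS} does the work.
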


\begin{theorem}[informal version of \Cref{final conclusion of Haar theorem,final conclusion of recovery error1 for t-design larger 3}]
\label{Statistical Error_Haar_Measurement_informal PEPO}
Given an $n$-qudit PEPO state on a two-dimensional $q \times p$ lattice with $n = qp$, bond dimension $R$, and bounded factors, we either $(i)$ generate a set of proper $t$-designs ($t\geq 3$) and measure the state $M$ times, or  $(ii)$ randomly generate $Q$ Haar random projective measurement bases and measure the state in each basis $M$ times. For any $\epsilon>0$, assume the number of total state copies satisfies
\begin{eqnarray}
    \label{upper bound of_QM for all cases in PEPO intro}
    M&\!\!\!\!=\!\!\!\!& \Omega(n d^2 R^4 \log (1 + qp)/\epsilon^2),\ \text{$t$-designs}, \\
    QM &\!\!\!\!=\!\!\!\!& \Omega(n^3 d^2 R^4 \log (1 + qp)/\epsilon^2) , Q= \Omega(n d^2 R^4 \log (1 + qp)), \  \textup{Haar}.
\end{eqnarray}
Then, with high probability, a properly constrained least-squares minimization with the quantum measurements stably recovers the ground-truth state with $\epsilon$-closeness in the Frobenius norm.
\label{thm:informal-main PEPO}\end{theorem}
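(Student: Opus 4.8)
The plan is to adapt the two-stage scheme developed for MPOs in \cite{qin2024quantum,qin2024sample}: first establish a \emph{stable embedding} (a restricted-isometry-type property) of the population measurement operator over the low-dimensional set of PEPOs, and then convert this into a recovery guarantee for the constrained least-squares estimator while controlling the statistical error incurred by using only $M$ (or $QM$) finite state copies. Write $\calA(\mX) = (\trace(\mA_k \mX))_k$ for the suitably normalized linear map collecting the population measurement probabilities, let $\mX^\star$ denote the ground-truth PEPO, and let $\calM_R$ be the set of Hermitian PEPOs on the $q\times p$ lattice with bond dimension at most $R$ and bounded factors. The estimator is $\wh\mX \in \argmin_{\mX \in \calM_R} \nrm{\calA(\mX) - \wh\vp}^2$, where $\wh\vp$ is the vector of empirical frequencies. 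Since the theorem is stated in the Frobenius norm, the stable embedding will be the right vehicle: it directly controls $\norm{\wh\mX - \mX^\star}{F}$.

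For the stable embedding, the first step is to observe that the difference $\mX_1 - \mX_2$ of two elements of $\calM_R$ is again a PEPO whose bond dimension is at most doubled, so the relevant object is the secant set $\calM_R - \calM_R$, which still has $O(nd^2R^4)$ real degrees of freedom. Because there is no canonical form, I would not attempt to orthonormalize bond indices; instead I would parameterize $\calM_R - \calM_R$ directly through its bounded tensor factors and bound its metric entropy by exploiting that the factor-to-operator contraction is multilinear and hence Lipschitz on the bounded-factor domain. This yields a covering number at scale $\tau$ of order $(C/\tau)^{O(nd^2R^4)}$, so $\log \mathcal{N}(\tau) = O(nd^2R^4 \log(1+qp))$. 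The second step is a pointwise concentration bound: for a fixed normalized Hermitian $\mX$, I would show $\nrm{\calA(\mX)}^2$ concentrates around $\norm{\mX}{F}^2$. For spherical $t$-designs with $t\ge 3$ the third-moment identity makes the expectation exact and yields sub-exponential tails from a single design, whereas for Haar random projective measurements a single basis resolves only diagonal information, so one must average over $Q = \Omega(nd^2R^4\log(1+qp))$ independent bases to obtain comparable concentration. A union bound over the net followed by a continuity (or chaining) argument then upgrades pointwise concentration to a uniform stable embedding over all of $\calM_R - \calM_R$.

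Given the stable embedding, the recovery bound follows from the optimality of $\wh\mX$. Since $\mX^\star$ is feasible, the basic inequality $\nrm{\calA(\wh\mX)-\wh\vp}^2 \le \nrm{\calA(\mX^\star)-\wh\vp}^2$ rearranges, with $\wh\vp = \calA(\mX^\star)+\ve$ (where $\ve$ is the statistical error of the empirical frequencies), into $\nrm{\calA(\wh\mX - \mX^\star)}^2 \le 2\innerprod{\ve}{\calA(\wh\mX-\mX^\star)}$. Applying the lower stable embedding to the left-hand side and Cauchy--Schwarz on the right gives $\norm{\wh\mX-\mX^\star}{F} \lesssim \sup \{ \innerprod{\ve}{\calA(\mX)} : \mX\in\calM_R-\calM_R,\ \norm{\mX}{F}=1 \}$. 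The remaining task is to bound this empirical process: each $\wh p_k$ is a multinomial average of $M$ samples, so the per-coordinate variance of $\ve$ scales as $1/M$ (respectively $1/(QM)$), and bounding the supremum over the low-dimensional secant set by the same metric-entropy estimate produces a statistical error of order $\sqrt{nd^2R^4\log(1+qp)/M}$ (respectively $\sqrt{n^3 d^2 R^4\log(1+qp)/(QM)}$). Forcing this below $\epsilon$ recovers exactly the stated sample-complexity conditions.

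I expect the stable-embedding step to be the main obstacle, precisely because PEPOs admit no canonical form: the loop structure of the two-dimensional tensor network forbids choosing an orthonormal basis for all bond indices simultaneously, so the clean Schmidt-decomposition arguments available for open-boundary MPOs are unavailable. The crux is therefore to derive tight covering-number and concentration bounds directly from the bounded-factor multilinear parameterization, carefully tracking how a perturbation of a single factor propagates through the full $q\times p$ contraction in which each tensor couples to up to four neighbors. This same weakness of the loop geometry also explains the gap between the two measurement schemes: the strong third-moment property of $t$-designs yields linear-in-$n$ scaling from one design, while the diagonal-only information of a single Haar basis forces the extra polynomial factor reflected in the $n^3$ total-copy requirement.
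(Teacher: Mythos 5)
Your overall architecture matches the paper's: a stable embedding over the (bond-dimension-doubled, bounded-factor) secant set, the basic inequality $\|\calA(\wh\vrho-\vrho^\star)\|_2^2 \le 2\langle \veta, \calA(\wh\vrho-\vrho^\star)\rangle$, and a multinomial-concentration-plus-covering bound on the noise term, with covering exponent $O(nd^2R^4\log(1+qp))$. However, there are two genuine gaps in the steps where the real work happens. First, your treatment of the $t$-design case is misplaced: a spherical $t$-design is a \emph{deterministic} set of vectors, so there is nothing to concentrate over and no covering is needed for the embedding --- the paper's \Cref{l2 norm of 2-designs RIP} is an exact algebraic identity, valid for every Hermitian matrix, that already follows from the $2$-design property alone. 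The place where $t\ge 3$ is actually indispensable is the noise bound: after conditioning, the multinomial variance proxy in \Cref{General bound of multinomial distribution Q cases} is $\sum_k \langle \mA_k,\Delta\rangle^2 p_k$ with $p_k=\langle\mA_k,\vrho^\star\rangle$, a genuine \emph{third}-moment quantity in the design vectors. The paper's \Cref{property of 3-designs} shows it equals $O(\|\Delta\|_F^2/K^2)$; without the $3$-design identity the only available bound is $\max_k\langle\mA_k,\Delta\rangle^2 \le d^{2n}\|\Delta\|_F^2/K^2$, which is worse by a factor of $d^{2n}$ and destroys the polynomial sample complexity. Your sketch, which attributes the third moment to the embedding and handles the noise by ``per-coordinate variance $\sim 1/M$,'' skips exactly this step.

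Second, for Haar measurements your variance heuristic is internally inconsistent with the bound you claim: a pure variance calculation over the secant set would yield $QM=\Omega(D_{\text{PEPO}}/\epsilon^2)$ with \emph{no} extra $n^2$ factor, i.e., something stronger than the theorem. The $n^3$ in the paper does not come from ``diagonal-only information'' (that is what forces $Q=\Omega(D_{\text{PEPO}})$ in the embedding, \Cref{Small_Method_ROHaar_Measurement PEPO}); it comes from the sub-exponential branch of the Bernstein-type multinomial bound, which is governed by $\max_{i,k}|\vphi_{i,k}^\dagger\Delta\vphi_{i,k}|$. For Haar vectors this maximum can only be controlled as $\lesssim \frac{\log Q + n\log d}{d^n}\|\Delta\|_F$ with high probability (the event $F$ in the paper's proof), and since the union bound over an $e^{\Omega(D_{\text{PEPO}})}$-sized net forces the concentration into this regime, the factor $(\log Q+n\log d)^2\sim n^2$ enters the effective variance and hence the sample complexity. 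Without this max-coefficient control your argument either proves a false improvement or fails at the union-bound step; with it, you recover the paper's proof. So the proposal is the right skeleton, but both distinguishing technical devices --- the $3$-design variance lemma and the Haar max-coefficient event --- are missing or misattributed.
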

Our results ensure stable recovery with the total number of state copies growing only polynomially in the number of qudits $n$ using spherical $t$-designs or Haar measures. Similar to the MPO structure, employing the PEPS and PEPO structures significantly reduces the required number of state copies from $d^n$ to $\text{poly}(n)$.  Moreover, no restriction is placed on the number of state copies ($M$) associated with each Haar measurement basis. Consequently, our results provide theoretical justification for the practical use of single-shot measurements ($M=1$), a strategy already adopted in prior studies \cite{wang2020scalable,huang2020predicting}. Our recovery guarantees rely on stable embedding results and are established in the trace norm for PEPS and the Frobenius norm for PEPO. While our guarantees are initially established in the Frobenius norm for PEPO, they can be transferred to the trace norm in the low-rank setting by exploiting the robust relation between trace distance and Hilbert-Schmidt distance \cite{coles2019strong}.

\subsection{Notation}
\label{sec:notation}

We use bold capital letters (e.g., $\mA$) to denote matrices, except that $\mX$ denotes a tensor,  and bold lowercase letters (e.g., $\va$) to denote column vectors, and italic letters (e.g., $a$) to denote scalar quantities. Elements of matrices and tensors are denoted in parentheses. For example, $\mX(i_1, i_2, i_3)$ denotes the element in position
$(i_1, i_2, i_3)$ of the order-3 tensor $\mX$. The calligraphic letter $\calA$ is reserved for the linear measurement map.  For a positive integer $K$, $[K]$ denotes the set $\{1,\dots, K \}$. The superscripts $(\cdot)^\top$ and $(\cdot)^\dagger$ denote the transpose and Hermitian transpose, respectively \footnote{As is conventional in the quantum physics literature (but not in information theory and signal processing), we use $(\cdot)^\dagger$ to denote the Hermitian transpose.}. For two matrices $\mA,\mB$ of the same size, $\innerprod{\mA}{\mB} = \trace(\mA^\dagger\mB)$ denotes the inner product between them.
$\|\mA\|_F$, $\|\mA\|_1$ and $\|\mA\|_{\infty}$ represent the Frobenius norm, trace norm and maximum norm of $\mA$.
For two positive quantities $a,b\in \real$, the inequality $b\lesssim a$ or $b = O(a)$ means $b\leq c a$ for some universal constant $c$; likewise, $b\gtrsim a$ or $b = \Omega(a)$ represents $b\ge ca$ for some universal constant $c$.

\section{Preliminaries of Quantum State Tomography}
\label{sec:QST}

\subsection{Quantum measurements}
\label{sec: quantum measurements }

Quantum information science employs quantum states as tools for information processing \cite{nielsen2001quantum}. In quantum systems, the quantum state is typically described using a density operator. Specifically, when dealing with a quantum system comprising $n$ qudits, the density operator is expressed as $\vrho\in\C^{d^n\times d^n}$.
In all cases, the density matrix $\vrho$ must satisfy two essential conditions:  $(i)$~ $\vrho \succeq \vzero$ is a positive semidefinite (PSD) matrix, and $(ii)$~$\trace(\vrho) = 1$.

Quantum state tomography (QST) aims to construct or estimate the density matrix $\vrho$ by performing measurements on an ensemble of identical quantum states. Given the inherently probabilistic nature of quantum measurements, which are described using Positive Operator Valued Measures (POVMs) \cite{nielsen2002quantum}, it becomes necessary to prepare multiple copies of the quantum state in experiments. Specifically, a POVM consists of a set of positive semi-definite (PSD) matrices:
\begin{eqnarray}
\label{The defi of POVM quibt}
\{\mA_1,\ldots,\mA_K \}\in\C^{d^n\times d^n}, \ \ \st  \sum_{k=1}^K \mA_k = \mId_{d^n}.
\end{eqnarray}
Each POVM element $\mA_k$ is associated with a potential outcome of a quantum measurement. The probability $p_k$ of detecting the $k$-th outcome while measuring the density matrix $\rho$ is expressed as:
\begin{eqnarray}
\label{The defi of POVM 2 quibt}
p_k = \innerprod{\mA_k}{\vrho},
\end{eqnarray}
where $\sum_{k=1}^K p_k=1$ due to \eqref{The defi of POVM quibt} and $\trace(\vrho) = 1$. Due to the impracticality of directly collecting $\{p_k\}$ in a physical experiment, the measurement process is typically repeated $M$ times. Through averaging the statistically independent outcomes, empirical probabilities are obtained:
\begin{equation}
\wh p_{k} = \frac{f_k}{M}, \  k \in[K]:=\{1,\ldots,K\},
\label{eq:empirical-prob  quibt}
\end{equation}
where $f_k$ denotes the number of times the $k$-th outcome is observed in the $M$ experiments. Notably, QST relies on the use of empirical probabilities $\{\wh p_k\}$ to recover or estimate the unknown density operator $\vrho$. The variables $f_1,\ldots,f_K$ collectively follow a multinomial distribution $\operatorname{Multinomial}(M, \vp)$ \cite{severini2005elements}, where $M$ is the total number of measurements and $\vp = \begin{bmatrix} p_1 & \cdots & p_K\end{bmatrix}^\top$, where $p_k$ is defined in \eqref{The defi of POVM 2 quibt}. Consequently, the empirical probability $\wh p_k$ in \eqref{eq:empirical-prob quibt} acts as an unbiased estimator for the probability $p_k$. In the realms of information theory and signal processing, $\{p_k\}$ and $\{\wh p_k\}$ are commonly denoted as the population and empirical (linear) measurements, respectively.

As discussed previously, a single POVM may be insufficient to recover a general quantum state $\vrho$. To address this, consider $Q$ POVMs indexed by $q\in[Q]$, where each POVM is represented as $\{A_{q,k}\}_{k\in[K]}$ consisting of $K$ PSD operators. Each POVM is probed with $M$ shots, yielding the empirical frequency vectors $\wh\vp_q= [\wh p_{q,1} \cdots \wh p_{q,K}]^\top$ for $q \in [Q]$. To streamline the notation, the probabilities associated with each POVM, $\{ \<A_{q,k}, \rho\>\}$ can be represented through a linear map $\calA_q:  \C^{d^n\times  d^n} \rightarrow \R^K$ defined as
\begin{eqnarray}
\label{The defi of POVM element in q-th POVM}
 \calA_q(\vrho) =  \begin{bmatrix}
          \< \mA_{q,1}, \vrho  \> \\
          \vdots \\
          \< \mA_{q,K}, \vrho  \>
        \end{bmatrix}.
\end{eqnarray}
By stacking the linear operators $\{\calA_q\}$ corresponding to the $Q$ POVMs into a single linear map $ \calA:\C^{d^n\times d^n} \rightarrow \R^{KQ}$, we obtain $KQ$ population measurements expressed as
\begin{eqnarray}
\label{The defi of population measurement in Q cases (K measurements)}
 \vp = \calA(\vrho)  = \begin{bmatrix}
          {\bm p}_{1} \\
          \vdots \\
          {\bm p}_{Q}
        \end{bmatrix}  = \begin{bmatrix}
          \calA_1(\vrho) \\
          \vdots \\
          \calA_Q(\vrho)
        \end{bmatrix}.
\end{eqnarray}

For each POVM, we repeat the measurement process $M$ times and stack  all the total empirical measurements together as
\begin{equation}
\wh\vp = \begin{bmatrix}
    \wh \vp_1 \\ \vdots \\ \wh \vp_M
\end{bmatrix}.
\label{eq:map-M-POVM1}\end{equation}
Note that for a spherical $t$-design POVM, $Q = 1$ and $K \geq d^{2n}$, whereas for Haar random projective measurements, we have $Q \geq 1$ and $K = d^n$.

\subsection{Tensor Networks in Two Dimensions}
\label{sec: defi of tensor networks 2D}

\paragraph{Projected Entangled-Pair State (PEPS)}
For a pure state $\vrho = \vu \vu^\dagger \in\C^{d^\nqbit\times d^ \nqbit}$  in an $\nqbit$-qudit quantum system arranged on a two-dimensional $q \times p$ lattice with $n = qp$ and $\vu\in\C^{d^\nqbit}$, we use a single index-array $i_{\ul{1}\, \ul{1}}\cdots i_{\ul{q}\, \ul{p}}$ to specify the indices, where $i_{\ul{1}\, \ul{1}},\ldots,i_{\ul{q}\, \ul{p}}\in[d]$.\footnote{Specifically, $i_{\ul{1}\, \ul{1}}\cdots i_{\ul{q}\, \ul{p}}$ represents $(i_{\ul{1}\, \ul{1}} + \sum_{a = 2}^p d^{a-1} (i_{\ul{1}\, \ul{a}} - 1) + \sum_{b=2}^q\sum_{c = 1}^p d^{(b-1)p+c-1}(i_{\ul{b}\, \ul{c}} - 1))$.
} We say that a quantum state $\vrho$ a Projected Entangled-Pair State (PEPS) if the $i_{\ul{1}\, \ul{1}}\cdots i_{\ul{q}\, \ul{p}}$-th component of $\vu$  can be expressed as follows~\cite{verstraete2008matrix,cirac2009renormalization,schuch2013condensed,orus2014practical,kshetrimayum2017simple}:

\begin{equation}
\label{DefOfPEPS}
\begin{split}
&\vu\big(i_{\ul{1}\,\ul{1}}\cdots i_{\ul{q}\,\ul{p}}\big)\\ &= \sum_{\substack{
      s_{\ul{a}\,\ul{b-1},\,\ul{a}\,\ul{b}}\;
      s_{\ul{a-1}\,\ul{b},\,\ul{a}\,\ul{b}} \\
      s_{\ul{a}\,\ul{b},\,\ul{a}\,\ul{b+1}}\;
      s_{\ul{a}\,\ul{b},\,\ul{a+1}\,\ul{b}} \\
      a\in[q],\, b\in[p]}}
   \hspace{-0.5cm}
   \mU_{\ul{1}\,\ul{1}}^{\,i_{\ul{1}\,\ul{1}}}
     (s_{\ul{1}\,\ul{0},\,\ul{1}\,\ul{1}},\,
      s_{\ul{0}\,\ul{1},\,\ul{1}\,\ul{1}},\,
      s_{\ul{1}\,\ul{1},\,\ul{1}\,\ul{2}},\,
      s_{\ul{1}\,\ul{1},\,\ul{2}\,\ul{1}} )
      \cdots \mU_{\ul{q}\,\ul{p}}^{\,i_{\ul{q}\,\ul{p}} }
     (s_{\ul{q}\,\ul{p-1},\,\ul{q}\,\ul{p}},\,
      s_{\ul{q-1}\,\ul{p},\,\ul{q}\,\ul{p}},\,
      s_{\ul{q}\,\ul{p},\,\ul{q}\,\ul{p+1}},\,
      s_{\ul{q}\,\ul{p},\,\ul{q+1}\,\ul{p}}) \\
&= \sum_{\substack{
      s_{\ul{a}\,\ul{b-1},\,\ul{a}\,\ul{b}}\;
      s_{\ul{a-1}\,\ul{b},\,\ul{a}\,\ul{b}} \\
      s_{\ul{a}\,\ul{b},\,\ul{a}\,\ul{b+1}}\;
      s_{\ul{a}\,\ul{b},\,\ul{a+1}\,\ul{b}} \\
      a\in[q],\, b\in[p]}}
   \;\Pi_{a=1}^{q} \Pi_{b=1}^{p}
   \mU_{\ul{a}\,\ul{b}}^{\,i_{\ul{a}\,\ul{b}} }
   (s_{\ul{a}\,\ul{b-1},\,\ul{a}\,\ul{b}},\,
    s_{\ul{a-1}\,\ul{b},\,\ul{a}\,\ul{b}},\,
    s_{\ul{a}\,\ul{b},\,\ul{a}\,\ul{b+1}},\,
    s_{\ul{a}\,\ul{b},\,\ul{a+1}\,\ul{b}}),
\end{split}
\end{equation}
where core factors $\mU_{\ul{a}\,\ul{b}}^{\,i_{\ul{a}\,\ul{b}}}$ are $4$-order complex tensors with size
$t_{\ul{a}\,\ul{b-1},\,\ul{a}\,\ul{b}}\times t_{\ul{a-1}\,\ul{b},\,\ul{a}\,\ul{b}}
\times t_{\ul{a}\,\ul{b},\,\ul{a}\,\ul{b+1}}\times t_{\ul{a}\,\ul{b},\,\ul{a+1}\,\ul{b}}$,
$a\in[q], b\in[p]$, and
$t_{\ul{a}\,\ul{0},\,\ul{a}\,\ul{1}} =
 t_{\ul{0}\,\ul{b},\,\ul{1}\,\ul{b}} =
 t_{\ul{a}\,\ul{p},\,\ul{a}\,\ul{p+1}} =
 t_{\ul{q}\,\ul{b},\,\ul{q+1}\,\ul{b}} = 1$.
To streamline notation, we denote the above format shortly as
$\vu = [\mU_{\ul{a}\,\ul{b}}]_{a=1,b=1}^{q,p}$. The bond dimensions of the PEPS in quantum physics are defined as follows:
\begin{eqnarray}
    \label{bond dimensions in PEPS}
    \mT = \begin{bmatrix}
    t_{\ul{1}\,\ul{1},\,\ul{1}\,\ul{2}} & t_{\ul{1}\,\ul{1},\,\ul{2}\,\ul{1}} & t_{\ul{2}\,\ul{1},\,\ul{2}\,\ul{2}} & \cdots & t_{\ul{q}\,\ul{1},\,\ul{q}\,\ul{2}} \\
    t_{\ul{1}\,\ul{2},\,\ul{1}\,\ul{3}} & t_{\ul{1}\,\ul{2},\,\ul{2}\,\ul{2}} & t_{\ul{2}\,\ul{2},\,\ul{2}\,\ul{3}} & \cdots & t_{\ul{q}\,\ul{2},\,\ul{q}\,\ul{3}} \\
    \vdots    & \vdots    & \vdots    & \ddots & \vdots    \\
    t_{\ul{1}\,\ul{p-1},\,\ul{1}\,\ul{p}} & t_{\ul{1}\,\ul{p-1},\,\ul{2}\,\ul{p-1}} & t_{\ul{2}\,\ul{p-1},\,\ul{2}\,\ul{p}} & \cdots & t_{\ul{q}\,\ul{p-1},\,\ul{q}\,\ul{p}} \\
    0 & t_{\ul{1}\,\ul{p},\,\ul{2}\,\ul{p}} & 0 & \cdots & 0
    \end{bmatrix}\in\R^{p\times (2q-1)}.
\end{eqnarray}
Note that the bond dimensions are equivalent to the ranks of the tensor network, so we denote $\text{rank}(\vu) = \mT$.

\begin{figure*}[t]
\centering
\includegraphics[width=13cm, keepaspectratio]%
{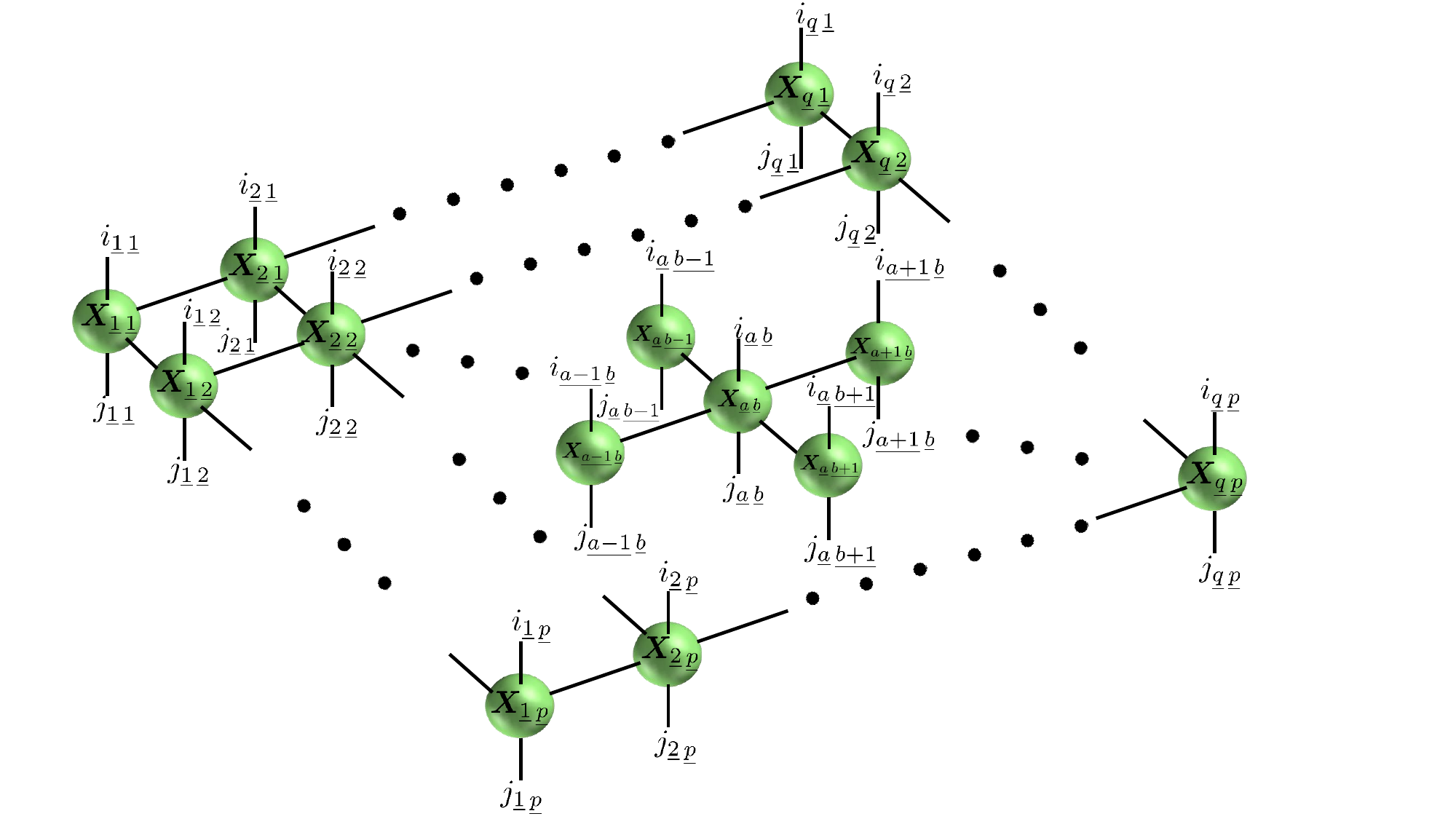}
\vspace{-0cm}
\caption{Illustration of the PEPO in \eqref{DefOfPEPO} from each element of the density matrix is presented in a diagrammatic form, where the line connecting two circles signifies the tensor contraction operation \cite{cichocki2014tensor}, and unconnected line segments denote indices.}
\label{The PEPO fig}
\end{figure*}

\paragraph{Projected Entangled-Pair Operator (PEPO)} This definition naturally extends to mixed states $\vrho \in \C^{d^\nqbit \times d^\nqbit}$, where we use $i_{\ul{1},\ul{1}} \cdots i_{\ul{q},\ul{p}}$ and $j_{\ul{1},\ul{1}} \cdots j_{\ul{q},\ul{p}}$ to denote the row and column indices, respectively. Specifically, we say that $\vrho$ is a Projected Entangled-Pair Operator (PEPO) on the $q \times p$ lattice if its $(i_{\ul{1},\ul{1}} \cdots i_{\ul{q},\ul{p}},, j_{\ul{1},\ul{1}} \cdots j_{\ul{q},\ul{p}})$-th entry can be expressed as follows:
\begin{equation}
\label{DefOfPEPO}
\begin{split}
&\vrho\big(i_{\ul{1}\,\ul{1}}\cdots i_{\ul{q}\,\ul{p}}, \; j_{\ul{1}\,\ul{1}}\cdots j_{\ul{q}\,\ul{p}}\big)\\
&= \sum_{\substack{
      s_{\ul{a}\,\ul{b-1},\,\ul{a}\,\ul{b}}\;
      s_{\ul{a-1}\,\ul{b},\,\ul{a}\,\ul{b}} \\
      s_{\ul{a}\,\ul{b},\,\ul{a}\,\ul{b+1}}\;
      s_{\ul{a}\,\ul{b},\,\ul{a+1}\,\ul{b}} \\
      a\in[q],\, b\in[p]}}
   \hspace{-0.5cm}
   \mX_{\ul{1}\,\ul{1}}^{\,i_{\ul{1}\,\ul{1}},\,j_{\ul{1}\,\ul{1}}}
     (s_{\ul{1}\,\ul{0},\,\ul{1}\,\ul{1}},\,
      s_{\ul{0}\,\ul{1},\,\ul{1}\,\ul{1}},\,
      s_{\ul{1}\,\ul{1},\,\ul{1}\,\ul{2}},\,
      s_{\ul{1}\,\ul{1},\,\ul{2}\,\ul{1}} )
      \cdots \mX_{\ul{q}\,\ul{p}}^{\,i_{\ul{q}\,\ul{p}},\,j_{\ul{q}\,\ul{p}}}
     (s_{\ul{q}\,\ul{p-1},\,\ul{q}\,\ul{p}},\,
      s_{\ul{q-1}\,\ul{p},\,\ul{q}\,\ul{p}},\,
      s_{\ul{q}\,\ul{p},\,\ul{q}\,\ul{p+1}},\,
      s_{\ul{q}\,\ul{p},\,\ul{q+1}\,\ul{p}}) \\
&= \sum_{\substack{
      s_{\ul{a}\,\ul{b-1},\,\ul{a}\,\ul{b}}\;
      s_{\ul{a-1}\,\ul{b},\,\ul{a}\,\ul{b}} \\
      s_{\ul{a}\,\ul{b},\,\ul{a}\,\ul{b+1}}\;
      s_{\ul{a}\,\ul{b},\,\ul{a+1}\,\ul{b}} \\
      a\in[q],\, b\in[p]}}
   \;\Pi_{a=1}^{q} \Pi_{b=1}^{p}
   \mX_{\ul{a}\,\ul{b}}^{\,i_{\ul{a}\,\ul{b}}, j_{\ul{a}\,\ul{b}}}
   (s_{\ul{a}\,\ul{b-1},\,\ul{a}\,\ul{b}},\,
    s_{\ul{a-1}\,\ul{b},\,\ul{a}\,\ul{b}},\,
    s_{\ul{a}\,\ul{b},\,\ul{a}\,\ul{b+1}},\,
    s_{\ul{a}\,\ul{b},\,\ul{a+1}\,\ul{b}}).
\end{split}
\end{equation}
Here, the core tensors $\mX_{\ul{a}\,\ul{b}}^{\,i_{\ul{a}\,\ul{b}},\,j_{\ul{a}\,\ul{b}}}$ are four-way complex arrays with dimensions $r_{\ul{a}\,\ul{b-1},\,\ul{a}\,\ul{b}}\times r_{\ul{a-1}\,\ul{b},\,\ul{a}\,\ul{b}}
\times r_{\ul{a}\,\ul{b},\,\ul{a}\,\ul{b+1}}\times r_{\ul{a}\,\ul{b},\,\ul{a+1}\,\ul{b}}$,
subject to boundary conditions $r_{\ul{a}\,\ul{0},\,\ul{a}\,\ul{1}} =
 r_{\ul{0}\,\ul{b},\,\ul{1}\,\ul{b}} =
 r_{\ul{a}\,\ul{p},\,\ul{a}\,\ul{p+1}} =
 r_{\ul{q}\,\ul{b},\,\ul{q+1}\,\ul{b}} = 1$.
For brevity, we denote this PEPO representation as $\vrho = [\mX_{\ul{a}\,\ul{b}}]_{a=1,b=1}^{q,p}$.
An illustration of the PEPO structure is provided in Figure~\ref{The PEPO fig}. It is worth noting that when $j_{\ul{a}\,\ul{b}} = 1$ for $a\in[q]$ and $b\in[p]$, the PEPO reduces to the corresponding PEPS structure.
The bond dimensions correspond to the ranks along each tensor contraction direction, summarized by
\begin{eqnarray}
    \label{bond dimensions}
    \mR = \begin{bmatrix}
    r_{\ul{1}\,\ul{1},\,\ul{1}\,\ul{2}} & r_{\ul{1}\,\ul{1},\,\ul{2}\,\ul{1}} & r_{\ul{2}\,\ul{1},\,\ul{2}\,\ul{2}} & \cdots & r_{\ul{q}\,\ul{1},\,\ul{q}\,\ul{2}} \\
    r_{\ul{1}\,\ul{2},\,\ul{1}\,\ul{3}} & r_{\ul{1}\,\ul{2},\,\ul{2}\,\ul{2}} & r_{\ul{2}\,\ul{2},\,\ul{2}\,\ul{3}} & \cdots & r_{\ul{q}\,\ul{2},\,\ul{q}\,\ul{3}} \\
    \vdots    & \vdots    & \vdots    & \ddots & \vdots    \\
    r_{\ul{1}\,\ul{p-1},\,\ul{1}\,\ul{p}} & r_{\ul{1}\,\ul{p-1},\,\ul{2}\,\ul{p-1}} & r_{\ul{2}\,\ul{p-1},\,\ul{2}\,\ul{p}} & \cdots & r_{\ul{q}\,\ul{p-1},\,\ul{q}\,\ul{p}} \\
    0 & r_{\ul{1}\,\ul{p},\,\ul{2}\,\ul{p}} & 0 & \cdots & 0
    \end{bmatrix}\in\R^{p\times (2q-1)}.
\end{eqnarray}
Accordingly, we set $\text{rank}(\vrho) = \mR$ to indicate the tensor network ranks along all contraction directions.

\paragraph{Linear combination of PEPOs}
Next, we will define the linear combination of two PEPOs. In particular, for any two PEPOs
$\wt \vrho, \wh \vrho \in\C^{d^n \times d^n}$ of the form \eqref{DefOfPEPO} with core factors
$[\wt \mX_{\ul{a}\,\ul{b}}]_{a=1,b=1}^{q,p}$,
$[\wh \mX_{\ul{a}\,\ul{b}}]_{a=1,b=1}^{q,p}$
and $\text{rank}(\wt \vrho) = \text{rank}(\wh \vrho) = \mR$,
the $(i_{\ul{1}\,\ul{1}}\cdots i_{\ul{q}\,\ul{p}}, j_{\ul{1}\,\ul{1}}\cdots j_{\ul{q}\,\ul{p}})$-element
of their summation $\vrho = \wt \vrho + \wh \vrho$ can be expressed by the following core factors:
\begin{eqnarray}
    \label{middle tensor1}
    \mX_{\ul{a}\,\ul{b}}^{\,i_{\ul{a}\,\ul{b}},\,j_{\ul{a}\,\ul{b}}}
    \in\C^{2r_{\ul{a}\,\ul{b-1},\,\ul{a}\,\ul{b}}\times
          2r_{\ul{a-1}\,\ul{b},\,\ul{a}\,\ul{b}}\times
          2r_{\ul{a}\,\ul{b},\,\ul{a}\,\ul{b+1}}\times
          2r_{\ul{a}\,\ul{b},\,\ul{a+1}\,\ul{b}}},\
    a\in[q],b\in[p],
\end{eqnarray}
where the nonzero elements of core factors are respectively defined as:
\begin{eqnarray}
    \label{middle tensor2}
    \begin{cases}
    \mX_{\ul{a}\,\ul{b}}^{\,i_{\ul{a}\,\ul{b}},\,j_{\ul{a}\,\ul{b}}}
    (1:r_{\ul{a}\,\ul{b-1},\,\ul{a}\,\ul{b}},\ 1:r_{\ul{a-1}\,\ul{b},\,\ul{a}\,\ul{b}},\
     1:r_{\ul{a}\,\ul{b},\,\ul{a}\,\ul{b+1}},\ 1:r_{\ul{a}\,\ul{b},\,\ul{a+1}\,\ul{b}})
     = \wt\mX_{\ul{a}\,\ul{b}}^{\,i_{\ul{a}\,\ul{b}},\,j_{\ul{a}\,\ul{b}}},\\[0.3em]
    \mX_{\ul{a}\,\ul{b}}^{\,i_{\ul{a}\,\ul{b}},\,j_{\ul{a}\,\ul{b}}}
    (r_{\ul{a}\,\ul{b-1},\,\ul{a}\,\ul{b}}+1:2r_{\ul{a}\,\ul{b-1},\,\ul{a}\,\ul{b}},\
     r_{\ul{a-1}\,\ul{b},\,\ul{a}\,\ul{b}}+1:2r_{\ul{a-1}\,\ul{b},\,\ul{a}\,\ul{b}},\\
    \hspace{2.0cm} r_{\ul{a}\,\ul{b},\,\ul{a}\,\ul{b+1}}+1:2r_{\ul{a}\,\ul{b},\,\ul{a}\,\ul{b+1}},\
     r_{\ul{a}\,\ul{b},\,\ul{a+1}\,\ul{b}}+1:2r_{\ul{a}\,\ul{b},\,\ul{a+1}\,\ul{b}})
     = \wh\mX_{\ul{a}\,\ul{b}}^{\,i_{\ul{a}\,\ul{b}},\,j_{\ul{a}\,\ul{b}}}.
    \end{cases}
\end{eqnarray}
Thus, the linear combination of two PEPOs yields a new PEPO with ranks $\text{rank}(\vrho ) \leq \text{rank}(\wt\vrho ) + \text{rank}(\wh\vrho )$. Similarly, for PEPSs $\wh\vu = [\wh\mU_{\ul{a}\,\ul{b}}]_{a=1,b=1}^{q,p}$ and $\wt\vu = [\wt\mU_{\ul{a}\,\ul{b}}]_{a=1,b=1}^{q,p}$, it also holds that $\text{rank}(\wh\vu + \wt\vu ) \leq \text{rank}(\wh\vu ) + \text{rank}(\wt\vu ) $.

\subsection{Measurement settings}
\label{sec: stable embedding of PEPO}

\paragraph{Informationally Complete POVMs--spherical $t$-design POVM}

A central question in quantum state tomography concerns the choice of measurements that allow for the reliable and efficient recovery of an unknown state. Among various possibilities, \emph{informationally complete} POVMs (IC-POVMs) play a fundamental role, as their measurement statistics are sufficient to uniquely determine any density operator.
Specifically, if a POVM's statistics can uniquely determine any quantum state with a fixed dimension of $d^{2n}-1$, it is considered informationally complete \cite{prugovevcki1977information,busch1989determination,peres1997quantum} for an unknown density operator $\vrho\in\C^{d^n\times d^n}$ with $\trace(\vrho)=1$. So an informationally complete POVM  necessitates a minimum of $d^{2n}-1$ independent measurement outcomes. Furthermore, each element within the rank-one IC-POVM shares identical properties, thereby classifying it as one of the tight IC-POVMs \cite{scott2006tight}. In this part, our focus lies predominantly on (tight) spherical $t$-designs POVMs \cite{zauner1999quantum,renes2004symmetric,dall2014accessible}, which belong to tight rank-one IC-POVMs as defined in \cite[Definition 5 and Proposition 13]{scott2006tight}. To begin, let us provide the definition of spherical $t$-designs:
\begin{defi}
\label{definition_of_T_Design} (Spherical $t$-designs \cite{matthews2009distinguishability,dall2014accessible,KuengACHA17}). A finite set $\{\vw_k  \}_{k=1}^K\subset \C^{d^n}$ of normalized vectors is called a  spherical quantum $t$-design if \footnote{ $K\geq C_{d^n + \lfloor t/2 \rfloor - 1}^{\lfloor t/2 \rfloor} \cdot C_{d^n + \lceil t/2 \rceil - 1}^{\lceil t/2 \rceil} $ is  necessary to form a spherical $t$-design\cite{scott2006tight,gross2015partial}.}
\begin{eqnarray} \label{the definition of t designs}
     \frac{1}{K}\sum_{k=1}^K (\vw_k\vw_k^\dagger)^{\otimes s}  = \int (\vw\vw^\dagger)^{\otimes s} d\vw
\end{eqnarray}
holds for any $s\leq t$, where the integral on the right hand side is taken with respect to the Haar measure on the
complex unit sphere in $\C^{d^n}$.
\end{defi}
When $s=1$, we have $\frac{1}{K}\sum_{k=1}^K \vw_k\vw_k^\dagger = \int \vw\vw^\dagger d\vw = \frac{1}{d^n}\mId$, and thus $\mA_k = \frac{d^n}{K} \vw_k\vw_k^\dagger, k = 1,\dots, K$ form a rank-one POVM. For simplicity, we call such an induced POVM $\{\mA_k = \frac{d^n}{K} \vw_k\vw_k^\dagger\}$ as a $t$-design POVM. In \eqref{the definition of t designs}, we adopt the setting of uniform weights ($1/K$ for each $\vw_k$, which is mostly commonly used in practice) to simplify the analysis. However, a more general scenario with varied weights can also be considered \cite{dall2014accessible}. We note that $t$-designs always exist and can, in principle, be constructed in any dimension and for any $t$ \cite{seymour1984averaging, bajnok1992construction},  although in some cases these constructions can be inefficient, as they require vector sets of exponential size \cite{hayashi2005reexamination}. The following result establishes stable embeddings of any Hermitian matrices from SIC-POVM in terms of $\|\calA(\vrho) \|_2^2$.
\begin{theorem} [Quantity of spherical $t$-designs]
\label{l2 norm of 2-designs RIP}
Suppose that $\{\vw_k  \}_{k=1}^K\subset \C^{d^n}$ constitutes a spherical $t$-design ($t\ge 2$). Let $\calA$ be the linear map in \eqref{The defi of population measurement in Q cases (K measurements)} corresponding to the induced POVM $\{\mA_k = \frac{d^n}{K} \vw_k\vw_k^\dagger\}$. Then for arbitrary Hermitian matrix $\vrho\in\C^{d^n\times d^n}$, $\calA(\vrho)$ satisfies
\begin{eqnarray}
\label{The l2 norm of A(rho) 2_designs}
\|\calA(\vrho) \|_2^2 = \sum_{k=1}^K \frac{d^{2n}}{K^2}\<\vw_k\vw_k^\dagger, \vrho \>^2 = \frac{d^n(\|\vrho\|_F^2 + (\trace(\vrho))^2)}{K(d^n + 1)}\geq \frac{d^n\|\vrho\|_F^2}{K(d^n + 1)}.
\end{eqnarray}
\end{theorem}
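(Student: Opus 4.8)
The plan is to reduce the sum of squared measurements to a single trace against the \emph{second moment} of the Haar measure, and then invoke the $t$-design property (which applies since $t\geq 2$) to evaluate that moment in closed form. First I would substitute $\mA_k = \frac{d^n}{K}\vw_k\vw_k^\dagger$ into $\|\calA(\vrho)\|_2^2 = \sum_{k=1}^K \innerprod{\mA_k}{\vrho}^2$ and pull out the scalar $d^{2n}/K^2$, which already gives the first equality in the statement. Because both $\vrho$ and $\vw_k\vw_k^\dagger$ are Hermitian, each $\innerprod{\vw_k\vw_k^\dagger}{\vrho} = \trace(\vw_k\vw_k^\dagger\vrho)$ is real, so squaring is harmless.

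The key algebraic step is to rewrite each squared scalar as a trace on the tensor-product space $\C^{d^n}\otimes\C^{d^n}$ via the identity $\trace(\mA)\trace(\mB) = \trace(\mA\otimes \mB)$. This yields $\innerprod{\vw_k\vw_k^\dagger}{\vrho}^2 = \trace\big((\vw_k\vw_k^\dagger\vrho)^{\otimes 2}\big) = \trace\big((\vw_k\vw_k^\dagger)^{\otimes 2}\,\vrho^{\otimes 2}\big)$. Summing over $k$ and factoring out $\vrho^{\otimes 2}$ shows the sum equals $K\,\trace\big(\big(\tfrac1K\sum_k(\vw_k\vw_k^\dagger)^{\otimes 2}\big)\vrho^{\otimes2}\big)$, where the bracketed average is exactly the left-hand side of the $t$-design condition \eqref{the definition of t designs} with $s=2$. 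I would then replace it by the Haar integral $\int(\vw\vw^\dagger)^{\otimes 2}\,d\vw$.

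The crux of the argument, and the step I expect to be the main obstacle, is evaluating this second moment. By Schur--Weyl symmetry the integrand is invariant under conjugation by any unitary, so by Schur's lemma the integral is a linear combination of the identity $\mId$ and the swap operator $\bb S$ on $\C^{d^n}\otimes\C^{d^n}$; fixing the two coefficients by taking traces (or equivalently via Weingarten calculus) gives $\int(\vw\vw^\dagger)^{\otimes 2}\,d\vw = \frac{\mId + \bb S}{d^n(d^n+1)}$, the normalized projector onto the symmetric subspace. Plugging this in and using $\trace(\vrho^{\otimes2}) = \trace(\vrho)^2$ together with the swap identity $\trace(\bb S\,\vrho^{\otimes2}) = \trace(\vrho^2) = \|\vrho\|_F^2$ (valid since $\vrho$ is Hermitian) produces $\sum_k\innerprod{\vw_k\vw_k^\dagger}{\vrho}^2 = \frac{K(\|\vrho\|_F^2 + \trace(\vrho)^2)}{d^n(d^n+1)}$. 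Multiplying by $d^{2n}/K^2$ and simplifying recovers the stated equality, and the final inequality is immediate because $\trace(\vrho)^2\geq 0$ for Hermitian $\vrho$.
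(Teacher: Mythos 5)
Your proposal is correct and follows essentially the same route as the paper's proof: both reduce $\sum_k \innerprod{\vw_k\vw_k^\dagger}{\vrho}^2$ to a trace against the second Haar moment via the $s=2$ design condition, identify that moment as $\frac{\mId+\mS}{d^n(d^n+1)}$ (the paper cites \cite[Theorem 1]{dall2014accessible} and \Cref{property of 2-designs} where you sketch the Schur's-lemma derivation of the same formula), and finish with $\trace(\mS\vrho^{\otimes 2})=\trace(\vrho^2)=\|\vrho\|_F^2$ for Hermitian $\vrho$. The only difference is that you justify the closed form of the second moment from first principles rather than by citation, which does not change the structure of the argument.
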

The proof for this is presented in \Cref{Proof of RIP 2 designs in Appe}. Note that uniformly distributed $t$-designs ($t \geq 2$) are informationally complete; unlike Haar random projective measurements, only one POVM ($Q = 1$) is required to determined the quantity of $\|\calA(\vrho) \|_2^2$.

\paragraph{Projective Rank-one Measurements--Haar random projective measurements}

While tight rank-one IC-POVMs provide a theoretically optimal framework, they are generally challenging to implement in practice.
This motivates the use of projective rank-one measurements, which are experimentally more feasible,  where each measurement is associated with an arbitrary basis $\{\vphi_k\}$.  Since our focus is on projective measurements in random bases, we primarily consider Haar random projective measurements. Specifically, we define a unitary matrix $\mU_1 = \begin{bmatrix} \vphi_{1,1} & \cdots & \vphi_{1,d^n} \end{bmatrix}\in\C^{d^n\times d^n}$ and apply $\mU_1$ to the quantum state $\vrho$ prior to performing the projective measurement in a physically convenient basis (denoted as $\{\ve_k\}$), where $\mU_1 \ve_k = \vphi_{1,k}$ for any $k\leq d^n$. This process is mathematically expressed as $\innerprod{\mA_{1,k}}{\vrho} =\<\vphi_{1,k}\vphi_{1,k}^\dagger, \vrho  \> =  \ve_k^\dagger (\mU_1^{\dagger} \vrho \mU_1) \ve_k$. However, as previously mentioned, a single projective measurement (a single POVM) is insufficient to fully reconstruct a general quantum state $\vrho$. Therefore, projective measurements are performed in multiple bases, or more generally, using multiple POVMs, to obtain complete information about the quantum state. Specifically, we generate $Q$ unitary matrices  $\mU_i = \begin{bmatrix} \vphi_{i,1} & \cdots & \vphi_{i,d^n} \end{bmatrix}\in\C^{d^n\times d^n}, i = 1,\dots, Q$ and form population measurements as follows:
\begin{eqnarray}
    \label{Probability of rank-one POVM Haar}
    \innerprod{\mA_{i,k}}{\vrho} = \innerprod{\vphi_{i,k} \vphi_{i,k}^\dagger}{\vrho} = \vphi_{i,k}^\dagger \vrho \vphi_{i,k}, i\in[Q], k\in[d^n].
\end{eqnarray}

Next, we focus on determining the required number of POVMs to enable accurate quantum state recovery. A critical property in this context is the concept of a \emph{stable embedding}, which plays a pivotal role in recovery problems for sparse signals and low-rank matrices/tensors~\cite{donoho2006compressed, candes2006robust, candes2008introduction, recht2010guaranteed,candes2011tight, eftekhari2015new, Rauhut17, qin2024guaranteed, qin2024quantum}. Extensive analysis has been conducted on the low-dimensional structure of signals of interest to understand stable embedding. Leveraging Haar random projective measurements, we use modified Mendelson’s small ball method \Cref{Small Ball Method_abs_New} in {Appendix} \ref{Auxiliary materials} to demonstrate the existence of a stable embedding, which corresponds to deriving a lower bound for a nonnegative empirical process.
\begin{theorem} [Stable embedding of multiple Haar random projective measurements]
\label{Small_Method_ROHaar_Measurement PEPO}
Let $\calA:\C^{d^{n}\times d^{n}} \rightarrow \R^{Kd^n}$ be the linear map defined in \eqref{The defi of population measurement in Q cases (K measurements)} that is induced by $Q$ random unitary matrices. Assume that
\begin{eqnarray}
    \label{number of sample rank one haar PEPO}
    Q\geq \Omega(D),
\end{eqnarray}
in which $D = \begin{cases}
    \sum_{a=1}^{q}\sum_{b = 1}^{p} d \, t_{\ul{a}\,\ul{b-1},\,\ul{a}\,\ul{b}} \, t_{\ul{a-1}\,\ul{b},\,\ul{a}\,\ul{b}} \, t_{\ul{a}\,\ul{b},\,\ul{a}\,\ul{b+1}} \, t_{\ul{a}\,\ul{b},\,\ul{a+1}\,\ul{b}} \, \log(1+qp), & \text{PEPS} \\
     \sum_{a=1}^{q}\sum_{b = 1}^{p} d^2 \, r_{\ul{a}\,\ul{b-1},\,\ul{a}\,\ul{b}} \, r_{\ul{a-1}\,\ul{b},\,\ul{a}\,\ul{b}} \, r_{\ul{a}\,\ul{b},\,\ul{a}\,\ul{b+1}} \, r_{\ul{a}\,\ul{b},\,\ul{a+1}\,\ul{b}} \, \log(1+qp), & \text{PEPO}
    \end{cases}$
with  $qp = n$, $t_{\ul{a}\,\ul{0},\,\ul{a}\,\ul{1}} = t_{\ul{0}\,\ul{b},\,\ul{1}\,\ul{b}} = t_{\ul{a}\,\ul{p},\,\ul{a}\,\ul{p+1}} = t_{\ul{q}\,\ul{b},\,\ul{q+1}\,\ul{b}} = 1$ and
$r_{\ul{a}\,\ul{0},\,\ul{a}\,\ul{1}} = r_{\ul{0}\,\ul{b},\,\ul{1}\,\ul{b}} = r_{\ul{a}\,\ul{p},\,\ul{a}\,\ul{p+1}} = r_{\ul{q}\,\ul{b},\,\ul{q+1}\,\ul{b}} = 1$. Then with probability at least $1-e^{-\alpha_1Q}$ (where $\alpha_1$ is a positive constant.), $\calA$ obeys
\begin{eqnarray}
    \label{Small_Method_ROHaar_Measurement_Conclusion_Theorem PEPO}
    \|\calA(\vrho)\|_2^2  = \sum_{i=1}^Q\sum_{k=1}^{d^n} |\<\vphi_{i,k} \vphi_{i,k}^\dagger, \vrho  \>|^2   \geq \Omega\bigg(\frac{Q}{d^{n}}\|\vrho\|_F^2\bigg)
\end{eqnarray}
for all PEPSs $\vrho\in  \{\vrho\in\C^{d^{n}\times d^{n}}: \vrho = \vu \vu^\dagger, \vu = [\mU_{\ul{a}\, \ul{b}}]_{a=1,b=1}^{q,p},  \text{rank}(\vu) = \mT \}$ and all PEPOs $\vrho\in  \{\vrho\in\C^{d^{n}\times d^{n}}: \vrho = \vrho^\dagger, \vrho = [\mX_{\ul{a}\, \ul{b}}]_{a=1,b=1}^{q,p},  \text{rank}(\vrho) = \mR \}$.
\end{theorem}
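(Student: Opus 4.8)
The plan is to prove this uniform lower bound with the modified Mendelson small-ball method stated in the appendix, applied to the i.i.d.\ sample formed by the $Q$ Haar unitaries $\mU_1,\ldots,\mU_Q$. Since both sides are homogeneous of degree two in $\vrho$, it suffices to establish the bound on the normalized set $E=\{\vrho:\|\vrho\|_F=1\}$ intersected with the PEPS (resp.\ PEPO) class, the general claim following by scaling. For each unitary I introduce the per-sample functional $g_i(\vrho)=\sqrt{\sum_{k=1}^{d^n}|\langle \vphi_{i,k}\vphi_{i,k}^\dagger,\vrho\rangle|^2}=\|\diag(\mU_i^\dagger \vrho\,\mU_i)\|_2$, so that $\|\calA(\vrho)\|_2^2=\sum_{i=1}^Q g_i(\vrho)^2$. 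The small-ball method then produces, with probability at least $1-e^{-\alpha_1 Q}$ upon choosing its deviation parameter $\propto\sqrt{Q}$, a bound of the schematic form $\inf_{\vrho\in E}\tfrac1Q\sum_i g_i(\vrho)^2\gtrsim \xi^2 p_{2\xi}-\tfrac{C}{\sqrt{Q}}\mathfrak{W}_Q(E)$, where $p_{2\xi}=\inf_{\vrho\in E}\P{g(\vrho)\geq 2\xi}$ is the marginal small-ball probability and $\mathfrak{W}_Q(E)$ is the mean empirical (Rademacher) width of $E$. Two estimates then drive the argument: a constant lower bound on $p_{2\xi}$ and an $O(\sqrt{D})$ upper bound on $\mathfrak{W}_Q(E)$.

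For the marginal small-ball probability, I write $\vphi=\mU\ve_k$ as Haar-uniform on the sphere and use that $\vphi^\dagger\vrho\,\vphi$ is real for Hermitian $\vrho$. A second-moment computation over the sphere gives $\E[g(\vrho)^2]=\frac{\|\vrho\|_F^2+(\trace\vrho)^2}{d^n+1}\geq \frac{1}{d^n+1}$ on $E$, mirroring the exact $2$-design identity of \Cref{l2 norm of 2-designs RIP}. To turn this into a small-ball lower bound I would apply the Paley--Zygmund inequality $\P{g^2\geq \tfrac12\E[g^2]}\geq \tfrac14(\E[g^2])^2/\E[g^4]$, which requires a matching fourth-moment bound $\E[g^4]\lesssim (d^n+1)^{-2}$. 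The latter is a fourth-order Haar (Weingarten) computation controlling the cross terms $\E[(\vphi_k^\dagger\vrho\,\vphi_k)^2(\vphi_\ell^\dagger\vrho\,\vphi_\ell)^2]$; it yields a constant $p_{2\xi}\geq c_0$ at scale $\xi\asymp (d^n)^{-1/2}$, so the small-ball term is $\gtrsim 1/d^n$.

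The crux is the width bound, which I would obtain via a covering-number/Dudley-chaining estimate on the PEPS (resp.\ PEPO) set. Parameterizing a normalized state by its $n=qp$ bounded core tensors $\mU_{\ul{a}\,\ul{b}}$ (resp.\ $\mX_{\ul{a}\,\ul{b}}$) and exploiting the multilinearity of the contraction in \eqref{DefOfPEPS} (resp.\ \eqref{DefOfPEPO}), a perturbation of the fully contracted network is controlled by the sum of single-core perturbations, so an $\epsilon$-net of each core's bounded ball produces an $\epsilon$-net of $E$ whose log-cardinality is $\lesssim D_0\log(C_1/\epsilon)$, with $D_0=\sum_{a,b} d\,t_{\ul{a}\,\ul{b-1},\ul{a}\,\ul{b}}t_{\ul{a-1}\,\ul{b},\ul{a}\,\ul{b}}t_{\ul{a}\,\ul{b},\ul{a}\,\ul{b+1}}t_{\ul{a}\,\ul{b},\ul{a+1}\,\ul{b}}=D/\log(1+qp)$ the effective parameter count. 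Feeding this into Dudley's integral, together with the telescoping over the $2q-1$ bond directions that introduces the additional $\log(1+qp)$ factor, gives $\mathfrak{W}_Q(E)\lesssim \sqrt{D}/\sqrt{d^n}$ (the $1/\sqrt{d^n}$ being the natural scale of $g$). Hence the complexity term is $\lesssim \frac{1}{\sqrt{Q}}\cdot\frac{\sqrt{D}}{\sqrt{d^n}}$ relative to the $1/d^n$ main term, forcing the requirement $Q\gtrsim D$.

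Finally, fixing $\xi\asymp (d^n)^{-1/2}$ and imposing $Q\geq \Omega(D)$ makes the complexity term at most half the small-ball term, so $\inf_{\vrho\in E}\tfrac1Q\sum_i g_i(\vrho)^2\geq \Omega(1/d^n)$ with probability at least $1-e^{-\alpha_1 Q}$; rescaling by homogeneity recovers \eqref{Small_Method_ROHaar_Measurement_Conclusion_Theorem PEPO} for all PEPS and PEPO. The step I expect to be the main obstacle is precisely the covering/width bound: because PEPS and PEPO contain a loop and admit no canonical (orthogonalized) gauge --- unlike open-boundary MPOs --- one cannot take the cores to be isometries, so the Lipschitz dependence of the contracted state on each core must instead be controlled through the boundedness assumption on the factors, carefully tracking how a single-core perturbation propagates along every contraction direction of the two-dimensional lattice. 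The structural fact that keeps this tractable for recovery is the linear-combination property established above, which guarantees that differences of PEPS/PEPO remain PEPS/PEPO of at most twice the bond dimension, so the same covering estimate applies to the relevant difference set.
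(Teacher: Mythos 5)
Your proposal follows essentially the same route as the paper's proof: the modified Mendelson small-ball method of \Cref{Small Ball Method_abs_New}, a constant marginal small-ball probability at scale $\xi \asymp \|\vrho\|_F/d^{n}$ obtained from Haar moment estimates (the paper simply cites \cite[eq.~(105)]{qin2024quantum} rather than re-deriving it via Paley--Zygmund), and a width bound of order $\sqrt{KD}\,\|\vrho\|_F/d^{n}$ obtained by covering each bounded core tensor and propagating single-core perturbations through the contracted network, which forces exactly the requirement $Q\gtrsim D$. The only cosmetic differences are that the paper controls the width with a single-scale net at resolution $\epsilon_{ab}=\tfrac{1}{2qp}$ combined with a self-bounding inequality ($T\le \text{net term} + T/2$) and tail integration instead of Dudley chaining, and it handles the PEPS case through the decomposition $\vrho-\vrho^{(h)}=(\vu-\vu^{(h)})\vu^\dagger+\vu^{(h)}(\vu-\vu^{(h)})^\dagger$, just as you indicate.
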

The proof is given in {Appendix} \ref{Proof of Small ball PEPO}. This inference further indicates stable recovery. For any two PEPOs $\vrho_1$ and $\vrho_2$, we have
\begin{eqnarray}
\label{stable embedding of difference}
\|\calA(\vrho_1 - \vrho_2)\|_2^2  \geq \Omega\bigg(\frac{Q}{d^{n}}\|\vrho_1 - \vrho_2\|_F^2\bigg),
\end{eqnarray}
which ensures distinct measurements (i.e., $\calA(\vrho_1)\neq \calA(\vrho_2)$) as long as $\vrho_1 \neq \vrho_2$. In addition, observing the requirements of $Q$ in \eqref{number of sample rank one haar PEPO}, it is notable that we do not capitalize on the randomness among different columns within a random unitary matrix. Consequently, the necessity for a relatively large number of POVMs emerges. However, given the weak local correlations between columns in the unitary matrix, owing to orthogonality being a global property \cite{tropp2012comparison}, we posit that the requirement on $Q$ could be substantially mitigated, perhaps even to just $Q=1$. Indeed, as per \cite[Theorem 3]{jiang2006many}, when $n\to \infty$, in an ``in probability" sense, all elements (scaled by $\sqrt{d^{n}}$) of approximately $o(\frac{d^{n}}{n\log d})$ columns in a Haar-distributed random unitary matrix can be closely approximated by entries generated independently from a standard complex normal distribution. Rigorously leveraging this independence could ensure stable embedding \eqref{Small_Method_ROHaar_Measurement PEPO} with just a single POVM, analogous to the case where independent columns from a multivariate complex normal distribution suffice for the stable embedding of the MPO \cite[Theorem 2]{qin2024quantum}, which can be seen as a special case of the PEPO.

\section{Stable Recovery for Empirical Measurements}
\label{sec: stable recovery}
In Section \ref{sec: stable embedding of PEPO}, we define a distinct set of population measurements $\calA(\vrho)$ for any ground-truth PEPS or PEPO $\vrho^\star$. Expanding on these insights, we delve into the stable recovery of the PEPS- or PEPO-based density operator $\vrho^\star$ from empirical measurements obtained through diverse measurement methodologies. With empirical measurements $\wh \vp$, for simplicity, we consider minimizing the following constrained least squares objective:
\begin{eqnarray}
    \label{The loss function in QST}
    \wh{\vrho} =
\argmin_{\vrho\in\setX}\|\calA(\vrho) - {\widehat{\bm p}}\|_2^2,
\end{eqnarray}
where $\setX$ denotes the set of PEPSs or PEPOs on the $q\times p$ lattice:
\begin{eqnarray}
    \label{The set of the PEPS}
\text{PEPSs: }    \setX_{\mT} &\!\!\!\!=\!\!\!\!& \{\vrho\in\C^{d^{n}\times d^{n}}: \vrho = \vu \vu^\dagger, \|\vu\|_2=1, n = qp, \vu = [\mU_{\ul{a}\, \ul{b}}]_{a=1,b=1}^{q,p},  \text{rank}(\vu) = \mT\},\\
    \label{The set of the PEPO}
 \text{PEPOs: }   \setX_{\mR} &\!\!\!\!=\!\!\!\!& \{\vrho\in\C^{d^{n}\times d^{n}}: \vrho = \vrho^\dagger, \trace(\vrho) = 1, n = qp, \vrho = [\mX_{\ul{a}\, \ul{b}}]_{a=1,b=1}^{q,p}, \text{rank}(\vrho) = \mR  \}.
\end{eqnarray}
Since $\wh \vrho$ is a global solution to \eqref{The loss function in QST} and  $\vrho^\star \in \setX$, we have
\begin{eqnarray}
    \label{whrho and rho star relationship}
    \hspace{-0.1cm}0 &\!\!\!\!\!\leq\!\!\!\!\!& \|\calA(\vrho^\star) - \wh{\vp} \|_2^2  - \|\calA(\wh{\vrho}) - \wh{\vp} \|_2^2\nonumber\\
&\!\!\!\!\!=\!\!\!\!\!&\|\calA(\vrho^\star)-\calA(\vrho^\star) - \veta\|_2^2 - \|\calA(\wh{\vrho})-\calA(\vrho^\star) - \veta\|_2^2\nonumber\\
&\!\!\!\!\!=\!\!\!\!\!& 2\<\calA(\vrho^\star)+\veta, \calA(\wh{\vrho} - \vrho^\star)  \> + \|\calA(\vrho^\star)\|_2^2 - \|\calA(\wh{\vrho})\|_2^2\nonumber\\
&\!\!\!\!\!=\!\!\!\!\!& 2\<  \veta, \calA(\wh{\vrho} - \vrho^\star) \> - \|\calA(\wh{\vrho} - \vrho^\star)\|_2^2,
\end{eqnarray}
where we denote by $\veta$ the measurement error as $    \veta = \wh{\vp} -  \vp = \wh{\vp} - \calA(\vrho^\star) =  \begin{bmatrix}
          \veta_{1}^\top,
          \cdots,
          \veta_{Q}^\top\end{bmatrix}^\top$ with $\eta_{i,k}$ being the $k$-th element in $\veta_i$.
This further implies that
\begin{eqnarray}
    \label{whrho and rho^star relationship_1}
    \|\calA(\wh{\vrho} - \vrho^\star)\|_2^2 \leq 2\<  \veta, \calA(\wh{\vrho} - \vrho^\star) \>.
\end{eqnarray}
According to\Cref{l2 norm of 2-designs RIP} and \Cref{Small_Method_ROHaar_Measurement PEPO}, the left-hand side of the above equation can be further lower bounded by
\begin{eqnarray}
    \label{summary of lower bound of difference PEPO}
    \|\calA(\wh{\vrho} - \vrho^\star)\|_2^2 \geq \begin{cases}
    \frac{d^{n}\|\wh\vrho - \vrho^\star\|_F^2}{K(d^{n} + 1)}, &  \text{$t$-designs, $t\geq 2$}, \\
    \Omega\big(\frac{Q\|\wh\vrho - \vrho^\star\|_F^2}{d^{n}}\big), &  \text{Haar}.
\end{cases}
\end{eqnarray}
For the right-hand side of \eqref{whrho and rho^star relationship_1}, 
we exploit the randomness of $\veta$ and utilize the concentration bound presented in \Cref{General bound of multinomial distribution Q cases} in {Appendix} \ref{Auxiliary materials} for multinomial random variables. Building on this, we separately analyze $t$-design POVMs and Haar random projective measurements for PEPS and PEPO, and present the main findings as follows:

\begin{theorem} [Stable recovery under $t$-design POVMs ($t\geq 3$) for PEPSs]
\label{final conclusion of 3-designs theorem PEPS}
Suppose $\{\mA_1, \ldots, \mA_{K}\}$ form a set of $t$-design POVMs ($t\geq 3$) and $\vrho^\star\in\C^{d^n \times d^n}$ ($n = qp$) is a PEPS state with ranks $\mT$.  We measure the state $M$ times using the POVM, and specifically, for any $\delta>0$, suppose
\begin{eqnarray}
    \label{upper bound of_QM 3 designs PEPS}
    M \geq \Omega\bigg(\frac{D_{\text{PEPS}}}{\delta^2}\bigg),
\end{eqnarray}
where $D_{\text{PEPS}} = \sum_{a=1}^{q}\sum_{b = 1}^{p} d \,
t_{\ul{a}\,\ul{b-1},\,\ul{a}\,\ul{b}} \,
t_{\ul{a-1}\,\ul{b},\,\ul{a}\,\ul{b}} \,
t_{\ul{a}\,\ul{b},\,\ul{a}\,\ul{b+1}} \,
t_{\ul{a}\,\ul{b},\,\ul{a+1}\,\ul{b}}
\log(1 + qp)$ with $t_{\ul{a}\,\ul{0},\,\ul{a}\,\ul{1}} =
t_{\ul{0}\,\ul{b},\,\ul{1}\,\ul{b}} =
t_{\ul{a}\,\ul{p},\,\ul{a}\,\ul{p+1}} =
t_{\ul{q}\,\ul{b},\,\ul{q+1}\,\ul{b}} = 1$.
Then, with probability $1 - e^{- \alpha_2 D_{\text{PEPS}}}$, the solution $\wh \vrho$ of the constrained least squares \eqref{The loss function in QST} satisfies
\begin{eqnarray}
\label{final conclusion of recovery error1 for t-design larger 3 PEPS}
\|\wh{\vrho} - \vrho^\star\|_1\leq  \delta,
\end{eqnarray}
where $\alpha_2$ is a positive constant.
\end{theorem}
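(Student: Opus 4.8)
The plan is to turn the optimality inequality \eqref{whrho and rho^star relationship_1} into a Frobenius-norm error bound and then transfer it to the trace norm. Write $\mDelta = \wh\vrho - \vrho^\star$. Since $\wh\vrho$ and $\vrho^\star$ are both unit-norm PEPS, i.e.\ pure states of trace one, $\mDelta$ is Hermitian with $\trace(\mDelta)=0$ and matrix rank at most $2$; by the linear-combination rule \eqref{middle tensor2} it is itself a PEPS difference of rank at most $2\mT$. Feeding the $t$-design lower bound in \eqref{summary of lower bound of difference PEPO} into \eqref{whrho and rho^star relationship_1} gives the deterministic reduction
\begin{equation*}
\frac{d^n}{K(d^n+1)}\|\mDelta\|_F^2 \;\le\; \|\calA(\mDelta)\|_2^2 \;\le\; 2\,\innerprod{\veta}{\calA(\mDelta)},
\end{equation*}
so it suffices to control the multiplier term $\innerprod{\veta}{\calA(\mDelta)}=\sum_k \eta_k \innerprod{\mA_k}{\mDelta}$ uniformly over all admissible differences.

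The second step bounds the variance proxy governing the multinomial concentration of \Cref{General bound of multinomial distribution Q cases}, namely $V(\mDelta)=\sum_k p_k \innerprod{\mA_k}{\mDelta}^2$ with $p_k=\innerprod{\mA_k}{\vrho^\star}$. Substituting $\mA_k=\tfrac{d^n}{K}\vw_k\vw_k^\dagger$ turns $V(\mDelta)$ into the third moment $\tfrac{d^{3n}}{K^3}\sum_k(\vw_k^\dagger\vrho^\star\vw_k)(\vw_k^\dagger\mDelta\vw_k)^2$, which is exactly where $t\ge 3$ is used: the $3$-design identity of \Cref{definition_of_T_Design} replaces the sum by the Haar integral $\tfrac{1}{d^n(d^n+1)(d^n+2)}\sum_{\pi\in S_3}\trace\!\big(\pi\,(\vrho^\star\otimes\mDelta\otimes\mDelta)\big)$. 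Evaluating the six permutation terms and using $\trace(\mDelta)=0$, $\trace(\vrho^\star)=1$, and $\trace(\vrho^\star\mDelta^2)\le\|\mDelta\|_F^2$, the identity and transposition contributions either vanish or collapse to $\|\mDelta\|_F^2+2\trace(\vrho^\star\mDelta^2)=O(\|\mDelta\|_F^2)$, yielding the clean deterministic bound $V(\mDelta)\lesssim\|\mDelta\|_F^2/K^2$ valid for every admissible $\mDelta$.

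The third step makes the concentration uniform. For a fixed $\mDelta$, \Cref{General bound of multinomial distribution Q cases} gives a Bernstein-type tail $|\innerprod{\veta}{\calA(\mDelta)}|\lesssim \sqrt{V(\mDelta)\,u/M}+(\max_k|\innerprod{\mA_k}{\mDelta}|)\,u/M$ at confidence level $u$. I would then build an $\epsilon$-net of the unit-Frobenius sphere inside the cone of rank-$2\mT$ PEPS differences; the bounded-factor assumption lets the net be constructed core-by-core over the $q\times p$ lattice, so $\log|\calN|\lesssim D_{\text{PEPS}}$, with the $\log(1+qp)$ factor arising from the telescoping product over the $n=qp$ cores. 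Taking a union bound at level $u\asymp D_{\text{PEPS}}$ and extending from $\calN$ to the whole set via Lipschitz continuity of $\mDelta\mapsto\innerprod{\veta}{\calA(\mDelta)}$ gives, with probability at least $1-e^{-\alpha_2 D_{\text{PEPS}}}$, the uniform estimate $\innerprod{\veta}{\calA(\mDelta)}\lesssim \tfrac{\|\mDelta\|_F}{K}\sqrt{D_{\text{PEPS}}/M}$ once the lower-order Bernstein term is absorbed using $M\gtrsim D_{\text{PEPS}}$.

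Finally, substituting this into the deterministic reduction cancels the $1/K$ on both sides and leaves $\|\mDelta\|_F\lesssim\sqrt{D_{\text{PEPS}}/M}$; the choice $M=\Omega(D_{\text{PEPS}}/\delta^2)$ forces $\|\mDelta\|_F\lesssim\delta$. Because $\mDelta$ has matrix rank at most $2$, the comparison $\|\mDelta\|_1\le\sqrt{2}\,\|\mDelta\|_F$ upgrades this to $\|\wh\vrho-\vrho^\star\|_1\le\delta$, as claimed. The main obstacle is the uniform bound in the third step: since PEPS admit no canonical form, the bond indices cannot be orthogonalized, so the covering number must be controlled directly through the bounded-factor parametrization, and one must keep $\log|\calN|=O(D_{\text{PEPS}})$ rather than something exponential in $n$ while preserving the per-point scaling $V(\mDelta)\lesssim\|\mDelta\|_F^2/K^2$ so that the dependence on $K$ exactly cancels.
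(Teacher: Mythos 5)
Your proposal is correct and follows essentially the same route as the paper's proof: the optimality inequality \eqref{whrho and rho^star relationship_1} combined with the $2$-design embedding of \Cref{l2 norm of 2-designs RIP}, the $3$-design third-moment identity (the paper's \Cref{property of 3-designs}) giving the variance proxy $O(\|\mDelta\|_F^2/K^2)$ via exactly your permutation expansion and the bound $\trace(\vrho^\star\mDelta^2)\le\|\mDelta\|_F^2$, the multinomial concentration of \Cref{General bound of multinomial distribution Q cases} made uniform by a core-by-core covering with $\log|\calN|\lesssim D_{\text{PEPS}}$, and finally the rank-$2$ conversion from Frobenius to trace norm. The only differences are cosmetic: the paper proves the PEPO case first and specializes to PEPS, and it extends from the net to the full set by the telescoping self-bounding trick (net resolution $1/(2qp)$ per core) rather than an explicit Lipschitz argument.
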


\begin{theorem} [Stable recovery under $t$-design POVMs ($t\geq 3$) for PEPOs]
\label{final conclusion of 3-designs theorem}
Suppose $\{\mA_1, \ldots, \mA_{K}\}$ form a set of $t$-design POVMs ($t\geq 3$) and $\vrho^\star\in\C^{d^n \times d^n}$ ($n = qp$) is a PEPO state with ranks $\mR$.  We measure the state $M$ times using the POVM, and specifically, for any $\epsilon>0$, suppose
\begin{eqnarray}
    \label{upper bound of_QM 3 designs}
    M \geq \Omega\bigg(\frac{D_{\text{PEPO}}}{\epsilon^2}\bigg),
\end{eqnarray}
where $D_{\text{PEPO}} = \sum_{a=1}^{q}\sum_{b = 1}^{p} d^2 \,
r_{\ul{a}\,\ul{b-1},\,\ul{a}\,\ul{b}} \,
r_{\ul{a-1}\,\ul{b},\,\ul{a}\,\ul{b}} \,
r_{\ul{a}\,\ul{b},\,\ul{a}\,\ul{b+1}} \,
r_{\ul{a}\,\ul{b},\,\ul{a+1}\,\ul{b}}
\log(1 + qp)$ with $r_{\ul{a}\,\ul{0},\,\ul{a}\,\ul{1}} =
r_{\ul{0}\,\ul{b},\,\ul{1}\,\ul{b}} =
r_{\ul{a}\,\ul{p},\,\ul{a}\,\ul{p+1}} =
r_{\ul{q}\,\ul{b},\,\ul{q+1}\,\ul{b}} = 1$.
Then, with probability $1 - e^{- \alpha_3 D_{\text{PEPO}}}$, the solution $\wh \vrho$ of the constrained least squares \eqref{The loss function in QST} satisfies
\begin{eqnarray}
\label{final conclusion of recovery error1 for t-design larger 3}
\|\wh{\vrho} - \vrho^\star\|_F\leq  \epsilon,
\end{eqnarray}
where $\alpha_3$ is a positive constant.
\end{theorem}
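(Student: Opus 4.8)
The starting point is the basic inequality \eqref{whrho and rho^star relationship_1}, namely $\|\calA(\mZ)\|_2^2 \le 2\langle\veta,\calA(\mZ)\rangle$ with $\mZ := \wh\vrho-\vrho^\star$. Since both $\wh\vrho$ and $\vrho^\star$ lie in $\setX_\mR$, the difference $\mZ$ is Hermitian, traceless (so $\trace\mZ = \trace\wh\vrho - \trace\vrho^\star = 0$), satisfies $\|\mZ\|_{op}\le 2$, and—by the linear-combination rule \eqref{middle tensor1}--\eqref{middle tensor2}—is itself a PEPO with $\text{rank}(\mZ)\le 2\mR$. For the left-hand side I would invoke \Cref{l2 norm of 2-designs RIP}, which needs only $t\ge2$; combined with $\trace\mZ=0$ it gives the exact identity $\|\calA(\mZ)\|_2^2 = \tfrac{d^n\|\mZ\|_F^2}{K(d^n+1)}$ recorded in \eqref{summary of lower bound of difference PEPO}. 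The whole proof then reduces to an upper bound on $\langle\veta,\calA(\mZ)\rangle$ that is \emph{uniform} over the structured set of normalized rank-$2\mR$ PEPO differences.

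For the right-hand side, write $\langle\veta,\calA(\mZ)\rangle=\|\mZ\|_F\,\langle\veta,\calA(\wt\mZ)\rangle$ with $\wt\mZ=\mZ/\|\mZ\|_F$, and control $W:=\sup_{\wt\mZ}\langle\veta,\calA(\wt\mZ)\rangle$. For each fixed direction, $\langle\veta,\calA(\wt\mZ)\rangle=\tfrac1M\sum_{m}(c_{X_m}-\E c_{X_m})$ with $c_k=\langle\mA_k,\wt\mZ\rangle$ is a centered average of $M$ i.i.d. bounded variables, to which I would apply the multinomial concentration bound \Cref{General bound of multinomial distribution Q cases}. The decisive input is the variance proxy $V=\sum_k p_k c_k^2=\tfrac{d^{3n}}{K^3}\sum_k\langle\vw_k\vw_k^\dagger,\vrho^\star\rangle\langle\vw_k\vw_k^\dagger,\wt\mZ\rangle^2$, a degree-three polynomial in the design vectors; this is exactly where the hypothesis $t\ge 3$ enters, since it lets me replace the empirical sum by the Haar integral $\int(\vw\vw^\dagger)^{\otimes3}\,d\vw$. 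Evaluating that integral through the permutation operators on the symmetric subspace and using $\trace\mZ=0$, $\trace(\vrho^\star)=1$, and $\trace(\vrho^\star\wt\mZ^2)\le\|\wt\mZ\|_{op}^2\le1$ collapses $V$ to $O(1/K^2)$, killing the otherwise dominant $\trace(\wt\mZ)^2$ contribution.

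To upgrade the per-direction estimate to the uniform bound $W$, I would build a $\gamma$-net over the set of unit-norm bounded PEPO differences. Because a PEPO is a multilinear contraction of the bounded core tensors $\{\mX_{\ul a\,\ul b}\}$, perturbing one factor while freezing the rest changes the operator by an amount controlled by the product of the frozen factors' norms, which the bounded-factor assumption keeps under control; summing over the $qp$ factors gives a Lipschitz bound and a standard volumetric estimate then yields $\log\calN(\gamma)\lesssim D_{\text{PEPO}}$, the $\log(1+qp)$ arising from resolving the product over the $qp$ sites. A union bound over this net, at a deviation level calibrated to failure probability $e^{-\alpha_3 D_{\text{PEPO}}}$, together with the variance estimate, gives $W\lesssim \tfrac1K\sqrt{D_{\text{PEPO}}/M}$. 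Feeding this into $\tfrac{d^n\|\mZ\|_F^2}{K(d^n+1)}\le 2\|\mZ\|_F W$, cancelling one power of $\|\mZ\|_F$ and the common $1/K$, leaves $\|\mZ\|_F\lesssim\sqrt{D_{\text{PEPO}}/M}$, which is at most $\epsilon$ precisely when $M\gtrsim D_{\text{PEPO}}/\epsilon^2$, establishing \eqref{final conclusion of recovery error1 for t-design larger 3}.

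The main obstacle is the uniform (net) step. Unlike MPOs with open boundaries, PEPOs contain loops and admit no canonical form, so there is no orthonormal gauge available to simplify the covering-number estimate, and I would instead need a gauge-free perturbation bound applied directly to the contraction map. A secondary technical difficulty is that the worst-case range $B=\max_k|c_k|\sim d^n/K$ entering the multinomial Bernstein inequality threatens a spurious $d^n D_{\text{PEPO}}/M$ term; taming it requires a truncation argument that replaces the worst-case range by a typical one—exploiting that $\sum_k p_k c_k^2=V$ is small, so only a negligible fraction of outcomes carry large $|c_k|$—so that the sub-Gaussian term $\sqrt{D_{\text{PEPO}}/M}$ genuinely dominates under the stated sample budget. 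Everything else is routine bookkeeping.
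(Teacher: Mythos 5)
Your proposal follows essentially the same route as the paper's proof: the same basic inequality, the same $t\ge 2$ quadratic lower bound on $\|\calA(\wh\vrho-\vrho^\star)\|_2^2$, the same use of the third design moment (the paper's \Cref{property of 3-designs}, evaluated via the symmetric-subspace projector and a trace inequality) to collapse the variance proxy to $O(\|\wh\vrho-\vrho^\star\|_F^2/K^2)$, and the same factor-wise (gauge-free) covering argument giving log-covering number $\lesssim D_{\text{PEPO}}$ before the multinomial concentration bound and union bound. The one point of divergence is your truncation fix for the worst-case range $\max_k|\<\mA_k,\wt\vrho\>|\sim d^n/K$ in the Bernstein-type bound: the paper handles this not by truncation but by assuming ``without loss of generality'' that the sub-Gaussian branch dominates, so your concern flags a subtlety the paper glosses over rather than a gap in your own argument.
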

The proof is provided in {Appendix} \ref{Proof of 2designs M}. \Cref{final conclusion of 3-designs theorem PEPS,final conclusion of recovery error1 for t-design larger 3} demonstrate that the number of state copies scales linearly as $O(n)$, representing the optimal scenario relative to the intrinsic degrees of freedom: $O(nd \|\mT\|_{\infty}^4 )$ for PEPS and $O(nd^2 \|\mR\|_{\infty}^4 )$ for PEPO. We emphasize that the analysis for MPOs \cite{noh2020efficient,qin2024quantum} cannot be straightforwardly generalized to PEPS or PEPO. This difficulty arises from the lack of a (quasi-)canonical form in PEPS/PEPO, which prevents the simultaneous selection of an orthonormal basis across all bond indices \cite{orus2019tensor}. Consequently, bounds are imposed on each tensor factor in PEPS and PEPO to facilitate the analysis, although these assumptions are purely technical and do not alter the required sample complexity. In addition, while spherical $t$-design POVMs ($t\geq 3$) provide strong guarantees in terms of sample complexity, they remain mostly of theoretical interest and are difficult to realize experimentally. In contrast, Haar random projective measurements are not only analytically tractable but also more amenable to physical implementation, making them a natural candidate for further analysis. Specifically, we have

\begin{theorem} [Stable recovery under Haar random projective measurements for PEPSs]
\label{final conclusion of Haar theorem PEPS}
Given a PEPS state $\vrho^\star\in\C^{d^n \times d^n}$ ($n = qp$) with PEPO ranks $\mT$, independently generate $Q$ Haar-distributed random unitary matrices $[  \vphi_{i,1} \,  \cdots \, $ $ \vphi_{i,d^n}], i=1,\dots,Q$. Use each induced rank-one POVM $\{\vphi_{i,k}\vphi_{i,k}^\dagger\}_{k=1}^{d^n}$ to measure the state $M$ times and get the empirical measurements $\wh \vp_i$.
For any $\delta>0$, suppose  $Q\geq \Omega(D_{\text{PEPS}})$ and
\begin{eqnarray}
    \label{upper bound of_QM Haar PEPS}
    QM \geq \Omega\bigg(\frac{D_{\text{PEPS}} (\log Q + n \log d)^2}{\delta^2}\bigg)
\end{eqnarray}
where $D_{\text{PEPS}} = \sum_{a=1}^{q}\sum_{b = 1}^{p} d \,
t_{\ul{a}\,\ul{b-1},\,\ul{a}\,\ul{b}} \,
t_{\ul{a-1}\,\ul{b},\,\ul{a}\,\ul{b}} \,
t_{\ul{a}\,\ul{b},\,\ul{a}\,\ul{b+1}} \,
t_{\ul{a}\,\ul{b},\,\ul{a+1}\,\ul{b}}
\log(1 + qp)$ with $t_{\ul{a}\,\ul{0},\,\ul{a}\,\ul{1}} =
t_{\ul{0}\,\ul{b},\,\ul{1}\,\ul{b}} =
t_{\ul{a}\,\ul{p},\,\ul{a}\,\ul{p+1}} =
t_{\ul{q}\,\ul{b},\,\ul{q+1}\,\ul{b}} = 1$.
Then any global solution $\wh \vrho$ of \eqref{The loss function in QST} satisfies
\begin{eqnarray}
    \label{Statistical Error_Haar_Measurement_Conclusion_Theorem PEPS}
    \|\wh{\vrho}-\vrho^\star\|_1 \le \delta
\end{eqnarray}
with probability at least $\min\{1-e^{-\alpha_4 (\log Q + n)} - e^{-\alpha_5 D_{\text{PEPS}}}, 1-e^{-\alpha_1Q} \}$, where $\alpha_4 $ and $\alpha_5$ are positive constants,  $\alpha_1$  corresponds to constants of the probability in \Cref{Small_Method_ROHaar_Measurement PEPO}.

\end{theorem}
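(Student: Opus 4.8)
The plan is to transfer the trace-norm guarantee to a Frobenius-norm guarantee, bound the Frobenius error through the stable embedding of \Cref{Small_Method_ROHaar_Measurement PEPO}, and absorb the multinomial statistical fluctuations by a uniform concentration estimate over the PEPS manifold.

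First I would use that both $\wh\vrho$ and $\vrho^\star$ belong to $\setX_{\mT}$ and are rank-one pure states (since $\|\vu\|_2=1$), so that $\Delta := \wh\vrho - \vrho^\star$ is Hermitian with $\rank(\Delta)\le 2$ and $\trace(\Delta)=0$; hence $\|\Delta\|_1 \le \sqrt{2}\,\|\Delta\|_F$ and it suffices to prove $\|\Delta\|_F\le\delta/\sqrt2$. By the linear-combination rule \eqref{middle tensor1}--\eqref{middle tensor2}, $\Delta$ is itself a PEPO of rank at most $2\mT$, so \Cref{Small_Method_ROHaar_Measurement PEPO} applies to it; on its event (probability $\ge 1-e^{-\alpha_1 Q}$ once $Q\ge\Omega(D_{\text{PEPS}})$) I combine the lower bound \eqref{summary of lower bound of difference PEPO} with the basic inequality \eqref{whrho and rho^star relationship_1} to get
\begin{equation*}
\Omega\!\left(\tfrac{Q}{d^{n}}\|\Delta\|_F^{2}\right) \;\le\; \|\calA(\Delta)\|_2^2 \;\le\; 2\,\langle \veta, \calA(\Delta)\rangle .
\end{equation*}
By homogeneity it then remains to bound $Z := \sup\{\langle\veta,\calA(\Delta)\rangle : \Delta \text{ a difference of two PEPSs},\ \|\Delta\|_F=1\}$, since this yields $\|\Delta\|_F\le O(d^{n}Z/Q)$.

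The core of the argument is the estimate on $Z$. For a fixed normalized $\Delta$ I would write $\langle\veta,\calA(\Delta)\rangle=\sum_{i=1}^{Q}\sum_{k=1}^{d^{n}}\eta_{i,k}\,g_{i,k}$ with $g_{i,k}=\vphi_{i,k}^\dagger\Delta\vphi_{i,k}$. A Haar concentration bound for these quadratic forms, together with a union bound over the $Qd^{n}$ outcomes and the normalization $\|\Delta\|_F=1$ (which also bounds the operator norm of $\Delta$) and $\trace(\Delta)=0$, yields $\max_{i,k}|g_{i,k}|\lesssim (\log Q + n\log d)/d^{n}$ with high probability. Inserting this into the multinomial concentration bound of \Cref{General bound of multinomial distribution Q cases}, whose per-POVM variance proxy obeys $\tfrac1M\sum_{k}p_{i,k}g_{i,k}^{2}\le\tfrac1M(\max_{k}|g_{i,k}|)^{2}$, produces a total variance of order $Q(\log Q+n\log d)^{2}/(Md^{2n})$. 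A covering of the PEPS difference set with metric entropy $O(D_{\text{PEPS}})$ — constructed directly from the $O(D_{\text{PEPS}})$ bounded tensor coordinates, since no canonical form is available to orthonormalize the bonds — then upgrades the pointwise tail to the uniform estimate $Z\lesssim \tfrac{1}{d^{n}}(\log Q+n\log d)\sqrt{QD_{\text{PEPS}}/M}$, valid with probability at least $1-e^{-\alpha_5 D_{\text{PEPS}}}-e^{-\alpha_4(\log Q+n)}$.

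Substituting this into $\|\Delta\|_F\le O(d^{n}Z/Q)$ gives $\|\Delta\|_F\lesssim (\log Q+n\log d)\sqrt{D_{\text{PEPS}}/(QM)}$, so the hypothesis $QM\ge\Omega\!\big(D_{\text{PEPS}}(\log Q+n\log d)^{2}/\delta^{2}\big)$ forces $\|\Delta\|_F\le\delta/\sqrt2$ and hence $\|\wh\vrho-\vrho^\star\|_1\le\delta$; intersecting the embedding event of \Cref{Small_Method_ROHaar_Measurement PEPO} with the concentration event yields the stated probability. I expect the main obstacle to be the uniform control of $Z$: the absence of a canonical form for PEPSs/PEPOs rules out a clean orthonormal parametrization, forcing the metric entropy to be estimated directly through the $O(D_{\text{PEPS}})$ bounded factors, while the widely varying magnitudes of the outcome probabilities $p_{i,k}$ (which may be as large as $1$ rather than $1/d^{n}$) must be tamed via the Haar concentration of $\max_{i,k}|g_{i,k}|$ — it is exactly this interplay that generates the $(\log Q + n\log d)^{2}$ overhead in the sample complexity.
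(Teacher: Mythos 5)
Your proposal is correct and follows essentially the same route as the paper's proof: the optimality/basic inequality plus the stable embedding of \Cref{Small_Method_ROHaar_Measurement PEPO} applied to the rank-$2\mT$ PEPO difference, an upper bound on $\langle \veta, \calA(\wh\vrho-\vrho^\star)\rangle$ obtained by conditioning on the event $\max_{i,k}|\vphi_{i,k}^\dagger \Delta \vphi_{i,k}| \lesssim (\log Q + n\log d)/d^n$, applying the multinomial concentration of \Cref{General bound of multinomial distribution Q cases}, and union-bounding over an $\epsilon$-net of the bounded-factor PEPS difference set with $\log$-cardinality $O(D_{\text{PEPS}})$, followed by the rank-two trace-norm/Frobenius-norm conversion. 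The only cosmetic differences are your explicit homogeneity normalization of $Z$ (the paper keeps $\|\wh\vrho-\vrho^\star\|_F$ inside the net) and the constant $\sqrt{2}$ versus the paper's factor $2$ in the norm conversion.
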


\begin{theorem} [Stable recovery under Haar random projective measurements for PEPOs]
\label{final conclusion of Haar theorem}
Given a PEPO state $\vrho^\star\in\C^{d^n \times d^n}$ ($n = qp$) with PEPO ranks $\mR$, independently generate $Q$ Haar-distributed random unitary matrices $[  \vphi_{i,1} \,  \cdots \, $ $ \vphi_{i,d^n}], i=1,\dots,Q$. Use each induced rank-one POVM $\{\vphi_{i,k}\vphi_{i,k}^\dagger\}_{k=1}^{d^n}$ to measure the state $M$ times and get the empirical measurements $\wh \vp_i$.
For any $\epsilon>0$, suppose  $Q\geq \Omega(D_{\text{PEPO}})$ and
\begin{eqnarray}
    \label{upper bound of_QM Haar}
    QM \geq \Omega\bigg(\frac{D_{\text{PEPO}} (\log Q + n \log d)^2}{\epsilon^2}\bigg)
\end{eqnarray}
where $D_{\text{PEPO}} = \sum_{a=1}^{q}\sum_{b = 1}^{p} d^2 \,
r_{\ul{a}\,\ul{b-1},\,\ul{a}\,\ul{b}} \,
r_{\ul{a-1}\,\ul{b},\,\ul{a}\,\ul{b}} \,
r_{\ul{a}\,\ul{b},\,\ul{a}\,\ul{b+1}} \,
r_{\ul{a}\,\ul{b},\,\ul{a+1}\,\ul{b}}
\log(1 + qp)$ with $r_{\ul{a}\,\ul{0},\,\ul{a}\,\ul{1}} =
r_{\ul{0}\,\ul{b},\,\ul{1}\,\ul{b}} =
r_{\ul{a}\,\ul{p},\,\ul{a}\,\ul{p+1}} =
r_{\ul{q}\,\ul{b},\,\ul{q+1}\,\ul{b}} = 1$.
Then any global solution $\wh \vrho$ of \eqref{The loss function in QST} satisfies
\begin{eqnarray}
    \label{Statistical Error_Haar_Measurement_Conclusion_Theorem}
    \|\wh{\vrho}-\vrho^\star\|_F \le \epsilon
\end{eqnarray}
with probability at least $\min\{1-e^{-\alpha_6 (\log Q + n)} - e^{-\alpha_7 D_{\text{PEPO}}}, 1-e^{-\alpha_1Q} \}$, where $\alpha_6 $ and $\alpha_7$ are positive constants,  $\alpha_1$  corresponds to constants of the probability in \Cref{Small_Method_ROHaar_Measurement PEPO}.

\end{theorem}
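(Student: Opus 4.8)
The plan is to chain the stable-embedding lower bound of \Cref{Small_Method_ROHaar_Measurement PEPO} with a uniform upper bound on the noise cross-term, starting from the basic optimality inequality $\|\calA(\wh\vrho - \vrho^\star)\|_2^2 \le 2\<\veta,\calA(\wh\vrho - \vrho^\star)\>$ that follows from $\wh\vrho$ being a minimizer of \eqref{The loss function in QST} with $\vrho^\star\in\setX_{\mR}$. Write $\vX = \wh\vrho - \vrho^\star$, which is Hermitian and traceless since both operators have unit trace. By the linear-combination rule \eqref{middle tensor1}--\eqref{middle tensor2}, $\vX$ is again a PEPO with $\text{rank}(\vX)\le 2\mR$, so the difference embedding \eqref{stable embedding of difference} gives $\|\calA(\vX)\|_2^2 \ge \Omega(\tfrac{Q}{d^n}\|\vX\|_F^2)$ on an event of probability at least $1 - e^{-\alpha_1 Q}$. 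Normalizing $\vX = \|\vX\|_F\,\vZ$ with $\|\vZ\|_F = 1$ and combining with the optimality inequality, one factor of $\|\vX\|_F$ cancels and the problem reduces to the empirical process
\begin{equation*}
W \;=\; \sup_{\vZ\in\calS}\,\<\veta,\calA(\vZ)\>, \qquad \calS \;=\; \{\vZ = \vZ^\dagger:\ \trace(\vZ) = 0,\ \|\vZ\|_F = 1,\ \text{rank}(\vZ)\le 2\mR\},
\end{equation*}
through the bound $\|\vX\|_F \lesssim \tfrac{d^n}{Q}\,W$.

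The second step is to control $W$. For fixed $\vZ$ we have $\<\veta,\calA(\vZ)\> = \sum_{i,k}\eta_{i,k}\,\vphi_{i,k}^\dagger\vZ\vphi_{i,k}$, a centered sum of bounded multinomial fluctuations independent across the $Q$ bases, whose variance proxy is $\tfrac1M\sum_{i,k}p_{i,k}|\vphi_{i,k}^\dagger\vZ\vphi_{i,k}|^2 \sim \tfrac{Q}{Md^{2n}}$ for traceless $\vZ$ and whose increments are of order $\tfrac1M$; the Bernstein-type tail of \Cref{General bound of multinomial distribution Q cases} then controls $\<\veta,\calA(\vZ)\>$ pointwise. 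To pass to the supremum I would build a $\nu$-net $\calN_\nu$ of $\calS$ in the Frobenius norm. Since $\calS$ is the set of normalized PEPO differences with bounded factors, its metric entropy is governed by the number of real parameters in the core tensors, which, after accounting for the net granularity across the $qp$ factors, yields $\log|\calN_\nu| \lesssim D_{\text{PEPO}}$; a Lipschitz estimate for $\vZ\mapsto\<\veta,\calA(\vZ)\>$ then lets me replace the supremum by a maximum over $\calN_\nu$ up to a discretization term.

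A union bound of the Bernstein tail over the $e^{O(D_{\text{PEPO}})}$ net points, together with a separate high-probability control of $\max_{i,k}|\vphi_{i,k}^\dagger\vZ\vphi_{i,k}|$ and $\max_{i,k}p_{i,k}$ (which concentrate near $1/d^n$ up to a $\log(d^nQ) = n\log d + \log Q$ factor coming from the maximum over the $KQ = d^nQ$ outcomes), would give $W \lesssim \tfrac{1}{d^n}\sqrt{\tfrac{Q\,D_{\text{PEPO}}}{M}}\,(\log Q + n\log d)$ on the intersection of an event of probability $1 - e^{-\alpha_6(\log Q + n)}$ (the $\ell_\infty$ concentration) and one of probability $1 - e^{-\alpha_7 D_{\text{PEPO}}}$ (the net union bound). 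Plugging this into $\|\vX\|_F \lesssim \tfrac{d^n}{Q}W \lesssim \sqrt{\tfrac{D_{\text{PEPO}}}{QM}}\,(\log Q + n\log d)$ and invoking the hypotheses $Q \ge \Omega(D_{\text{PEPO}})$ and $QM \ge \Omega(D_{\text{PEPO}}(\log Q + n\log d)^2/\epsilon^2)$ forces $\|\wh\vrho - \vrho^\star\|_F \le \epsilon$; intersecting with the embedding event $1 - e^{-\alpha_1 Q}$ yields the stated probability. I expect the covering step to be the main obstacle: because PEPOs have no canonical form, one cannot fix an orthonormal gauge across the bond indices, so the per-factor radius bounds needed to estimate $|\calN_\nu|$ and the Lipschitz constant must be drawn from the boundedness assumption on the factors rather than from a Schmidt decomposition, and ensuring the resulting constants do not secretly scale with $d^n$ is the delicate point.
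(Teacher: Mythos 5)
Your proposal follows essentially the same route as the paper's own proof: the same optimality inequality, the same reduction of $\wh\vrho-\vrho^\star$ to a traceless Hermitian PEPO of rank at most $2\mR$ fed into the stable embedding of \Cref{Small_Method_ROHaar_Measurement PEPO}, and the same control of the noise cross-term by a covering argument over the bounded-factor PEPO set combined with the multinomial Bernstein tail of \Cref{General bound of multinomial distribution Q cases} and a separate high-probability event bounding $\max_{i,k}|\vphi_{i,k}^\dagger \vZ \vphi_{i,k}|$ (the source of the $\log Q + n\log d$ factor), yielding the identical final bound $\|\wh\vrho-\vrho^\star\|_F \lesssim \sqrt{D_{\text{PEPO}}/(QM)}\,(\log Q + n\log d)$ and the same two-part probability structure. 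The covering-step difficulty you flag is handled in the paper exactly as you anticipate: the factors are assumed (without loss of generality) Frobenius-bounded, per-factor nets of radius $1/(2qp)$ are combined through a telescoping expansion, and the resulting log-covering number is $\lesssim D_{\text{PEPO}}$ with no hidden $d^n$ dependence.
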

The detailed proof is shown in {Appendix} \ref{Proof of final conclusion of Haar theorem}. \Cref{final conclusion of Haar theorem PEPS,final conclusion of Haar theorem} guarantees stable recovery of the ground-truth state when the total number of state copies $QM$ grows only polynomially with respect to the number of qudits $n$. It is noteworthy that \eqref{upper bound of_QM Haar PEPS} and \eqref{upper bound of_QM Haar} impose only a requirement for $QM$ to be sufficiently large, without any constraints on the number of measurement times $M$ for each POVM. Therefore, \Cref{Statistical Error_Haar_Measurement_Conclusion_Theorem PEPS,Statistical Error_Haar_Measurement_Conclusion_Theorem} provides theoretical support for the practical application of single-shot measurements (i.e., $M = 1$ where each POVM is measured only once). However, the orders  $n^3$ of the polynomial in \eqref{upper bound of_QM Haar PEPS} and \eqref{upper bound of_QM Haar} are relatively large compared to the intrinsic degrees of freedom, namely $O(nd \|\mT\|_{\infty}^4 )$ for PEPS and $O(nd^2 \|\mR\|_{\infty}^4 )$ for PEPO. As discussed in {Section}~\ref{sec: stable embedding of PEPO}, this is attributed to the lack of utilization of independence between columns within a single unitary matrix. On the other hand, compared with spherical $t$-design POVMs ($t\geq 3$), Haar random projective measurements are not only mathematically tractable but also significantly easier to implement in quantum circuits, making them a more practical choice in experimental scenarios.

\paragraph*{Discussion} In accordance with the previous discussion, all recovery guarantees for PEPOs are established using the Frobenius norm rather than the trace norm. Nevertheless, if the PEPO state $\vrho^\star$ exhibits a low matrix rank, we can also establish a recovery guarantee in the trace norm by leveraging a bound between the trace distance and Hilbert-Schmidt distance for low-rank states, as proposed in \cite{coles2019strong}: $\|\wh\vrho - \vrho^\star\|_1 \leq 2\sqrt{\text{rank}(\vrho^\star)}\| \wh\vrho - \vrho^\star \|_F$. Although this approach may yield a trivial bound for high matrix ranks, we hypothesize that \Cref{final conclusion of Haar theorem,final conclusion of recovery error1 for t-design larger 3} can be extended to the trace norm, regardless of the matrix rank of $\vrho^\star$, through direct analysis. However, for physical PEPO states, the rank of these states is either 1, as observed in pure states such as the 2D cluster state, the ground state of the Toric Code model, the 2D resonating valence bond (RVB) state, and the 2D AKLT model, or they exhibit approximately low-rank behavior at low temperatures, as seen in thermal states and finite-temperature extensions of the Toric Code, RVB, and AKLT models.

It is worth noting that, in the estimation of PEPO states, although the solution $\wh{\vrho}$ of \eqref{The loss function in QST} may not adhere to physical constraints, we can introduce additional constraints to enforce physicality without compromising the recovery guarantee. Specifically, let $\vrho^\diamond$ denote the global solution to the following minimization problem with an additional PSD constraint:
\begin{eqnarray}
    \label{The loss function in QST-PSD}
    \vrho^\diamond = \argmin_{\vrho\in\setX_{\mR}, \vrho\succeq \vzero}\|\calA(\vrho) - \widehat\vp \|_2^2.
\end{eqnarray}
Then, $\vrho^\diamond$ enjoys the same guarantee as $\wh \vrho$ under the same setup as \Cref{final conclusion of Haar theorem,final conclusion of recovery error1 for t-design larger 3}. Alternatively, we can directly project $\wh \vrho$ onto the set of physical states $\setS_+:=\{\vrho\in\C^{d^n\times d^n}: \vrho \succeq \vzero, \trace(\vrho) = 1\}$ and denote $P_{\setS_+}$ as the projection onto the set $\setS_+$. Since $\setS_+$ is convex, the corresponding projector is non-expansive, thus satisfying:
\begin{eqnarray}
    \label{The nonexpansiveness of physical state}
    \|\vrho^\diamond-\vrho^\star\|_F =\|P_{\setS_+}(\wh{\vrho} )-\vrho^\star\|_F = \|P_{\setS_+}(\wh{\vrho} )-P_{\setS_+}(\vrho^\star)\|_F \le \|\wh{\vrho}-\vrho^\star\|_F \le \epsilon,
\end{eqnarray}
indicating that the projection step ensures the resulting state complies with physical constraints while maintaining or even enhancing the recovery guarantee.

Finally, our focus lies on PEPS and PEPO in the lattice structure in this paper, yet our conclusions can readily extend to any tensor network with a degree of freedom given by $d_{\text{tn}}$--corresponding to the number of independent parameters--and $n_{\text{tn}}$ tensor factors. By following the same analysis as presented in \Cref{final conclusion of Haar theorem,final conclusion of recovery error1 for t-design larger 3}, we can deduce that when
\begin{eqnarray}
    \label{upper bound of_QM for all cases in tensor network}
    M &\!\!\!\!=\!\!\!\!&  O (d_{\text{tn}} \log (1 + n_{\text{tn}} )/\epsilon^2),  \  \text{$t$-designs, $t\geq 3$},\\
     QM &\!\!\!\!=\!\!\!\!&  O(d_{\text{tn}} n^2 \log (1 + n_{\text{tn}} )/\epsilon^2) , \  \text{Haar},
\end{eqnarray}
we conclude that $\|\wh{\vrho} - \vrho^\star\|_F\leq  \epsilon$, where $\wh{\vrho}, \vrho^\star\in\C^{d^n\times d^n}$ respectively represent the estimated and ground-truth tensor networks.
This highlights that the accuracy of QST when using certain POVMs is contingent upon the structure of tensor networks. Therefore, to precisely recover a physical state, it is imperative to identify optimal tensor networks with minimal degrees of freedom.

\section{Conclusion}
\label{sec: conclusion}

In this paper, we investigate sampling bounds for recovering structured quantum states represented as projected entangled-pair states (PEPSs) projected entangled-pair operators (PEPOs). By analyzing quantum measurements based on spherical $t$-designs and Haar random projective measurements, we establish stable embedding properties that are essential for robust PEPS and PEPO recovery. Our results on PEPS- and PEPO-based quantum state tomography indicate that only a polynomial number of state copies relative to the qudit count is necessary for bounded recovery error of a PEPS or PEPO state. Our findings contribute to advancing quantum state tomography methods and lay a foundation for further research to refine our understanding of structured quantum states.

An important avenue for future research is the development of efficient optimization algorithms. Due to the lack of a canonical form in PEPSs and PEPOs, a common issue in looped tensor networks, estimating any PEPS or PEPO with optimal bond dimensions using the iterative hard thresholding method \cite{Rauhut17,qin2024sample} is not feasible. A summary of existing methods for obtaining PEPS or PEPO is provided in \cite[Section III-C]{orus2019tensor}; however, their theoretical performance, including error and convergence analyses, has not been addressed. A promising approach in PEPS- and PEPO-based QST is to optimize tensor factors via factorization techniques employing gradient descent \cite{qin2024guaranteed,jameson2024optimal}. The convergence and recovery error of such methods can be analyzed based on \cite{qin2024sample,qin2024guaranteed}. However, fine-tuning the estimated bond dimensions remains necessary, making the design of efficient factorization methods that adaptively update bond dimensions a critical direction for future exploration.

Another potential research direction is analyzing the required number of state copies using unitary $3$-designs. The measurements considered in this work are global measurements, as the unitary matrix rotates the entire system of qudits simultaneously, presenting challenges for practical implementation in quantum circuits. However, if the goal is solely to generate unitary $3$-designs, this can be achieved by sampling Clifford circuits uniformly at random \cite{kueng2015qubit,webb2015clifford,zhu2017multiqubit}. Since a unitary $3$-design POVM replicates the first $3$ moments of the full unitary group under the Haar measure, such POVMs share statistical properties with Haar random projective measurements. Consequently, we hypothesize that the required number of state copies for unitary $3$-design POVMs is polynomially proportional to the degrees of freedom of the PEPOs. This hypothesis, along with its verification, will be a focus of future work.

\section{Acknowledgments}
\label{sec: ack}

We acknowledge funding support from NSF Grants No. CCF-2241298 and ECCS-2409701.  ZQ gratefully acknowledges support from the MICDE Research Scholars Program at the University of Michigan.

\crefalias{section}{appendix}
\appendices

\section{Proof of \Cref{l2 norm of 2-designs RIP}}
\label{Proof of RIP 2 designs in Appe}

\begin{proof}
According to \cite[Theorem 1]{dall2014accessible} and \Cref{property of 2-designs}, we have
\begin{eqnarray}
\label{sum of 2-designs measurement}
\sum_{k=1}^K \frac{1}{K}(\vw_k\vw_k^\dagger)^{\otimes 2} = \frac{\mId + \mS }{d^n(d^n+1)},
\end{eqnarray}
where $\mS$ is the swap operator.  Multiplying both sides by $\frac{d^{2n}}{K}\vrho^{\otimes 2}$ and taking the trace we get
\begin{eqnarray}
\label{The l2 norm of A(rho) 2_designs proof}
\sum_{k=1}^K \frac{d^{2n}}{K^2}\trace((\vw_k\vw_k^\dagger)^{\otimes 2}(\vrho^{\otimes 2})  ) &\!\!\!\!=\!\!\!\!& \frac{d^n(\trace(\vrho^{\otimes 2}) + \trace(\mS\vrho^{\otimes 2}) )}{K(d^n+1)}\nonumber\\
&\!\!\!\!=\!\!\!\!&\frac{d^n((\trace(\vrho))^2 + \trace(\vrho^2)) }{K(d^n+1)}\nonumber\\
&\!\!\!\!=\!\!\!\!&\frac{d^n((\trace(\vrho))^2 + \|\vrho\|_F^2 )}{K(d^n+1)},\nonumber\\
\end{eqnarray}
where the penultimate line follows $\trace(\mS\vrho^{\otimes 2}) = \trace(\vrho^2)$ \cite[Lemma 17]{KuengACHA17} for any Hermitian matrix $\vrho$. Based on $\trace((\vw_k\vw_k^\dagger)^{\otimes 2}(\vrho^{\otimes 2})  ) = \trace( (\vw_k\vw_k^\dagger\vrho)\otimes (\vw_k\vw_k^\dagger\vrho) ) = (\trace(\vw_k\vw_k^\dagger\vrho))^2$, this completes the proof.
\end{proof}

\section{Proof of \Cref{Small_Method_ROHaar_Measurement PEPO}}
\label{Proof of Small ball PEPO}

\begin{proof}
In the proof, without loss of generality, we assume that all PEPSs $\vrho\in  \{\vrho\in\C^{d^{n}\times d^{n}}: \vrho = \vu \vu^\dagger, \vu = [\mU_{\ul{a}\, \ul{b}}]_{a=1,b=1}^{q,p},  \text{rank}(\vu) = \mT,\|\mU_{\ul{a}\, \ul{b}}\|_F \leq H_{ab}, a\in[q], b\in[p]  \}$ and all PEPOs $\vrho\in  \{\vrho\in\C^{d^{n}\times d^{n}}: \vrho = \vrho^\dagger, \vrho = [\mX_{\ul{a}\, \ul{b}}]_{a=1,b=1}^{q,p},  \text{rank}(\vrho) = \mR,\|\mX_{\ul{a}\, \ul{b}}\|_F \leq L_{ab}, a\in[q], b\in[p]  \}$. We emphasize that the constraints $\{H_{ab}\}_{a=1,b=1}^{q,p}$ and $\{L_{ab}\}_{a=1,b=1}^{q,p}$ are introduced solely to facilitate the theoretical analysis and do not affect the required sample complexity. Since PEPS can be regarded as a special case of PEPO, we first establish the result for PEPOs and then specialize it to the PEPS setting.

We define a set $\ddot\setX_{\mR} = \{\vrho\in\C^{d^{qp}\times d^{qp}}: \vrho = \vrho^\dagger, \vrho = [\mX_{\ul{a} \, \ul{b}}]_{a=1,b=1}^{q,p},  \text{rank}(\vrho) = \mR,  \|\mX_{\ul{a} \, \ul{b}}\|_F \leq L_{ab}, a\in[q], b\in[p]  \}$. Next, we prove \Cref{Small_Method_ROHaar_Measurement PEPO} using the modified Mendelson's small ball method. Let $\{ \vphi_{1},\dots, \vphi_{K} \}$ be the first $K$ columns of a randomly generated Haar distributed unitary matrix, and let $\{ \vphi_{i,1},\dots, \vphi_{i,K} \}_{i=1}^Q$ be independent copies of $\{ \vphi_{1},\dots, \vphi_{K} \}$. According to \Cref{Small Ball Method_abs_New},  we need to  bound
\begin{eqnarray} \label{ProofOfSmallBallMethod_For_Haar_Measurement PEPO_H}
    H_{\xi}(\ddot\setX_{\mR}) = \inf_{\vrho\in\ddot\setX_{\mR}} \frac{1}{K}\sum_{k=1}^K\mathbb{P} \{|\< \vphi_k\vphi_k^\dagger,\vrho \>|\geq \xi  \}
\end{eqnarray}
and
\begin{eqnarray}
    \label{ProofOfSmallBallMethod_For_Haar_Measurement PEPO_W}
    W(\ddot\setX_{\mR}) = \E \sup_{\vrho\in \ddot\setX_{\mR}}\frac{1}{\sqrt{QK}}\sum_{i=1}^Q\sum_{k=1}^K \<\epsilon_i\vphi_{i,k}\vphi_{i,k}^\dagger, \vrho  \>,
\end{eqnarray}
where $\epsilon_i,i=1,\dots,Q$ are independent Rademacher random variables. Below we study the two quantities separately.

According to \cite[eq. (105)]{qin2024quantum}, we can directly obtain
\begin{eqnarray}
\label{ProofOfSmallBallMethod_For_Haar_Measurement PEPO_H upper bound}
    H_{\xi}(\ddot\setX_{\mR}) \ge c_0, \ \forall \xi \le \frac{c_1}{d^n}\|\vrho\|_F,
\end{eqnarray}
where $c_0, c_1$ are positive constants.

Next, we apply a covering argument to analyze \eqref{ProofOfSmallBallMethod_For_Haar_Measurement PEPO_W}. According to \cite{zhang2018tensor}, we can construct $\epsilon_{ab}$-net $\{ \mX_{\ul{a}\,\ul{b}}^{(1)}, \dots, \mX_{\ul{a}\,\ul{b}}^{(n_{\ul{a}\,\ul{b}})}\}$ with the covering number
\begin{eqnarray}
    \label{covering number of one factor}
    n_{\ul{a}\,\ul{b}}&\!\!\!\! \leq \!\!\!\!& \bigg(\frac{2L_{ab}+L_{ab}\epsilon_{ab}}{L_{ab}\epsilon_{ab}}\bigg)^{ d^2 r_{\ul{a}\,\ul{b-1},\,\ul{a}\,\ul{b}}\;
    r_{\ul{a-1}\,\ul{b},\,\ul{a}\,\ul{b}}\;
    r_{\ul{a}\,\ul{b},\,\ul{a}\,\ul{b+1}}\;
    r_{\ul{a}\,\ul{b},\,\ul{a+1}\,\ul{b}}}\nonumber\\
    &\!\!\!\! = \!\!\!\!&\bigg(\frac{2+\epsilon_{ab}}{\epsilon_{ab}}\bigg)^{ d^2 r_{\ul{a}\,\ul{b-1},\,\ul{a}\,\ul{b}}\;
    r_{\ul{a-1}\,\ul{b},\,\ul{a}\,\ul{b}}\;
    r_{\ul{a}\,\ul{b},\,\ul{a}\,\ul{b+1}}\;
    r_{\ul{a}\,\ul{b},\,\ul{a+1}\,\ul{b}}}
\end{eqnarray}
for $\{\mX_{\ul{a}\,\ul{b}} \in\C^{d \times d \times r_{\ul{a}\,\ul{b-1},\,\ul{a}\,\ul{b}} \times
    r_{\ul{a-1}\,\ul{b},\,\ul{a}\,\ul{b}} \times
    r_{\ul{a}\,\ul{b},\,\ul{a}\,\ul{b+1}} \times
    r_{\ul{a}\,\ul{b},\,\ul{a+1}\,\ul{b}}} :
    \|\mX_{\ul{a}\,\ul{b}}\|_F\leq L_{ab}  \}$ such that
\begin{eqnarray}
    \label{requirement of one factor}
    \sup_{\mX_{\ul{a}\,\ul{b}}: \|\mX_{\ul{a}\,\ul{b}}\|_F\leq L_{ab}}\min_{h_{\ul{a}\,\ul{b}}\leq n_{\ul{a}\,\ul{b}}} \|\mX_{\ul{a}\,\ul{b}}-\mX_{\ul{a}\,\ul{b}}^{(h_{\ul{a}\,\ul{b}})}\|_F\leq L_{ab}\epsilon_{ab}.
\end{eqnarray}
Then we define the set $\widetilde\setX_{\mR} = \{\vrho^{(h)}: \vrho^{(h)} = {\vrho^{(h)}}^\dagger, \vrho^{(h)} = [\mX_{\ul{a}\,\ul{b}}^{(h_{\ul{a}\,\ul{b}})}]_{a=1,b=1}^{q,p}, \|\vrho^{(h)}\|_F =\|\vrho\|_F, \rho\in\ddot\setX_{\mR}, \text{rank}(\vrho^{(h)}) = \mR, \|\mX_{\ul{a}\,\ul{b}}^{(h_{\ul{a}\,\ul{b}})}\|_F \leq L_{ab},  1\leq h_{\ul{a}\,\ul{b}}\leq n_{\ul{a}\,\ul{b}}, a\in[q], b\in[p]  \}\subset \ddot\setX_{\mR}$ which obeys
\begin{eqnarray}
    \label{covering number for the PEPO subset}
    |\widetilde\setX_{\mR} | \leq \Pi_{a = 1}^{q}\Pi_{b = 1}^{p} \bigg(\frac{2+\epsilon_{ab}}{\epsilon_{ab}}\bigg)^{ d^2 r_{\ul{a}\,\ul{b-1},\,\ul{a}\,\ul{b}}\;
    r_{\ul{a-1}\,\ul{b},\,\ul{a}\,\ul{b}}\;
    r_{\ul{a}\,\ul{b},\,\ul{a}\,\ul{b+1}}\;
    r_{\ul{a}\,\ul{b},\,\ul{a+1}\,\ul{b}}}.
\end{eqnarray}

Denote by
\begin{align*}
    \wh\vrho &:= \arg\sup_{\vrho\in\ddot\setX_{\mR}} \frac{1}{\sqrt{QK}}\sum_{i=1}^Q\sum_{k=1}^K \<\epsilon_i\vphi_{i,k}\vphi_{i,k}^\dagger, \vrho  \>,\\
    T & := \frac{1}{\sqrt{QK}}\sum_{i=1}^Q\sum_{k=1}^K \<\epsilon_i\vphi_{i,k}\vphi_{i,k}^\dagger, \wh\vrho  \>.
\end{align*}

Now taking $\epsilon_{ab} = \frac{1}{2 qp}$ gives
\begin{align} \label{ProofOf<H,X>PEPO upper}
    T&=\frac{1}{\sqrt{QK}}\sum_{i=1}^Q\sum_{k=1}^K \<\epsilon_i\vphi_{i,k}\vphi_{i,k}^\dagger, \wh\vrho -  \vrho^{(h)} + \vrho^{(h)} \>\nonumber\\
    & \leq \frac{1}{\sqrt{QK}}\sum_{i=1}^Q\sum_{k=1}^K \<\epsilon_i\vphi_{i,k}\vphi_{i,k}^\dagger,  \vrho^{(h)} \>\nonumber\\
    & + \frac{1}{\sqrt{QK}}\sum_{i=1}^Q\sum_{k=1}^K\sum_{a=1}^{q}\sum_{b=1}^{p}\<\epsilon_i\vphi_{i,k}\vphi_{i,k}^\dagger, [\mX_{\ul{1} \, \ul{1}},\dots, \mX_{\ul{a} \, \ul{b}} - \mX^{(h_{\ul{a} \, \ul{b}})}_{\ul{a} \, \ul{b}},\dots,  \mX^{(h_{\ul{q} \, \ul{p}})}_{\ul{q} \, \ul{p}}   ]  \>\nonumber\\
    & \leq \frac{1}{\sqrt{QK}}\sum_{i=1}^Q\sum_{k=1}^K \<\epsilon_i\vphi_{i,k}\vphi_{i,k}^\dagger,  \vrho^{(h)} \>\nonumber\\
    & +\sum_{a=1}^{q}\sum_{b=1}^{p} L_{ab}\epsilon_{ab}\frac{1}{\sqrt{QK}}\sum_{i=1}^Q\sum_{k=1}^K\<\epsilon_i\vphi_{i,k}\vphi_{i,k}^\dagger, [\mX_{\ul{1} \, \ul{1}},\dots, \frac{\mX_{\ul{a} \, \ul{b}} - \mX^{(h_{\ul{a} \, \ul{b}})}_{\ul{a} \, \ul{b}}}{\|\mX_{\ul{a} \, \ul{b}} - \mX^{(h_{\ul{a} \, \ul{b}})}_{\ul{a} \, \ul{b}}\|_F},\dots,  \mX^{(h_{\ul{q} \, \ul{p}})}_{\ul{q} \, \ul{p}}   ]  \> \nonumber\\
    & \leq \frac{1}{\sqrt{QK}}\sum_{i=1}^Q\sum_{k=1}^K \<\epsilon_i\vphi_{i,k}\vphi_{i,k}^\dagger,  \vrho^{(h)} \> + \sum_{a=1}^{q}\sum_{b=1}^{p} \epsilon_{ab} T\nonumber\\
    & =  \frac{1}{\sqrt{QK}}\sum_{i=1}^Q\sum_{k=1}^K \<\epsilon_i\vphi_{i,k}\vphi_{i,k}^\dagger,  \vrho^{(h)} \> + \frac{T}{2}.
\end{align}
Here, we expand $\wh\vrho - \vrho^{(h)}$ into $qp$ terms and $[\mX_{\ul{1} \, \ul{1}},\dots, \mX_{\ul{a} \, \ul{b}} - \mX^{(h_{\ul{a} \, \ul{b}})}_{\ul{a} \, \ul{b}},\dots,  \mX^{(h_{\ul{q} \, \ul{p}})}_{\ul{q} \, \ul{p}} ]$ denotes the PEPO format. The third inequality follows from the fact that $L_{ab}\|\frac{\mX_{\ul{a} \, \ul{b}} - \mX^{(h_{\ul{a} \, \ul{b}})}_{\ul{a} \, \ul{b}}}{\|\mX_{\ul{a} \, \ul{b}} - \mX^{(h_{\ul{a} \, \ul{b}})}_{\ul{a} \, \ul{b}}\|_F}\|_F= L_{ab}$.

Since each $\< \vphi_{i,k}\vphi_{i,k}^\dagger,\vrho \>$ is a subexponetial random variable with $\|\< \vphi_{i,k}\vphi_{i,k}^\dagger,\vrho \>\|_{\psi_1} = O(\frac{\|\vrho\|_F}{d^n})$ according to \cite[eq. (103)]{qin2024quantum}, $\epsilon_i\vphi_{i,k}^\dagger\vrho\vphi_{i,k}$ is a centered subexponential random variable with the subexponential norm $\|\epsilon_i\vphi_{i,k}^\dagger\vrho\vphi_{i,k}\|_{\psi_1}= O(\frac{\|\vrho\|_F}{d^n})$. On the other hand, for any $i$, the random vectors $\vphi_{i,k}$ and $\vphi_{i,k'}$ are not dependent to each other for $k\neq k'$. Thus, we use \Cref{Concentration inequality of sum of dependent sub-exp sum} to obtain its concentration inequality as follows:
\begin{eqnarray}
    \label{ProofOfSmallBallMethod_For_Haar_Measurement_W_1}
     &\!\!\!\!\!\!\!\!&\P{ T \geq t}\nonumber\\
     &\!\!\!\! \leq \!\!\!\!&\P{ \frac{1}{\sqrt{QK}}\sum_{i=1}^Q\sum_{k=1}^K \<\epsilon_{i}\vphi_{i,k}\vphi_{i,k}^\dagger, \vrho^{(h)}  \> \geq \frac{t}{2}}\nonumber\\
     &\!\!\!\!\leq\!\!\!\!& \begin{cases}
    \bigg(4qp+1\bigg)^{\ol D_{\text{PEPO}}  }e^{-\frac{c_2 d^{2n}t^2}{4K\|\vrho\|_F^2}}, & t\leq \frac{c_4\sqrt{QK} }{d^n}\|\vrho\|_F\\
    \bigg(4qp+1\bigg)^{\ol D_{\text{PEPO}}  }e^{-\frac{c_3 \sqrt{Q} d^nt}{2\sqrt{K}\|\vrho\|_F}}, & t > \frac{c_4\sqrt{QK} }{d^n}\|\vrho\|_F
  \end{cases} \nonumber\\
    &\!\!\!\!\leq\!\!\!\!&\begin{cases}
    e^{-\frac{c_2 d^{2n}t^2}{4K\|\vrho\|_F^2} +C\ol D_{\text{PEPO}}\log(1+qp)}, & t\leq \frac{c_4\sqrt{QK} }{d^n}\|\vrho\|_F\\
    e^{-\frac{c_3 \sqrt{Q} d^nt}{2\sqrt{K}\|\vrho\|_F} +C \ol D_{\text{PEPO}}\log(1+qp)}, & t > \frac{c_4\sqrt{QK} }{d^n}\|\vrho\|_F
  \end{cases}\nonumber\\
    &\!\!\!\!\leq\!\!\!\!& e^{-\min\{\frac{c_2 d^{2n}t^2}{4K\|\vrho\|_F^2},\frac{c_3 \sqrt{Q} d^nt}{2\sqrt{K}\|\vrho\|_F} \} + C \ol D_{\text{PEPO}}\log(1+qp) },
\end{eqnarray}
where $C$ and $c_i, i=2,3,4$ are positive constants and $\ol D_{\text{PEPO}} = \sum_{a=1}^{q}\sum_{b = 1}^{p}d^2 r_{\ul{a}\,\ul{b-1},\,\ul{a}\,\ul{b}} \;
r_{\ul{a-1}\,\ul{b},\,\ul{a}\,\ul{b}} \;
r_{\ul{a}\,\ul{b},\,\ul{a}\,\ul{b+1}} \;
r_{\ul{a}\,\ul{b},\,\ul{a+1}\,\ul{b}}$. When $Q =  \Omega(D_{\text{PEPO}})$ with $D_{\text{PEPO}} = \ol D_{\text{PEPO}} \log(1+qp) $, we have
\begin{eqnarray}
    \label{ProofOfSmallBallMethod_For_Haar_Measurement_W_2}
    \P{ T \geq t} \leq e^{-c_5 \frac{d^n t \sqrt{D_{\text{PEPO}}}}{\sqrt{K}\|\vrho\|_F}}, \ \forall \ t \geq c_6\frac{\sqrt{KD_{\text{PEPO}}}}{d^n}\|\vrho\|_F,
\end{eqnarray}
where $c_5$ and $c_6$ are positive constants. This further implies
\begin{align}
\label{ProofOfSmallBallMethod_For_Haar_Measurement_W_3}
 W(\ddot\setX_{\mR}) &= \E T \nonumber\\
    &\leq c_6 \frac{\sqrt{KD_{\text{PEPO}}}}{d^n}\|\vrho\|_F +  \int_{c_6 \frac{\sqrt{KD_{\text{PEPO}}}}{d^n}\|\vrho\|_F}^{\infty}\P{T\geq t} dt\nonumber\\
    &\le c_6 \frac{\sqrt{KD_{\text{PEPO}}}}{d^n}\|\vrho\|_F +  \int_{c_6 \frac{\sqrt{KD_{\text{PEPO}}}}{d^n}\|\vrho\|_F}^{\infty} e^{-c_5 \frac{d^n t \sqrt{D_{\text{PEPO}}}}{\sqrt{K}\|\vrho\|_F}} dt\nonumber\\
    &\leq  c_7 \frac{\sqrt{KD_{\text{PEPO}}}}{d^n}\|\vrho\|_F,
\end{align}
where $c_7$ is a positive constant.

Combining \eqref{ProofOfSmallBallMethod_For_Haar_Measurement PEPO_H upper bound}  and \eqref{ProofOfSmallBallMethod_For_Haar_Measurement_W_3}, and setting $t=\frac{c_0\sqrt{Q}}{2}$, $\xi =\frac{c_1}{d^n}\|\vrho\|_F$,   and $Q\geq \frac{64c_7^2 D_{\text{PEPO}}}{c_0^2c_1^2}$, we get
\begin{eqnarray}
    \label{ProofOfSmallBallMethod_For_Haar_Measurement_final}
    &\!\!\!\!\!\!\!\!&\inf_{\vrho\in\ddot\setX_{\mR}}\bigg(\sum_{i=1}^Q\sum_{k=1}^K|\< \vphi_{i,k}\vphi_{i,k}^\dagger, \vrho \>  |^2   \bigg)^{\frac{1}{2}} \nonumber\\
    &\!\!\!\!\geq\!\!\!\!& \xi \sqrt{QK} H_{\xi}(\ddot\setX_{\mR}) -2W(\ddot\setX_{\mR}) -t\xi\sqrt{K}\nonumber\\
    &\!\!\!\!\geq\!\!\!\!&\frac{c_0c_1\sqrt{QK}}{d^n}\|\vrho\|_F-2c_7 \frac{\sqrt{KD_{\text{PEPO}}} }{d^n}\|\vrho\|_F-\frac{c_1 \sqrt{QK}}{d^n}\|\vrho\|_F\nonumber\\
    &\!\!\!\!\geq\!\!\!\!&\frac{c_0c_1\sqrt{QK}}{4\cdot d^{n}}\|\vrho\|_F
\end{eqnarray}
with probability $1-e^{-\Omega(Q)}$.

For the PEPS set  $\wh\setX_{\mT}$, we can further construct a subset $\wt\setX_{\mT} = \{\vrho^{(h)}\in\C^{d^{n}\times d^{n}}: \vrho^{(h)} = \vu^{(h)} {\vu^{(h)}}^\dagger, \|\vu^{(h)}\|_2=1,  \vu^{(h)} = [\mU_{\ul{a}\, \ul{b}}^{(h_{\ul{a}\,\ul{b}})}]_{a=1,b=1}^{q,p}, \text{rank}(\vu^{(h)}) = \mT,\|\mU_{\ul{a}\, \ul{b}}^{(h_{\ul{a}\,\ul{b}})}\|_F \leq H_{ab}, 1\leq h_{\ul{a}\,\ul{b}}\leq |\wt\setX_{\mT}|, a\in[q], b\in[p] \}\subset \ddot\setX_{\mT} = \{\vrho\in\C^{d^{n}\times d^{n}}: \vrho = \vu \vu^\dagger, \vu = [\mU_{\ul{a}\, \ul{b}}]_{a=1,b=1}^{q,p},  \text{rank}(\vu) = \mT,\|\mU_{\ul{a}\, \ul{b}}\|_F \leq H_{ab}, a\in[q], b\in[p]  \}$ such that
\begin{eqnarray}
    \label{covering number for the PEPS subset}
    |\widetilde\setX_{\mT} | \leq \Pi_{a = 1}^{q}\Pi_{b = 1}^{p} \bigg(\frac{2+\epsilon_{ab}}{\epsilon_{ab}}\bigg)^{ d t_{\ul{a}\,\ul{b-1},\,\ul{a}\,\ul{b}}\;
    t_{\ul{a-1}\,\ul{b},\,\ul{a}\,\ul{b}}\;
    t_{\ul{a}\,\ul{b},\,\ul{a}\,\ul{b+1}}\;
    t_{\ul{a}\,\ul{b},\,\ul{a+1}\,\ul{b}}},
\end{eqnarray}
where $\epsilon_{ab}$ is defined by  $    \sup_{\mU_{\ul{a}\,\ul{b}}: \|\mU_{\ul{a}\,\ul{b}}\|_F\leq H_{ab}}\min_{h_{\ul{a}\,\ul{b}}\leq |\wt\setX_{\mT}|} \|\mU_{\ul{a}\,\ul{b}}-\mU_{\ul{a}\,\ul{b}}^{(h_{\ul{a}\,\ul{b}})}\|_F\leq H_{ab}\epsilon_{ab}$.

Using the decomposition $\vrho - \vrho^{(h)} = \vu\vu^\dagger - {\vu^{(h)}}{\vu^{(h)}}^\dagger = (\vu - {\vu^{(h)}})\vu^\dagger + {\vu^{(h)}}(\vu - {\vu^{(h)}})^\dagger$ and applying the same expansion as in \eqref{ProofOf<H,X>PEPO upper}, we can follow an analogous line of analysis using the covering argument. Then, with probability at least $1-e^{-\Omega(Q)}$, we obtain
\begin{eqnarray}
    \label{Small_Method_ROHaar_Measurement_Conclusion_Theorem PEPO appendix}
    \|\calA(\vrho)\|_2^2  = \sum_{i=1}^Q\sum_{k=1}^{d^n} |\<\vphi_{i,k} \vphi_{i,k}^\dagger, \vrho  \>|^2   \geq \Omega\bigg(\frac{Q}{d^{n}}\|\vrho\|_F^2\bigg)
\end{eqnarray}
provided that
$Q \geq \Omega(D_{\text{PEPS}})$, where $D_{\text{PEPS}} = \sum_{a=1}^{q}\sum_{b = 1}^{p} d \, t_{\ul{a}\,\ul{b-1},\,\ul{a}\,\ul{b}} \, t_{\ul{a-1}\,\ul{b},\,\ul{a}\,\ul{b}} \, t_{\ul{a}\,\ul{b},\,\ul{a}\,\ul{b+1}} \, t_{\ul{a}\,\ul{b},\,\ul{a+1}\,\ul{b}} \, \log(1+qp)$.
This completes the proof.

\end{proof}

\section{Proof of \Cref{final conclusion of 3-designs theorem PEPS,final conclusion of 3-designs theorem}}
\label{Proof of 2designs M}

\begin{proof}
In the proof, without loss of generality, we assume that
\begin{eqnarray}
    \label{The set of the PEPS in proof}
\text{PEPSs: }  \wh{\vrho}, \vrho^\star \in   \setX_{\mT} &\!\!\!\!=\!\!\!\!& \{\vrho\in\C^{d^{n}\times d^{n}}: \vrho = \vu \vu^\dagger, \|\vu\|_2=1, n = qp, \vu = [\mU_{\ul{a}\, \ul{b}}]_{a=1,b=1}^{q,p},\nonumber\\
     &\!\!\!\!\!\!\!\!& \text{rank}(\vu) = \mT,\|\mU_{\ul{a}\, \ul{b}}\|_F \leq H_{ab}, a\in[q], b\in[p] \},\\
    \label{The set of the PEPO in proof}
 \text{PEPOs: }  \wh{\vrho}, \vrho^\star \in   \setX_{\mR} &\!\!\!\!=\!\!\!\!& \{\vrho\in\C^{d^{n}\times d^{n}}: \vrho = \vrho^\dagger, \trace(\vrho) = 1, n = qp, \vrho = [\mX_{\ul{a}\, \ul{b}}]_{a=1,b=1}^{q,p},\nonumber\\
      &\!\!\!\!\!\!\!\!&\text{rank}(\vrho) = \mR,\|\mX_{\ul{a}\, \ul{b}}\|_F \leq L_{ab}, a\in[q], b\in[p]  \}.
\end{eqnarray}
While bounded factors are introduced solely to simplify the subsequent analysis, we note that in practice $\{ H_{ab} \}$ and $\{ L_{ab} \}$ can be arbitrary. Moreover, since a PEPS can be regarded as a special case of a PEPO, we first establish the result in the general PEPO setting and then restrict it to PEPS.

Now, we start by upper-bounding $\<  \veta, \calA(\wh{\vrho} - \vrho^\star) \>$. Towards that goal, we first rewrite this term as
\begin{eqnarray}
    \label{upper bound of entire variable_SIC_POVM}
    \<  \veta, \calA(\wh{\vrho} - \vrho^\star) \> = \sum_{k=1}^{K} \eta_{k}\<\mA_{k}, (\wh{\vrho} - \vrho^\star) \> \leq  \max_{\vrho\in \check\setX_{2\mR} } \sum_{k=1}^{K} \eta_{k}\<\mA_{k}, \vrho \>,
\end{eqnarray}
with
\begin{eqnarray}
    \label{The set of the PEPO normalized another}
    \check\setX_{2\mR} &\!\!\!\!=\!\!\!\!& \{\vrho\in\C^{d^n\times d^n}: \vrho = \vrho^\dagger, \trace(\vrho) = 0,  \vrho = [\mX_{\ul{a} \, \ul{b}}]_{a=1,b=1}^{q,p}, \nonumber\\
     &\!\!\!\!\!\!\!\!&\text{rank}(\vrho) = 2\mR, \|\mX_{\ul{a} \, \ul{b}}\|_F \leq 2L_{ab}, a\in[q], b\in[p]  \}.
\end{eqnarray}

The rest of the proof is to bound $\max_{\vrho\in \check\setX_{2\mR} } \sum_{k=1}^{K} \eta_{k}\<\mA_{k}, \vrho \>$, which will be achieved by using a covering argument.
First, when conditioned on $\{\mA_{k},\forall k \}$, we consider any fixed value of $\widetilde\vrho\in \{\wt{\check\setX}_{2\mR}\} \cap\{\widetilde\vrho: \|\widetilde\vrho\|_F\leq  \|\wh{\vrho} - \vrho^\star\|_F \} \subset \check\setX_{2\mR}$ and  apply \Cref{General bound of multinomial distribution Q cases} to establish a concentration inequality for the expression $\sum_{k=1}^{K} \eta_{k}\<\mA_{k}, \widetilde\vrho \>$. Specifically, we have
\begin{eqnarray}
    \label{Concentration inequality of sum of multinomial for t-designs}
    \P{\sum_{k=1}^{K} \eta_{k}\<\mA_{k}, \widetilde\vrho \> \geq t}&\!\!\!\!\leq\!\!\!\!&  e^{-\frac{Mt}{4\max_{k}|\<\mA_{k}, \widetilde\vrho \>|}\min\bigg\{1, \frac{\max_{k}|\<\mA_{k}, \widetilde\vrho \>| t}{4\sum_{k = 1}^{K}\<\mA_{k}, \widetilde\vrho \>^2p_{k} } \bigg\}  } +  e^{-\frac{Mt^2}{8\sum_{k = 1}^{K}\<\mA_{k}, \widetilde\vrho \>^2p_{k} }}\nonumber\\
    &\!\!\!\!\leq\!\!\!\!&  e^{-\frac{Mt^2}{16\sum_{k = 1}^{K}\<\mA_{k}, \widetilde\vrho \>^2p_{k} }} +  e^{-\frac{Mt^2}{8\sum_{k = 1}^{K}\<\mA_{k}, \widetilde\vrho \>^2p_{k} }},
\end{eqnarray}
where without loss of generality, we assume that $\frac{\max_{k}|\<\mA_{k}, \widetilde\vrho \>| t}{4\sum_{k = 1}^{K}\<\mA_{k}, \widetilde\vrho \>^2p_{k} }\leq 1$ in the last line.

Based on \Cref{property of 3-designs}, for $\wt\vrho\in\wt{\check\setX}_{2\mR} \subset \check\setX_{2\mR}$ in \eqref{The set of the PEPO normalized another}, we can obtain
\begin{eqnarray}
    \label{3 terms in the 3 designs conclusion}
    \sum_{k = 1}^{K}\<\mA_k, \wt\vrho \>^2p_{k} = O\bigg(\frac{\|\wh{\vrho} - \vrho^\star\|_F^2}{K^2}\bigg).
\end{eqnarray}
So we can rewrite \eqref{Concentration inequality of sum of multinomial for t-designs} as follows:
\begin{eqnarray}
    \label{Concentration inequality of sum of multinomial for 2-designs: final}
    \P{\sum_{k=1}^{K} \eta_{k}\<\mA_{k}, \widetilde\vrho \> \geq t}\leq  2e^{-\frac{K^2(d^n+1)Mt^2}{16 \cdot d^n \|\wh{\vrho} - \vrho^\star\|_F^2}}.
\end{eqnarray}

Following the same derivation of \eqref{ProofOfSmallBallMethod_For_Haar_Measurement_W_1}, there exists an $\epsilon$-net $\wt{\check\setX}_{2\mR}$ of $\check\setX_{2\mR}$ in \eqref{The set of the PEPO normalized another}  such that
\begin{eqnarray}
    \label{Upper bound of 2-deisgns covering arument1}
    &\!\!\!\!\!\!\!\!&\P{\max_{\vrho\in \check\setX_{2\mR}} \sum_{k=1}^{K} \eta_{k}\<\mA_{k}, \vrho \> \geq t  }\nonumber\\
    &\!\!\!\!\!\!\!\!&\hspace{-0.4cm} \leq\P{\max_{\widetilde\vrho\in \wt{\check\setX}_{2\mR}} \sum_{k=1}^{K} \eta_{k}\<\mA_{k}, \widetilde\vrho \> \geq \frac{t}{2}}\nonumber\\
    &\!\!\!\!\!\!\!\!&\hspace{-0.4cm}\leq\bigg(4qp+1\bigg)^{\ol D_{\text{PEPO}}  } e^{-\frac{K^2(d^n+1)Mt^2}{64 \cdot d^n \|\wh{\vrho} - \vrho^\star\|_F^2} + \log 2}\nonumber\\
    &\!\!\!\!\!\!\!\!&\hspace{-0.4cm}\leq e^{-\frac{K^2(d^n+1)Mt^2}{64 \cdot d^n \|\wh{\vrho} - \vrho^\star\|_F^2} +C D_{\text{PEPO}} + \log 2},
\end{eqnarray}
where $\ol D_{\text{PEPO}} = \sum_{a=1}^{q}\sum_{b = 1}^{p}d^2 r_{\ul{a}\,\ul{b-1},\,\ul{a}\,\ul{b}} \;
r_{\ul{a-1}\,\ul{b},\,\ul{a}\,\ul{b}} \;
r_{\ul{a}\,\ul{b},\,\ul{a}\,\ul{b+1}} \;
r_{\ul{a}\,\ul{b},\,\ul{a+1}\,\ul{b}}$ in the second inequality. In addition, $D_{\text{PEPO}} = \ol D_{\text{PEPO}} \log(1+qp)$ and $C$ is a universal constant in the last line. By taking $t = \frac{c_1  \sqrt{d^n  D_{\text{PEPO}}} }{K\sqrt{(d^n +1) M}}\|\wh{\vrho} - \vrho^\star\|_F$, we can further derive
\begin{eqnarray}
    \label{Concentration inequality of sum of multinomial for 2-designs: conclusion1}
    \P{\sum_{k=1}^{K} \eta_{k}\<\mA_{k}, \vrho \> \leq  \frac{c_1  \sqrt{d^n  D_{\text{PEPO}}} \|\wh{\vrho} - \vrho^\star\|_F}{K\sqrt{(d^n +1) M}}}\geq 1 - e^{- c_2 D_{\text{PEPO}}},
\end{eqnarray}
where $c_1$ and $c_2$ are constants.

Combing \eqref{whrho and rho^star relationship_1}, \eqref{The l2 norm of A(rho) 2_designs} and \eqref{Concentration inequality of sum of multinomial for 2-designs: conclusion1}, we have
\begin{eqnarray}
    \label{Concentration inequality of sum of multinomial for 2-designs: conclusion2}
    \|\wh{\vrho} - \vrho^\star\|_F \leq \frac{c_3  \sqrt{  D_{\text{PEPO}}} }{\sqrt{ M}},
\end{eqnarray}
where $c_3$ is a constant.

Combining the discussion in the final part of \Cref{Proof of Small ball PEPO} with the preceding analysis with the previous analysis, we can directly derive the corresponding result for PEPS as follows:
\begin{eqnarray}
    \label{Concentration inequality of sum of multinomial for 2-designs: conclusion2 PEPS}
    \|\wh{\vrho} - \vrho^\star\|_1\leq 2\|\wh{\vrho} - \vrho^\star\|_F \leq \frac{c_4  \sqrt{  D_{\text{PEPS}}} }{\sqrt{ M}},
\end{eqnarray}
where $c_4$ is a constant and $D_{\text{PEPS}} = \sum_{a=1}^{q}\sum_{b = 1}^{p} d \,
t_{\ul{a}\,\ul{b-1},\,\ul{a}\,\ul{b}} \,
t_{\ul{a-1}\,\ul{b},\,\ul{a}\,\ul{b}} \,
t_{\ul{a}\,\ul{b},\,\ul{a}\,\ul{b+1}} \,
t_{\ul{a}\,\ul{b},\,\ul{a+1}\,\ul{b}}
\log(1 + qp)$.

\end{proof}

\section{Proof of \Cref{final conclusion of Haar theorem PEPS,final conclusion of Haar theorem}}
\label{Proof of final conclusion of Haar theorem}

\begin{proof}
In the proof, without loss of generality, we assume that $\wh{\vrho}, \vrho^\star$ belong to $\setX_{\mT}$ for PEPSs \eqref{The set of the PEPS in proof} and to $\setX_{\mR}$ for PEPOs \eqref{The set of the PEPO in proof}. Noting that PEPS can be regarded as a special case of PEPO, we first establish the result in the general PEPO setting and then specialize it to PEPS.

According to \Cref{Small_Method_ROHaar_Measurement PEPO}, given $Q\geq \Omega( \sum_{a=1}^{q}\sum_{b = 1}^{p} d^2 r_{\ul{a}\,\ul{b-1},\ul{a}\,\ul{b}} \, r_{\ul{a-1}\,\ul{b},\ul{a}\,\ul{b}} \, r_{\ul{a}\,\ul{b},\ul{a}\,\ul{b+1}} \, r_{\ul{a}\,\ul{b},\ul{a+1}\,\ul{b}} \log(1+qp) )$, with probability at least $1-e^{-c_1Q}$, we have $\|\calA(\wh{\vrho} - \vrho^\star)\|_2^2\geq \Omega( \frac{Q}{d^{n}}\|\wh{\vrho} - \vrho^\star\|_F^2)$. Next, we will upper bound $\<  \veta, \calA(\wh{\vrho} - \vrho^\star) \>$. Towards that goal, we first rewrite this term as
\begin{eqnarray}
    \label{upper bound of entire variable_haar}
    \<  \veta, \calA(\wh{\vrho} - \vrho^\star) \> &\!\!\!\!\!\!=\!\!\!\!\!\!& \sum_{i=1}^Q\sum_{k=1}^{d^n} \eta_{i,k}\vphi_{i,k}^\dagger (\wh{\vrho} - \vrho^\star) \vphi_{i,k}\nonumber\\
    &\!\!\!\!\!\!\leq\!\!\!\!\!\!&   \max_{\vrho\in \check\setX_{2\mR}} \sum_{i=1}^Q\sum_{k=1}^{d^n} \eta_{i,k}\vphi_{i,k}^\dagger \vrho \vphi_{i,k},
\end{eqnarray}
where $\check\setX_{2\mR}$ is defined in \eqref{The set of the PEPO normalized another}.

The rest of the proof is to bound $\max_{\vrho\in \check\setX_{2\mR}} \sum_{i=1}^Q\sum_{k=1}^{d^n} \eta_{i,k}\vphi_{i,k}^\dagger \vrho \vphi_{i,k}$, which will be achieved by using a covering argument.
First, when conditioned on $\{\vphi_{i,k},\forall i,k \}$, we consider any fixed value of $\widetilde\vrho\in \{\wt{\check\setX}_{2\mR}\} \cap\{\widetilde\vrho: \|\widetilde\vrho\|_F\leq  \|\wh{\vrho} - \vrho^\star\|_F \} \subset \check\setX_{2\mR}$ and  apply \Cref{General bound of multinomial distribution Q cases} to establish a concentration inequality for the expression $\sum_{i=1}^Q\sum_{k=1}^{d^n} \eta_{i,k}\vphi_{i,k}^\dagger \widetilde\vrho \vphi_{i,k}$.
Following the proof in \cite[Appendix E-A]{qin2024quantum}, we denote the event $F:=\{ \max_{i,k} |\vphi_{i,k}^\dagger\wt\vrho\vphi_{i,k}| \lesssim \frac{\log Q + n\log d}{d^{\nqbit }}\|\wh{\vrho} - \vrho^\star\|_F$ $\}$ which holds with probability $\P{F} = 1-e^{-c_2 (\log Q + n\log d) }$ and then have
\begin{align}
    \label{tail function of of entire variable_haar original}
    \P{ \sum_{i=1}^Q\sum_{k=1}^{d^n} \eta_{i,k}\vphi_{i,k}^\dagger \widetilde\vrho \vphi_{i,k} \geq t \bigg| F  }\leq 2e^{-\frac{d^{2n}Mt^2}{c_3Q(\log Q + n\log d)^2\|\wh{\vrho} - \vrho^\star\|_F^2}},
\end{align}
where $c_2$ and $c_3$ are positive constants.

Following the same analysis as \eqref{ProofOfSmallBallMethod_For_Haar_Measurement_W_1}, there exists an $\epsilon$-net $\wt{\check\setX}_{2\mR}$ of $\check\setX_{2\mR}$  such that
\begin{eqnarray}
    \label{Upper bound of entire variable_haar covering arument1}
    &\!\!\!\!\!\!\!\!&\P{\max_{\vrho\in \check\setX_{2\mR}} \sum_{i=1}^Q\sum_{k=1}^{d^n} \eta_{i,k}\vphi_{i,k}^\dagger \vrho \vphi_{i,k}\geq t \bigg| F  }\nonumber\\
    &\!\!\!\!\!\!\!\!&\hspace{-0.4cm} \leq\P{\max_{\widetilde\vrho\in \wt{\check\setX}_{2\mR}} \sum_{i=1}^Q\sum_{k=1}^{d^n} \eta_{i,k}\vphi_{i,k}^\dagger \widetilde\vrho \vphi_{i,k}\geq \frac{t}{2} \bigg| F}\nonumber\\
    &\!\!\!\!\!\!\!\!&\hspace{-0.4cm}\leq\bigg(4qp+1\bigg)^{\ol D_{\text{PEPO}}  } e^{-\frac{d^{2n}Mt^2}{c_3Q(\log Q + n\log d)^2\|\wh{\vrho} - \vrho^\star\|_F^2} + \log 2}\nonumber\\
    &\!\!\!\!\!\!\!\!&\hspace{-0.4cm}\leq e^{-\frac{d^{2n}Mt^2}{c_3Q(\log Q + n\log d)^2\|\wh{\vrho} - \vrho^\star\|_F^2} +C D_{\text{PEPO}} + \log 2},
\end{eqnarray}
where $\ol D_{\text{PEPO}}  = \sum_{a=1}^{q}\sum_{b = 1}^{p}d^2 r_{\ul{a}\,\ul{b-1},\ul{a}\,\ul{b}} \, r_{\ul{a-1}\,\ul{b},\ul{a}\,\ul{b}} \, r_{\ul{a}\,\ul{b},\ul{a}\,\ul{b+1}} \, r_{\ul{a}\,\ul{b},\ul{a+1}\,\ul{b}}$ in the second inequality. In addition, $D_{\text{PEPO}} = \ol D_{\text{PEPO}} \log(1+qp)$ and $C$ is a universal constant in the last line. By taking $t=\hat{t} \triangleq \frac{c_4  \sqrt{QD_{\text{PEPO}}}(\log Q + n \log d)}{\sqrt{M}d^{n}}\|\wh{\vrho} - \vrho^\star\|_F$ in the above equation, we further obtain
\begin{align}
    \label{Upper bound of entire variable_haar covering arument another1}
    \hspace{-0.1cm}\P{\max_{\vrho\in \check\setX_{2\mR}} \sum_{i=1}^Q\sum_{k=1}^{d^n} \eta_{i,k}\vphi_{i,k}^\dagger \vrho \vphi_{i,k}\leq \hat{t} \bigg| F  }\geq 1-e^{-c_5 D_{\text{PEPO}}},
\end{align}
where $c_4$ and $c_5$ are  constants.

Now plugging in the probability for the event $F$, we finally get
\begin{eqnarray}
    \label{Upper bound of entire variable_haar covering arument another 11}
    &\!\!\!\!\!\!\!\!&\P{\max_{\vrho\in \check\setX_{2\mR}} \sum_{i=1}^Q\sum_{k=1}^{d^n} \eta_{i,k}\vphi_{i,k}^\dagger \vrho \vphi_{i,k}\leq \hat{t}}\nonumber\\
    &\!\!\!\!\geq\!\!\!\!& \P{\max_{\vrho\in \check\setX_{2\mR}} \sum_{i=1}^Q\sum_{k=1}^{d^n} \eta_{i,k}\vphi_{i,k}^\dagger \vrho \vphi_{i,k}\leq \hat{t} \cap F  }\nonumber\\
    &\!\!\!\!=\!\!\!\!&\P{F} \P{\max_{\vrho\in \check\setX_{2\mR}} \sum_{i=1}^Q\sum_{k=1}^{d^n} \eta_{i,k}\vphi_{i,k}^\dagger \vrho \vphi_{i,k}\leq  \hat{t} \bigg| F  }\nonumber\\
    &\!\!\!\!\geq\!\!\!\!& (1-e^{-c_2 \log(Qd^n) })(1-e^{-c_5 D_{\text{PEPO}}})\nonumber\\
    &\!\!\!\!\geq\!\!\!\!& 1- e^{-c_2 (\log Q + n\log d) } - e^{-c_5 D_{\text{PEPO}}}.
\end{eqnarray}

Hence, for $\<  \veta, \calA(\wh{\vrho} - \vrho^\star) \>$ in \eqref{upper bound of entire variable_haar}, the above equation implies that with probability at least $1- e^{-c_2 (\log Q + n\log d) } - e^{-c_5 D_{\text{PEPO}}}$,
\begin{eqnarray}
    \label{upper bound of entire variable_haar conclusion1}
    \<  \veta, \calA(\wh{\vrho} - \vrho^\star) \> \leq  \frac{c_4  \sqrt{QD_{\text{PEPO}}}(\log Q + n \log d)}{\sqrt{M}d^{n}}\|\wh{\vrho} - \vrho^\star\|_F.
\end{eqnarray}
Combining this together with $\|\calA(\wh{\vrho} - \vrho^\star)\|_2^2 \geq \Omega( \frac{Q}{d^{n}}\|\wh{\vrho} - \vrho^\star\|_F^2)$, we finally obtain
\begin{eqnarray}
    \label{statistical error bound conclusion1}
    \|\wh{\vrho} - \vrho^\star\|_F\leq O\bigg( \frac{ \sqrt{D_{\text{PEPO}}}(\log Q + n \log d)}{\sqrt{MQ}} \bigg).
\end{eqnarray}

Combining the discussion in the final part of \Cref{Proof of Small ball PEPO} with the preceding analysis with the previous analysis, we immediately obtain the corresponding statistical error bound for PEPS:
\begin{eqnarray}
    \label{statistical error bound conclusion1 PEPS}
    \|\wh{\vrho} - \vrho^\star\|_1\leq 2\|\wh{\vrho} - \vrho^\star\|_F\leq O\bigg( \frac{ \sqrt{D_{\text{PEPS}}}(\log Q + n \log d)}{\sqrt{MQ}} \bigg),
\end{eqnarray}
where  $D_{\text{PEPS}} = \sum_{a=1}^{q}\sum_{b = 1}^{p} d \,
t_{\ul{a}\,\ul{b-1},\,\ul{a}\,\ul{b}} \,
t_{\ul{a-1}\,\ul{b},\,\ul{a}\,\ul{b}} \,
t_{\ul{a}\,\ul{b},\,\ul{a}\,\ul{b+1}} \,
t_{\ul{a}\,\ul{b},\,\ul{a+1}\,\ul{b}}
\log(1 + qp)$.

\end{proof}

\section{Auxiliary Materials}
\label{Auxiliary materials}

\begin{lemma}(\cite[Lemma 2]{qin2024quantum})
\label{Small Ball Method_abs_New}
Consider a fixed set $E\subset \C^{D}$. Let $\{\vb_{1},\!\dots,\!\vb_K\!\}$ represent a collection of random columns in $\C^{D}$, which may not be mutually independent. Additionally, let $\{\vb_{i,1},\dots,\vb_{i,K}\}_{i=1}^Q$ denote a set of independent copies of $\{\vb_{1},\dots,\vb_K\}$.
Introduce the marginal tail function
\begin{eqnarray}
    \label{marginal_tail_function_New}
    H_{\xi}(E;\vb) = \inf_{\vu\in E} \frac{1}{K}\sum_{k=1}^K\mathbb{P} \{|\<\vb_{k},\vu \>|\geq \xi  \},\ for \ \xi>0.
\end{eqnarray}
Let $\epsilon_i,i=1,\dots,Q$ be independent Rademacher random variables, independent from everything else, and define the
mean empirical width of the set:
\begin{align}
    \label{mean empirical width_New}
    W_{QK}(E;\vb) = \E \sup_{\vu\in E}\<\vh ,\vu\>, \text{where} \ \vh =\frac{1}{\sqrt{QK}}\sum_{i=1}^Q\sum_{k=1}^K \epsilon_{i}\vb_{i,k}.
\end{align}

Then, for any $\xi >0$ and $t >0$
\begin{eqnarray}
    \label{Final_Conclusion_New}
    \hspace{-0.5cm}\inf_{\vu\in E}\bigg(\sum_{i=1}^Q\sum_{k=1}^K|\<\vb_{i,k}, \vu  \>|^2  \bigg)^{\frac{1}{2}} \geq \xi\sqrt{QK}H_{\xi}(E;\vb)   -2W_{QK}(E;\vb) -t \xi\sqrt{K},
\end{eqnarray}
with probability at least $1-e^{-\frac{t^2}{2}}$.

\end{lemma}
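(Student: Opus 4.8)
The statement is the block-structured form of Mendelson's small-ball method (Koltchinskii--Mendelson, Tropp), specialized to the setting where the $K$ columns inside each of the $Q$ independent copies may be mutually dependent. The plan is to adapt the standard small-ball argument to this block structure. First I would record the elementary pointwise lower bound that converts the $\ell_2$ quantity into a count of ``large'' coordinates: for every $\vu\in E$, Cauchy--Schwarz gives $\big(\sum_{i,k}|\langle\vb_{i,k},\vu\rangle|^2\big)^{1/2}\ge \tfrac{1}{\sqrt{QK}}\sum_{i,k}|\langle\vb_{i,k},\vu\rangle|\ge \tfrac{\xi}{\sqrt{QK}}\sum_{i,k}\mathbbm{1}\{|\langle\vb_{i,k},\vu\rangle|\ge\xi\}$. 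Taking the infimum over $\vu\in E$ reduces the theorem to a uniform lower bound on the empirical count $N(\vu):=\sum_{i,k}\mathbbm{1}\{|\langle\vb_{i,k},\vu\rangle|\ge\xi\}$.

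Second, I would split $N$ into mean and fluctuation, $\inf_{\vu}N(\vu)\ge \inf_{\vu}\E N(\vu)-\sup_{\vu}\big(\E N(\vu)-N(\vu)\big)$. Since the $Q$ copies are identically distributed, $\E N(\vu)=Q\sum_{k}\P{|\langle\vb_k,\vu\rangle|\ge\xi}=QK\cdot\tfrac1K\sum_k\P{|\langle\vb_k,\vu\rangle|\ge\xi}$, whose infimum over $E$ is exactly $QK\,H_\xi(E;\vb)$; reinstating the prefactor $\xi/\sqrt{QK}$ produces the main term $\xi\sqrt{QK}H_\xi$. It then remains to control the centered supremum $Z:=\sup_{\vu}\big(\E N(\vu)-N(\vu)\big)$.

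Third, I would bound $\E Z$ by the mean empirical width. Because the $Q$ copies are independent, standard symmetrization introduces Rademacher signs $\epsilon_i$ \emph{shared across each block}, giving $\E Z\le 2\,\E\sup_{\vu}\sum_i\epsilon_i\sum_k\mathbbm{1}\{|\langle\vb_{i,k},\vu\rangle|\ge\xi\}$. To reach the linear functional appearing in $W_{QK}$, I replace the indicator by a Lipschitz surrogate $\psi$ with $\mathbbm{1}\{|a|\ge\xi\}\le\psi(a)$ and Lipschitz constant $1/\xi$, and apply the contraction principle on each block. This yields $\E Z\le \tfrac{2}{\xi}\sqrt{QK}\,W_{QK}(E;\vb)$, so after the $\xi/\sqrt{QK}$ prefactor the $\xi$ and $\sqrt{QK}$ cancel and exactly the term $-2W_{QK}$ remains (the factor-of-two shift in the threshold induced by $\psi$ is absorbed into $H_\xi$).

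Finally, for the high-probability statement I would use bounded differences over the $Q$ independent copies: replacing a single block $\{\vb_{i,k}\}_{k=1}^K$ alters $N(\vu)$, hence $Z$, by at most $K$, so McDiarmid's inequality gives $Z\le \E Z+\tfrac12 tK\sqrt{Q}$ with probability at least $1-e^{-t^2/2}$; applying the prefactor $\xi/\sqrt{QK}$ turns $K\sqrt{Q}$ into the stated $t\xi\sqrt{K}$. The main obstacle is precisely the within-block dependence: since the $K$ columns of a single copy are not independent, one cannot symmetrize or contract over the index $k$, which is what forces the shared sign $\epsilon_i$, restricts the concentration to the $Q$ independent blocks, and is responsible for the $\sqrt{K}$ (rather than $\sqrt{QK}$) scaling of the deviation term. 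A secondary technical point is the bookkeeping in the contraction step, where the soft-indicator relaxation must be arranged so that its expectation is still lower-bounded through $H_\xi$ and its Lipschitz constant is exactly $1/\xi$, so that all $\xi$-dependence cancels against the pointwise prefactor.
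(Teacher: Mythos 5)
A preliminary remark: this paper never proves the statement you were asked about --- it is imported verbatim from \cite[Lemma 2]{qin2024quantum} --- so your proposal can only be measured against the block small-ball argument (Mendelson, Koltchinskii--Mendelson, Tropp) that the cited lemma adapts. Your skeleton is indeed that argument, and your concentration step is sound: changing one block alters the count by at most $K$, so McDiarmid over the $Q$ independent blocks gives a deviation $\tfrac12 tK\sqrt{Q}$, which the prefactor $\xi/\sqrt{QK}$ converts into $\tfrac12 t\xi\sqrt{K}\le t\xi\sqrt{K}$. But two of your steps do not survive scrutiny. The smaller problem is the order of smoothing and symmetrization: you symmetrize the \emph{hard-indicator} process and only then, inside the Rademacher average, replace $\mathbbm{1}\{|a|\ge\xi\}$ by a dominating surrogate $\psi\ge\mathbbm{1}$. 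Rademacher averages are not monotone under pointwise domination (a rich $\{0,1\}$-valued class dominated by the constant function $1$ has average $\Theta(Q)$ versus $0$), so this replacement is unjustified. The correct order is the opposite: lower bound the count by a Lipschitz \emph{minorant} $\chi\le\mathbbm{1}\{|a|\ge\xi\}$ (zero for $|a|\le\xi$, one for $|a|\ge 2\xi$) \emph{before} the mean--fluctuation split, so that the process being symmetrized is Lipschitz from the start. Done honestly, this produces the main term $\xi\sqrt{QK}\,H_{2\xi}$; the threshold shift cannot be ``absorbed into $H_\xi$'' as you assert (the stated lemma inherits this looseness from its source, but a proof has to own it).

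The fatal problem is the contraction step. After block symmetrization, the sign $\epsilon_i$ multiplies the whole block sum $\sum_k\chi(\langle\vb_{i,k},\vu\rangle)$, a sum of $K$ Lipschitz functions of $K$ \emph{different} scalars; this is not a contraction of the single scalar $\sum_k\langle\vb_{i,k},\vu\rangle$ that appears in $W_{QK}$, and Talagrand's principle requires one independent sign per contracted coordinate. The comparison your bound $\E Z\le\tfrac{2}{\xi}\sqrt{QK}\,W_{QK}$ rests on, namely
\begin{equation*}
  \E\sup_{\vu\in E}\sum_{i=1}^{Q}\epsilon_i\sum_{k=1}^{K}\chi\big(\langle\vb_{i,k},\vu\rangle\big)
  \;\le\;
  \frac{1}{\xi}\,\E\sup_{\vu\in E}\sum_{i=1}^{Q}\epsilon_i\sum_{k=1}^{K}\langle\vb_{i,k},\vu\rangle ,
\end{equation*}
is false in general. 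Take $Q=1$, $K=2$, $\xi=1$, deterministic $\vb_{1,1}=(1,0)^\top$, $\vb_{1,2}=(0,1)^\top$, and $E=\{(3,1)^\top,\,(2,2)^\top\}$: the block value-pairs are $(3,1)$ and $(2,2)$, so the left-hand side equals $\E\max\{\epsilon\cdot 1,\,\epsilon\cdot 2\}=\tfrac12$, while both linear block sums equal $4$, so the right-hand side is $\E\,[4\epsilon]=0$; the same phenomenon occurs for your surrogate $\psi$. (This does not contradict the lemma itself --- its left-hand side $\ell_2$ quantity is insensitive to the cancellations that $W_{QK}$ sees --- but it does kill this route to proving it.) The natural repairs do not yield the stated bound either: contracting separately for each fixed $k$, which is legitimate since there is then one sign per coordinate $i$, gives $\sum_{k}\E\sup_{\vu}\sum_i\epsilon_i\langle\vb_{i,k},\vu\rangle$, a quantity that dominates $\sqrt{QK}\,W_{QK}$ rather than being dominated by it; and Maurer's vector-contraction inequality produces a width with doubly indexed independent signs $\epsilon_{i,k}$, which is not $W_{QK}$ and cannot be reached by symmetrization precisely because the within-block columns are dependent. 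So the within-block dependence does not merely ``force the shared sign and the $\sqrt{K}$ deviation,'' as your closing paragraph suggests; it breaks the very contraction you rely on, and bridging exactly this gap is what any correct proof of the lemma as stated must accomplish.
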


\begin{lemma} (\cite[Lemma 3]{dall2014accessible})
\label{property of 2-designs}
The integral over the Haar measure in \eqref{the definition of t designs} is given by
\begin{eqnarray} \label{the definition of integral in t designs}
     \int(\vw\vw^\dagger)^{\otimes s} d\vw = \frac{1}{C_{d^n+s-1}^s} P_{\text{Sym}},
\end{eqnarray}
where $P_{\text{Sym}}$ is the projector over the symmetric subspace.
\end{lemma}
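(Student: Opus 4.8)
The plan is to exploit the unitary invariance of the Haar measure together with Schur's lemma, and then fix the overall scalar by a single trace computation. Write $N = d^n$ and set $M = \int (\vw\vw^\dagger)^{\otimes s}\, d\vw$, an operator acting on $(\C^{N})^{\otimes s}$. First I would show that $M$ commutes with $U^{\otimes s}$ for every unitary $U$. This follows from the substitution $\vw \mapsto U\vw$: since the normalized Haar measure on the unit sphere is invariant under unitary transformations, we have $U^{\otimes s} M (U^{\dagger})^{\otimes s} = \int (U\vw\vw^\dagger U^\dagger)^{\otimes s}\, d\vw = \int (\vw\vw^\dagger)^{\otimes s}\, d\vw = M$, so that $[M, U^{\otimes s}] = 0$ for all $U$.

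Next I would observe that $M$ is supported entirely on the symmetric subspace. Indeed, $\vw^{\otimes s}$ is a symmetric tensor, so $(\vw\vw^\dagger)^{\otimes s} = (\vw^{\otimes s})(\vw^{\otimes s})^\dagger = P_{\text{Sym}}\,(\vw\vw^\dagger)^{\otimes s}\,P_{\text{Sym}}$; integrating yields $M = P_{\text{Sym}} M P_{\text{Sym}}$, so $M$ vanishes on the orthogonal complement of the symmetric subspace. Since the representation $U \mapsto U^{\otimes s}$ of the unitary group restricts to an \emph{irreducible} representation on $\text{Sym}^s(\C^{N})$, and $M$ both commutes with this action and is supported on $\text{Sym}^s(\C^{N})$, Schur's lemma forces $M = c\,P_{\text{Sym}}$ for some scalar $c \ge 0$.

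Finally I would pin down $c$ by taking traces. On one hand, $\trace(M) = \int \bigl(\trace(\vw\vw^\dagger)\bigr)^s\, d\vw = \int \|\vw\|_2^{2s}\, d\vw = 1$, because $\vw$ is a unit vector and the measure is normalized. On the other hand, $\trace(c\,P_{\text{Sym}}) = c\cdot \dim \text{Sym}^s(\C^{N}) = c\,C_{N+s-1}^s$. Equating the two gives $c = 1/C_{d^n+s-1}^s$, which is exactly the claimed identity.

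The main subtlety is the irreducibility claim: one must invoke (or establish) that the symmetric power $\text{Sym}^s(\C^{N})$ is an irreducible module for the unitary group, which is the representation-theoretic heart of the argument. An alternative that sidesteps quoting irreducibility directly is to use the full commutant structure from Schur--Weyl duality, namely that the commutant of $\{U^{\otimes s}\}$ is spanned by the permutation operators $\{P_\pi : \pi \in S_s\}$; then write $M = \sum_{\pi} c_\pi P_\pi$ and use $M = P_{\text{Sym}} M P_{\text{Sym}}$ together with the fact that each $P_\pi$ acts as the identity on $\text{Sym}^s(\C^{N})$ to collapse the sum to a single multiple of $P_{\text{Sym}}$, with the scalar again fixed by the same trace normalization.
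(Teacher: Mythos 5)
The paper does not actually prove this lemma: it is imported verbatim as \cite[Lemma 3]{dall2014accessible}, and no argument for it appears anywhere in the text or appendices. So there is no ``paper proof'' to compare against; what you have written is a self-contained derivation of the cited fact. Your argument is correct and is the standard one. The three steps are all sound: (i) unitary invariance of the normalized measure on the complex unit sphere gives $[M,U^{\otimes s}]=0$ for $M=\int(\vw\vw^\dagger)^{\otimes s}\,d\vw$; (ii) $(\vw\vw^\dagger)^{\otimes s}=(\vw^{\otimes s})(\vw^{\otimes s})^\dagger$ is supported on the symmetric subspace, and irreducibility of $\mathrm{Sym}^s(\C^{d^n})$ as a representation of the unitary group (the highest-weight $(s,0,\dots,0)$ irrep) lets Schur's lemma force $M=c\,P_{\text{Sym}}$; (iii) $\trace(M)=\int\|\vw\|_2^{2s}\,d\vw=1$ together with $\dim\mathrm{Sym}^s(\C^{d^n})=C_{d^n+s-1}^s$ pins down $c$. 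Your fallback via Schur--Weyl duality is also valid: every permutation operator $P_\pi$ acts as the identity on the symmetric subspace, so $P_{\text{Sym}}P_\pi P_{\text{Sym}}=P_{\text{Sym}}$ collapses $M=\sum_\pi c_\pi P_\pi$ to a multiple of $P_{\text{Sym}}$, with the same trace normalization. The only external input you must quote in either route (irreducibility of the symmetric power, or the commutant theorem) is classical, so your write-up would make the paper strictly more self-contained than the citation it currently relies on.
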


\begin{lemma}
\cite[Theorem 3.1]{tanoue2022concentration}
\label{Concentration inequality of sum of dependent sub-exp sum}
Suppose that $X = \sum_{i=1}^Q \sum_{k=1}^K w_k X_{i,k}$, where $w_k,k=1,\dots, K$ are constants, and each $X_{i,k}, i=1,\dots,Q, k=1,\dots,K$ is a zero-mean, subexponential random variable with $\|X_{i,k} \|_{\psi_1}$. In addition, the $Q$ multivariate random variables $(X_{i,1},\ldots,X_{i,K}), i = 1,\ldots, Q$ are mutually independent. However, it is possible for the variables $X_{i,k}$ and $X_{i,k'}, k'\neq k$ within each multivariate random variable to be dependent.
Then
\begin{eqnarray}
    \label{Concentration inequality of sum of dependent sub-exp}
    \P{X>t}\leq \begin{cases}
    e^{-\frac{t^2}{4T^2}}, & t\leq 2T^2H,\\
    e^{-\frac{tH}{2}}, & t>2T^2H.
  \end{cases}
\end{eqnarray}
where $T = \sum_{k=1}^Kw_k \sqrt{\sum_{i=1}^Q c_{i,k} \|X_{i,k} \|_{\psi_1}^2 }$ and $H = \bigg(\min_{k}\frac{ \sqrt{\sum_{i=1}^Q c_{i,k} \|X_{i,k} \|_{\psi_1}^2 }}{ \sum_{k=1}^Kw_k \sqrt{\sum_{i=1}^Q c_{i,k} \|X_{i,k} \|_{\psi_1}^2 }}\bigg)\cdot\bigg(\min_{i} \{\frac{d_{i,k}}{\|X_{i,k} \|_{\psi_1}} \} \bigg)$ with constants $c_{i,k}$ and $d_{i,k}$.
\end{lemma}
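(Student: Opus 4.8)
The plan is to run the classical Chernoff / exponential-moment argument, inserting a weighted Hölder step to neutralize the within-block dependence. First I would regroup the sum along its independent coordinate: writing $Y_i=\sum_{k=1}^K w_k X_{i,k}$, the hypothesis guarantees that $Y_1,\dots,Y_Q$ are mutually independent, even though the summands inside a single $Y_i$ may be dependent across $k$. Applying Markov's inequality to $e^{\lambda X}$ together with independence across $i$ gives, for every $\lambda>0$,
\begin{equation*}
\P{X>t}\leq e^{-\lambda t}\,\E\!\left[e^{\lambda X}\right]=e^{-\lambda t}\prod_{i=1}^Q \E\!\left[e^{\lambda Y_i}\right],
\end{equation*}
so the whole problem reduces to bounding the per-block moment generating function $\E[e^{\lambda Y_i}]$.

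The key step, which resolves the main obstacle, is controlling $\E[e^{\lambda Y_i}]$ when the $X_{i,k}$ are dependent in $k$: the usual Bernstein proof factorizes the MGF over all summands, which is not available here. Instead I would introduce weights $\theta_k>0$ with $\sum_{k=1}^K\theta_k=1$ and apply the generalized (weighted) Hölder inequality to $\prod_k e^{\lambda w_k X_{i,k}}$, obtaining
\begin{equation*}
\E\!\left[e^{\lambda Y_i}\right]\leq \prod_{k=1}^K\left(\E\!\left[e^{\lambda w_k X_{i,k}/\theta_k}\right]\right)^{\theta_k}.
\end{equation*}
Each factor now involves only a single subexponential variable, so I can substitute the standard Bernstein-type MGF bound $\E[e^{sX_{i,k}}]\leq \exp(c_{i,k}\|X_{i,k}\|_{\psi_1}^2 s^2)$, valid on $|s|\leq d_{i,k}/\|X_{i,k}\|_{\psi_1}$, with $s=\lambda w_k/\theta_k$; here $c_{i,k}$ is the variance-proxy constant and $d_{i,k}$ the admissible-range constant of the $\psi_1$ condition. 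Carrying the product over $i$ yields, whenever $\lambda$ respects the admissibility range of every invoked bound,
\begin{equation*}
\prod_{i=1}^Q\E\!\left[e^{\lambda Y_i}\right]\leq \exp\!\left(\lambda^2\sum_{k=1}^K\frac{w_k^2\sum_{i=1}^Q c_{i,k}\|X_{i,k}\|_{\psi_1}^2}{\theta_k}\right).
\end{equation*}

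Finally I would optimize, first over the weights and then over $\lambda$. Setting $v_k=w_k\sqrt{\sum_i c_{i,k}\|X_{i,k}\|_{\psi_1}^2}$, the bracketed sum equals $\sum_k v_k^2/\theta_k$, which Cauchy--Schwarz minimizes at $\theta_k=v_k/\sum_j v_j$, producing exactly the variance proxy $T^2=\big(\sum_k v_k\big)^2$ of the statement. With this choice one computes $w_k/\theta_k=T/\sqrt{\sum_i c_{i,k}\|X_{i,k}\|_{\psi_1}^2}$, so the admissibility constraint $|\lambda w_k/\theta_k|\leq d_{i,k}/\|X_{i,k}\|_{\psi_1}$ collapses to $\lambda\leq H$, with $H$ the minimum over $k$ and $i$ of the corresponding ratios --- precisely the $H$ in the statement. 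The tail bound is then $\P{X>t}\leq \exp(-\lambda t+\lambda^2 T^2)$ for $0<\lambda\leq H$, and a routine two-case optimization closes the argument: the unconstrained optimum $\lambda=t/(2T^2)$ is admissible iff $t\leq 2T^2H$, giving the sub-Gaussian tail $e^{-t^2/(4T^2)}$, while for $t>2T^2H$ I clamp $\lambda=H$ and use $T^2H<t/(2H)\cdot H$ to absorb the quadratic term into $e^{-tH/2}$.

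I expect the genuine difficulty to be entirely concentrated in the second paragraph: decoupling the dependent inner summands without losing the correct variance scaling. The weighted Hölder inequality is the crux, and the remaining work is careful bookkeeping to verify that the Cauchy--Schwarz-optimal weights reproduce $T$ exactly and that the MGF admissibility ranges aggregate into precisely $H$; the final Chernoff optimization is then standard. A secondary technical point worth checking is that the single-variable bound $\E[e^{sX_{i,k}}]\leq \exp(c_{i,k}\|X_{i,k}\|_{\psi_1}^2 s^2)$ with its stated range follows from the zero-mean $\psi_1$ hypothesis, which is a classical subexponential estimate and introduces only the universal constants absorbed into $c_{i,k},d_{i,k}$.
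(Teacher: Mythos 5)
Your proof is correct, and the paper itself contains no proof of this lemma---it is imported verbatim as \cite[Theorem 3.1]{tanoue2022concentration}---while your Chernoff argument is precisely the mechanism behind that cited result: factorize the MGF over the independent index $i$, decouple the dependent inner index $k$ via the generalized H\"older inequality with weights $\theta_k$, insert the standard zero-mean subexponential MGF bound (whose variance-proxy and range constants are exactly the $c_{i,k}$ and $d_{i,k}$ appearing in the statement), and optimize the weights by Cauchy--Schwarz to recover $T^2=\bigl(\sum_k v_k\bigr)^2$ and the threshold $H$, followed by the routine two-case choice of $\lambda$. The one point worth recording is that the printed $H$ carries a dangling index $k$ inside $\min_i\{d_{i,k}/\|X_{i,k}\|_{\psi_1}\}$; your derivation shows the true admissibility threshold is $\min_k$ of the product of the two ratios, which dominates the statement's product of separate minima, so restricting to $\lambda\le H$ as printed is only more conservative and the final bound remains valid.
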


\begin{lemma}(\cite[Lemma 14]{qin2024quantum})
\label{General bound of multinomial distribution Q cases}
Suppose that the $Q$ multivariate random variables $(f_{i,k},\dots, f_{i,K}),i=1,\dots,Q$ are mutually independent and follow the multinomial distribution $\operatorname{Multinomial}(M,\vp_i)$  with $\sum_{k=1}^{K}f_{i,k} =M $ and $\vp_i = [p_{i,1}.\dots, p_{i,K}]$, respectively.
Let $a_{i,1},\dots, a_{i,K}$ be fixed. Then, for any $t>0$,
\begin{align}
    \label{General bound of multinomial distribution for all constant Q cases}
    \P{\sum_{i=1}^Q\sum_{k=1}^Ka_{i,k}(\frac{f_{i,k}}{M} - p_{i,k}) > t   }\leq  e^{-\frac{Mt}{4a_{\max}}\! \min\bigg\{\! 1, \frac{a_{\max}t }{4\sum_{i=1}^Q\sum_{k=1}^K a_{i,k}^2p_{i,k}} \! \bigg\}}\! + \! e^{-\frac{Mt^2 }{8\sum_{i=1}^Q\sum_{k=1}^K a_{i,k}^2p_{i,k}}},
\end{align}
where $a_{\max} = \max_{i,k}|a_{i,k}|$.
\end{lemma}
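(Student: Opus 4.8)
The plan is to reduce the weighted sum of centered multinomial counts to a sum of $QM$ \emph{independent} bounded random variables and then invoke a Bernstein-type inequality. The key device is the multinoulli decomposition: for each $i\in[Q]$, realize the count vector $(f_{i,1},\dots,f_{i,K})\sim\operatorname{Multinomial}(M,\vp_i)$ as $f_{i,k}=\sum_{m=1}^M\mathbbm 1\{X_{i,m}=k\}$, where $X_{i,1},\dots,X_{i,M}$ are i.i.d.\ categorical draws with $\P{X_{i,m}=k}=p_{i,k}$. Writing $\mu_i=\sum_{k=1}^K a_{i,k}p_{i,k}$ and $Z_{i,m}=a_{i,X_{i,m}}$, the target sum becomes
\begin{align*}
\sum_{i=1}^Q\sum_{k=1}^K a_{i,k}\Big(\frac{f_{i,k}}{M}-p_{i,k}\Big)=\frac1M\sum_{i=1}^Q\sum_{m=1}^M\big(Z_{i,m}-\mu_i\big).
\end{align*}
The summands $\bar Z_{i,m}:=Z_{i,m}-\mu_i$ are mean zero, uniformly bounded by $|\bar Z_{i,m}|\le 2a_{\max}$, and mutually independent: across $i$ by the stated independence of the multinomials, and across $m$ because within each $i$ the trials are i.i.d. This step is the crux, since it converts the negatively correlated, constraint-coupled counts $(f_{i,1},\dots,f_{i,K})$ into a genuine sum of independent terms.

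Next I would control the variance proxy and apply Bernstein's inequality. A direct computation gives $\operatorname{Var}(\bar Z_{i,m})=\sum_{k}p_{i,k}a_{i,k}^2-\mu_i^2\le\sum_k p_{i,k}a_{i,k}^2$, so the total variance satisfies $\sum_{i,m}\operatorname{Var}(\bar Z_{i,m})\le M\,v$ with $v:=\sum_{i=1}^Q\sum_{k=1}^K a_{i,k}^2 p_{i,k}$, which is exactly the quantity appearing in the statement. Applying the standard Bernstein bound to $W=\sum_{i,m}\bar Z_{i,m}$ at level $s=Mt$, with uniform bound $b=2a_{\max}$ and variance at most $Mv$, yields
\begin{align*}
\P{\sum_{i=1}^Q\sum_{k=1}^K a_{i,k}\Big(\frac{f_{i,k}}{M}-p_{i,k}\Big)>t}\le\exp\!\Big(-\frac{(Mt)^2/2}{Mv+\tfrac23 a_{\max}Mt}\Big).
\end{align*}
Equivalently, one may avoid citing Bernstein and argue directly from the multinomial moment generating function: since the $i$ are independent, $\E e^{\lambda\sum_{i,k} a_{i,k}(f_{i,k}-Mp_{i,k})}=\prod_{i=1}^Q(\sum_k p_{i,k}e^{\lambda(a_{i,k}-\mu_i)})^M$, and each factor is controlled by a standard bounded-moment estimate in terms of $\operatorname{Var}(\bar Z_{i,1})$ and $a_{\max}$, after which a Chernoff optimization over $\lambda$ gives the same inequality.

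Finally I would convert this single-exponential bound into the stated two-term form. Using $x^2/\big(2(\alpha+\beta x)\big)\ge\tfrac14\min\{x^2/\alpha,\ x/\beta\}$ with $x=Mt$, $\alpha=Mv$, $\beta=\tfrac23 a_{\max}$, the exponent is at least $\min\{\tfrac{Mt^2}{4v},\tfrac{3Mt}{8a_{\max}}\}$; this is of the same form as the first summand $\tfrac{Mt}{4a_{\max}}\min\{1,\tfrac{a_{\max}t}{4v}\}$ in the statement, with constants that are no worse (indeed strictly better). Since the two regimes split by the $\min$ are precisely the sub-Gaussian regime (small $t$) and the sub-exponential regime (large $t$), and since the second summand $e^{-Mt^2/(8v)}$ is nonnegative, the claimed bound follows \emph{a fortiori}. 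The main obstacle is the very first step: the within-vector dependence of the multinomial counts $(f_{i,1},\dots,f_{i,K})$, which sum to $M$, rules out any naive coordinate-wise independence argument; the multinoulli decomposition (or, equivalently, the factorization of the multinomial moment generating function across the independent $i$) is exactly what makes a Bernstein argument legitimate, and everything afterward is bookkeeping of constants.
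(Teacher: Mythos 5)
Your proof is correct, and it is worth noting at the outset that this paper contains no proof of the lemma to compare against: the statement is imported verbatim from \cite[Lemma 14]{qin2024quantum}, so your argument stands as a self-contained derivation. Every step checks out. The multinoulli decomposition $f_{i,k}=\sum_{m=1}^{M}\mathbbm{1}\{X_{i,m}=k\}$ is a valid realization of $\operatorname{Multinomial}(M,\vp_i)$, the identity $\sum_{i,k}a_{i,k}(f_{i,k}/M-p_{i,k})=\tfrac{1}{M}\sum_{i,m}(Z_{i,m}-\mu_i)$ is exact, the centered summands are genuinely independent across both $i$ and $m$, bounded by $2a_{\max}$, and have total variance at most $Mv$ with $v=\sum_{i,k}a_{i,k}^2p_{i,k}$. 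One-sided Bernstein then gives exponent $\frac{(Mt)^2/2}{Mv+\frac{2}{3}a_{\max}Mt}$, and your elementary estimate $\frac{x^2}{2(\alpha+\beta x)}\ge\frac{1}{4}\min\{x^2/\alpha,\,x/\beta\}$ yields $\min\bigl\{\tfrac{Mt^2}{4v},\,\tfrac{3Mt}{8a_{\max}}\bigr\}$, which dominates the stated first exponent $\tfrac{Mt}{4a_{\max}}\min\bigl\{1,\tfrac{a_{\max}t}{4v}\bigr\}=\min\bigl\{\tfrac{Mt^2}{16v},\,\tfrac{Mt}{4a_{\max}}\bigr\}$ term by term; since the second summand $e^{-Mt^2/(8v)}$ of the stated bound is nonnegative, the lemma follows \emph{a fortiori}, with your single-exponential bound strictly stronger than the claimed two-term one.

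As for the comparison of routes: the two-term shape of the imported inequality (a Bernstein-type minimum plus a separate sub-Gaussian term) suggests the original derivation in \cite{qin2024quantum} controls two regimes or two pieces of the sum separately, in the spirit of the dependent-subexponential machinery this paper invokes elsewhere (\Cref{Concentration inequality of sum of dependent sub-exp sum}, from \cite{tanoue2022concentration}) for sums whose within-group coordinates are correlated. Your approach instead dissolves the within-vector dependence of the counts exactly, by passing to $QM$ i.i.d.\ categorical draws, rather than working around it with a dependence-tolerant concentration lemma. What your route buys is a cleaner and quantitatively sharper conclusion with classical tools and no additive second term; what the dependence-tolerant route buys is applicability in settings where no such exact i.i.d.\ representation is available. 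For the multinomial setting of this lemma, your reduction is the more natural argument, and it fully establishes the statement as used in the paper.
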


\begin{lemma}
\label{property of 3-designs}
Suppose that the exact $3$-designs $\{\widetilde{p}_k,  \vw_k \}_{k=1}^K$ is uniformly distributed, namely, $\widetilde{p}_k = \frac{1}{K}, \forall k$. For any $\vrho, \vrho^\star\in\setX_{\mR}$ in \eqref{The set of the PEPO}, we have
\begin{eqnarray}
    \label{final upper bound 3 terms in the 3 designs}
    \sum_{k = 1}^{K}\<\frac{d^n}{K} \vw_k\vw_k^\dagger, \vrho - \vrho^\star  \>^2\<\frac{d^n}{K} \vw_k\vw_k^\dagger, \vrho^\star  \>= O\bigg(\frac{\|\vrho - \vrho^\star\|_F^2}{K^2}\bigg).
\end{eqnarray}
\end{lemma}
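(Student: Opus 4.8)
The plan is to lift the scalar expression to tensor powers and invoke the exact $3$-design property, so that the weighted sum over $\{\vw_k\}$ collapses to an integral against the symmetric projector, which can then be evaluated combinatorially. First I would set $\mDelta := \vrho - \vrho^\star$ and record that $\mDelta$ is Hermitian with $\trace(\mDelta)=0$, since $\vrho,\vrho^\star\in\setX_{\mR}$ both have unit trace. Factoring out the scalar $(d^n/K)^3$, each summand becomes $(\vw_k^\dagger\mDelta\vw_k)^2(\vw_k^\dagger\vrho^\star\vw_k)$, and the crucial observation is that this equals $\trace\big((\vw_k\vw_k^\dagger)^{\otimes 3}(\mDelta\otimes\mDelta\otimes\vrho^\star)\big)$, because the trace of a tensor product factorizes across the three slots. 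Summing over $k$ and applying the exact $3$-design identity together with \Cref{property of 2-designs} at $s=3$ replaces $\sum_{k}(\vw_k\vw_k^\dagger)^{\otimes 3}$ by $\tfrac{K}{C_{d^n+2}^3}P_{\text{Sym}}$.

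The second step is to evaluate $\trace\big(P_{\text{Sym}}(\mDelta\otimes\mDelta\otimes\vrho^\star)\big)$. I would expand $P_{\text{Sym}}=\tfrac16\sum_{\pi\in S_3}W_\pi$ over the six permutation operators on $(\C^{d^n})^{\otimes 3}$ and apply the cycle–trace identity $\trace(W_\pi(M_1\otimes M_2\otimes M_3))=\prod_{\text{cycles}}\trace(\prod_{i\in\text{cycle}}M_i)$ with $(M_1,M_2,M_3)=(\mDelta,\mDelta,\vrho^\star)$. Here the condition $\trace(\mDelta)=0$ does the heavy lifting: the identity and the two transpositions that isolate a single $\mDelta$ factor all vanish, leaving only the transposition pairing the two $\mDelta$'s, which gives $\trace(\mDelta^2)\trace(\vrho^\star)=\|\mDelta\|_F^2$, and the two $3$-cycles, each giving $\trace(\mDelta^2\vrho^\star)$ after cyclic reordering. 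Hence $\trace\big(P_{\text{Sym}}(\mDelta\otimes\mDelta\otimes\vrho^\star)\big)=\tfrac16\big(\|\mDelta\|_F^2+2\trace(\mDelta^2\vrho^\star)\big)$.

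Finally I would bound the cross term and collect constants. Since $\vrho^\star\succeq\vzero$ with $\trace(\vrho^\star)=1$ forces $\mId-\vrho^\star\succeq\vzero$, and $\mDelta^2\succeq\vzero$, the trace of a product of two PSD matrices is nonnegative, so $\trace(\mDelta^2\vrho^\star)\le\trace(\mDelta^2)=\|\mDelta\|_F^2$. Assembling the pieces, the full sum equals $\tfrac{d^{2n}}{(d^n+1)(d^n+2)K^2}\big(\|\mDelta\|_F^2+2\trace(\mDelta^2\vrho^\star)\big)\le\tfrac{3d^{2n}}{(d^n+1)(d^n+2)K^2}\|\mDelta\|_F^2$, and since $\tfrac{d^{2n}}{(d^n+1)(d^n+2)}\le 1$ this is $O(\|\vrho-\vrho^\star\|_F^2/K^2)$, as claimed.

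I expect the main obstacle to be the bookkeeping in the second step: correctly translating $S_3$ into the six permutation-operator traces, getting the conventions of the cycle–trace identity right, and verifying precisely which terms are annihilated by $\trace(\mDelta)=0$. The tensor-power reformulation and the final constant chase are routine once that combinatorial evaluation is pinned down.
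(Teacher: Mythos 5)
Your proposal is correct and follows essentially the same route as the paper's proof: lift to $\trace\big((\vw_k\vw_k^\dagger)^{\otimes 3}(\mDelta\otimes\mDelta\otimes\vrho^\star)\big)$, apply the exact $3$-design identity to replace the sum by the symmetric projector, evaluate that trace via the $S_3$ permutation expansion (where $\trace(\mDelta)=0$ kills all but three terms, yielding $\tfrac16\|\mDelta\|_F^2+\tfrac13\trace(\mDelta^2\vrho^\star)$), and bound the cross term by $\|\mDelta\|_F^2$ using positive semidefiniteness of $\vrho^\star$. The only cosmetic differences are that you derive the permutation expansion and the PSD trace bound directly, whereas the paper cites them (Gross et al.\ / Mele for the symmetric-subspace trace, Coope for $\trace(AB)\le\trace(A)\trace(B)$); note that both your argument and the paper's implicitly use $\vrho^\star\succeq\vzero$, which is physically natural but not literally part of the definition of $\setX_{\mR}$.
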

\begin{proof}
Leveraging the properties of $3$-designs POVM \cite{ambainis2007quantum,dall2014accessible}, we can rewrite $\sum_{k = 1}^{K}\<\frac{d^n}{K} \vw_k\vw_k^\dagger, \vrho - \vrho^\star  \>^2\<\frac{d^n}{K} \vw_k\vw_k^\dagger, \vrho^\star  \>$ as follows:
\begin{eqnarray}
    \label{3 terms in the 3 designs}
    &\!\!\!\!\!\!\!\!& \sum_{k = 1}^{K}\<\frac{d^n}{K} \vw_k\vw_k^\dagger, \vrho - \vrho^\star  \>^2\<\frac{d^n}{K} \vw_k\vw_k^\dagger, \vrho^\star  \>\nonumber\\
    &\!\!\!\!=\!\!\!\!& \sum_{k = 1}^{K} \frac{d^{3n}}{K^3} \trace((\vw_k\vw_k^\dagger)^{\otimes 3} (\vrho - \vrho^\star)\otimes (\vrho - \vrho^\star) \otimes \vrho^\star  )\nonumber\\
    &\!\!\!\!=\!\!\!\!&\frac{d^{3n}}{K^2}\frac{6}{(d^n+2)(d^n+1)d^n}\trace( \calP_{\text{sym}^{\otimes 3}} (\vrho - \vrho^\star)\otimes (\vrho - \vrho^\star) \otimes \vrho^\star),
\end{eqnarray}
where $\calP_{\text{sym}^{\otimes 3}}$, defined in \cite[Lemma 3]{dall2014accessible}, represents the projector onto the symmetric subspace.
Additionally, based on \cite[Lemma 7]{gross2015partial} and \cite[eq. (322)]{mele2023introduction}, we have
\begin{eqnarray}
    \label{projection of symmetric 3design}
    &\!\!\!\!\!\!\!\!&\trace( \calP_{\text{sym}^{\otimes 3}} (\vrho - \vrho^\star)\otimes (\vrho - \vrho^\star) \otimes \vrho^\star) \nonumber\\
    &\!\!\!\!=\!\!\!\!& \frac{1}{6}\bigg((\trace(\vrho - \vrho^\star ))^2\trace(\vrho^\star) + \trace((\vrho - \vrho^\star)^2 )\trace(\vrho^\star)\nonumber\\
    &\!\!\!\!\!\!\!\!& + 2\trace((\vrho - \vrho^\star)\vrho^\star )\trace(\vrho - \vrho^\star)  + 2\trace((\vrho - \vrho^\star)^2 \vrho^\star)\bigg)\nonumber\\
    &\!\!\!\!=\!\!\!\!& \frac{1}{6}\| \vrho - \vrho^\star\|_F^2 + \frac{1}{3}\trace((\vrho - \vrho^\star)^2 \vrho^\star) .
\end{eqnarray}
Consequently, we arrive at $\sum_{k = 1}^{K}\<\mA_k, \vrho - \vrho^\star \>^2p_{k} = O\big(\frac{\| \vrho - \vrho^\star\|_F^2 + \trace((\vrho - \vrho^\star)^2 \vrho^\star)}{K^2}\big)$. Given that $\vrho - \vrho^\star$ is Hermitian, we can ensure $(\vrho - \vrho^\star)^2 = (\vrho - \vrho^\star)(\vrho - \vrho^\star)^\dagger$ is PSD. Building upon \cite[Theorem 1]{coope1994matrix} using two PSD matrices $(\vrho - \vrho^\star)^2$ and $\vrho^\star$, we further deduce $\trace((\vrho - \vrho^\star)^2 \vrho^\star)\leq \trace((\vrho - \vrho^\star)^2)\trace(\vrho^\star) = \|\vrho - \vrho^\star\|_F^2$. Ultimately, we obtain
\begin{eqnarray}
    \label{3 terms in the 3 designs 1}
    \sum_{k = 1}^{K}\<\frac{d^n}{K} \vw_k\vw_k^\dagger, \vrho - \vrho^\star  \>^2\<\frac{d^n}{K} \vw_k\vw_k^\dagger, \vrho^\star  \> = O\bigg(\frac{\|\vrho - \vrho^\star\|_F^2}{K^2}\bigg).
\end{eqnarray}

\end{proof}


\end{document}